\title{Modular Multiparty Sessions  with Mixed  Choice}
\author{Franco Barbanera
\thanks{Partially supported by 
Project ``National Center for HPC, Big Data e Quantum Computing",  Programma M4C2, Investimento 1.3 – Next Generation EU; and by the 
PIAno di inCEntivi per la RIcerca di Ateneo 2024-2026 UniCT (Linea di Intervento 1).
}
\institute{
Dipartimento di Matematica e Informatica,
Universit\`a di  Catania, Catania, Italy}
\email{franco.barbanera@unict.it}
\and
Mariangiola Dezani-Ciancaglini 
\institute{
Dipartimento di Informatica,
Universit\`a di Torino, Torino, Italy}
\email{dezani@di.unito.it}
}
\newtheorem{definition}{Definition}[section]
\newtheorem{lemma}[definition]{Lemma}
\newtheorem{theorem}[definition]{Theorem}
\newtheorem{example}[definition]{Example}
\newenvironment{proof}{{\em Proof.}}{
}
\newcommand{\set}[1]{\{#1\}}
\newcommand{\rn}[1]{\textsc{\footnotesize [{#1}]}}
\newcommand{\coDef}{::=^{coind}}
\newcommand{\G}{{\sf G}}
\newcommand{\End}{\ensuremath{\mathtt{End}}}
\newcommand{\inact}{\ensuremath{\mathbf{0}}}
\newcommand{\PP}{P}
\newcommand{\Q}{Q}
\newcommand{\D}{Q}
\newcommand{\ptp}[1]{
  \ensuremath{\mathsf{\color{blue}{ #1}}}
}
\newcommand{\msg}[1]{\mathit{\color{BrickRed}{#1}}}
\newcommand{\pa}{\ptp{a}}
\newcommand{\pb}{\ptp{b}}
\newcommand{\pc}{\ptp{c}}
\newcommand{\pd}{\ptp{d}}
\newcommand{\pe}{\ptp{e}}
\newcommand{\px}{\ptp{x}}
\newcommand{\pS}{\ptp{s}}
\newcommand{\pgS}{\ptp{gs}}
\newcommand{\pp}{\ptp{p}}
\newcommand{\q}{\ptp{q}}
\newcommand{\pq}{\ptp{q}}
\newcommand{\pr}{\ptp{r}}
\newcommand{\ps}{\ptp{s}}
\newcommand{\pt}{\ptp{t}}
\newcommand{\pu}{\ptp{u}}
\newcommand{\pv}{\ptp{v}}
\newcommand{\pw}{\ptp{w}}
\newcommand{\PS}{S}
\newcommand{\PQ}{Q}
\newcommand{\PR}{R}
\newcommand{\pplus}{\hspace{2pt}\scalebox{0.7}{\raisebox{1mm}{.....}\hspace{-2.6mm}\raisebox{-1.2mm}{\rotatebox{90}{.....}}}\hspace{6pt}}
\newcommand{\ppplus}{\hspace{2pt}\scalebox{0.7}{\raisebox{1mm}{.....}\hspace{-2.7mm}\raisebox{-1.2mm}{\rotatebox{90}{.....}}}\hspace{6pt}}
\newcommand{\SLTS}[1]{\xrightarrow{#1}}
\newcommand{\Act}[3]{#1 #2 #3}
\newcommand{\qed}{\hfill $\Box$}
\newcommand{\derP}{\vdash^{\mathcal{P}}}
\newcommand{\pP}[2]{\ensuremath{#1\text{\bf [}#2\text{\bf ]}}}
 \newcommand{\parN}{\mathrel{\|}}
 \newcommand{\tyn}[2]{#1\vdash^{\mathcal P} #2}
 \newcommand{\NamedCoRule}[5][]{\IInfer[#1]{#2}{ #3 }{#4}{#5}} 
\newcommand {\IInfer} [5] [] {
  \inferrule*[%
    fraction={===}, 
    left={\textsc{#2}},%
    right={$\begin{array}{l} #5 \end{array}$}, 
    #1
  ]%
   {#3}{#4}}
  \newcommand{\NamedRule}[5][]{\Infer[#1]{#2}{ #3 }{#4}{#5}} 
\newcommand {\Infer} [5] [] {
  \inferrule*[%
    left={\textsc{#2}},%
    right={$\begin{array}{l} #5 \end{array}$}, 
    #1
  ]%
   {#3}{#4}}
   \newcommand{\Nt}{{\mathbb{M}}}
   \newcommand{\plays}[1]{\ensuremath{{\sf prt}(#1)}}
   \newcommand{\Set}[1]{\{ #1 \}}
     \newcommand{\mypath}{\sigma}
  \def\finex{{\unskip\nobreak\hfil
\penalty50\hskip1em\null\nobreak\hfil{\Large $\diamond$}
\parfillskip=0pt\finalhyphendemerits=0\endgraf}}
\newcommand{\coDefGr}{:\coDef}
\newcommand{\concat}[2]{\ensuremath{#1\,{\cdot}\,#2}}
   \newcommand{\ee}{\epsilon}
\newcommand{\RR}{\mathcal{R}}
\newcommand{\LL}[1]{{\mathcal L}(#1)}
\newcommand{\cp}[1]{{\mathsf cap}(#1)}
\newcommand{\pSet}{{\mathbf P}}
\newcommand{\pSetS}{{\mathcal P}}
\newcommand{\cml}{\Lambda}
\newcommand{\st}{\nu}
\newcommand{\eg}{\mathbb{E}^{\mathit{gl}}}
\newcommand{\Cline}[1] {\vspace{0.65mm}
\centerline{$ #1 $}\vspace{0.65mm}
}
\begin{document}

\maketitle

\begin{abstract}
MultiParty Session Types  (MPST)
provide a useful framework for safe concurrent systems.
{\em Mixed  choice 
} (enabling a participant to play 
at the same time the roles of sender and receiver) increases
the expressive power of  MPST 
as well as the difficulty in controlling safety of communications.
Such a control is more viable when modular systems are considered and the power of mixed choice fully exploited only inside loosely coupled modules. We carry over such idea in a type assignment approach to multiparty sessions. Typability for modular sessions entails Subject Reductions, Session Fidelity and Lock Freedom. 
\end{abstract}

{\bf Keywords}: Multiparty Sessions, Modular Systems, Global Types.

\section{Introduction}\label{int}

MultiParty Session Types (MPST) offer a structured approach to the development and formal verification of concurrent and distributed systems~\cite{HYC08,Honda2016}. 
 As in the vast majority of choreographic formalisms,
two distinct but related views of concurrent systems are taken into account:
$(a)$ the {\em global view}, a formal specification via {\em global types} of the overall behaviour 
of a system; $(b)$ the {\em local view}, namely a description, at different  levels 
of abstraction,
of the behaviours of the single components.
A key issue in MPST, and choreographies in general, is the relation between these two views.
Among others, we can refer to  the notion of {\em projection}, used till recently in most of the MPST formalisms. 
Given a (well-formed) global type, the projection operator produces a tuple of local types -- one for each component -- which generalises  binary   session  
types~\cite{HondaK:typdyi,HVK98}.
 Such local types can be looked at as an abstraction of finer grained  descriptions of processes.  
Another approach to the MPST global-local relationship  is  
the one embodied in the so called {\em Simple MultiParty Sessions} (SMPS) formalisms.
Such an approach was first introduced in~\cite{DGD22} and~\cite{BDL22}, 
and further investigated in a bunch of papers, among  which~\cite{BD23,BDGY23,BDL23,BDL24,BBD25a,BBD25}. 
Whereas  the  MPST approach typically considers 
two-layered local views 
-- a layer of processes and a layer of local types --
SMPS are based on   
 single-layered local views, 
where only a fairly abstract notion of process is considered.
In SMPS, which is the general setting of the present paper, systems of communicating processes are represented as {\em multiparty sessions}, i.e. parallel compositions
of named processes (the {\em participants}). Then, by means of type systems, global types are
inferred for such sessions. Typability is such to ensure relevant communication properties 
-- typically Lock Freedom -- for sessions.
Besides the above mentioned ones, it is worth recalling that also other approaches have been investigated,
like the one introduced in~\cite{ScalasY19}, where the global view is only implicitly considered.

A common feature of all the above mentioned approaches, till lately, has been the use of 
communication models where,  before any interaction, a process can clearly be identified
as a sender or a receiver.  
The intrinsic potentiality of nondeterministically choosing among both inputs and outputs inside a
single process interaction (usually referred to in the literature as ``mixed choice'') 
has however recently intrigued session type researchers, both for the binary and the multiparty  cases~\cite{CMV22,PBK23,PBMK24,PY24b,PY24}. 
A thorough investigation of the expressivity of mixed  choice 
in (synchronous) MPST formalisms has been carried on in~\cite{PY24}.
For instance, 
mixed choice enables to implement protocols safely exploiting circular interactions,
as shown in~\cite{PY24} through an example recalled in the present paper (see Example~\ref{ex:Yo}).
This is not possible in usual MPST formalisms where 
typing~\cite{BDGY23} or, equivalently, the projectability condition on global types -- as shown in~\cite{BDL24} -- consists essentially in 
checking the  possibility of sequentialising the interactions in a protocol.
Together with its 
 expressive power, however, mixed  choice 
 brings subtly harmful features, as already exposed decades ago~\cite{GMY80} 
in the setting of Communicating Finite State Machines~\cite{bz83} 
 (an asynchronous formalism closely
related to MPST).
In the present paper we aim at exploiting such expressive power 
in a safe and controlled way using a SMPS setting. 
In order to do that we resort to the notion of {\em modularity}.

Modularity is a fairly general property of complex systems. Any complex system can be decomposed into smaller subsystems that are always going to be interdependent to some extent and independent to some other extent~\cite{Simon1991}.
In fact, in many human activities, from business to biology, as well as to software engineering,
modularisation offers a strategic approach enabling to cope with their complexity.
Modularity in software engineering refers to the design approach that emphasises the separation of  concerns: 
a complex software system is decomposed 
into smaller, loosely coupled modules,
where coupling is the degree of interdependence between the modules.
By means of project modularisation one manages to, among others,
reduce complexity (breaking down a large system into smaller modules makes it more manageable and easier to understand~\cite{ottia}) as well as to improve testing and separation of concerns (SoC), a fundamental principle in software engineering. 
In particular, in modular programming, concerns are separated such that modules, performing logically coherent tasks, do interact through simple and manageable interfaces.

Our proposal is hence to restrict our attention to sessions corresponding to modularised systems.  
A type discipline is then proposed that profits,
as in more rigid MPST formalisms, from a form of ``sequentialisation'' condition.
Such a condition however, instead of being imposed on participants, is imposed on the modules
forming a session, inside which the mixed  choice 
can be freely used (at the cost of a thoroughly check, but limited inside the single modules,
of all the possible interactions among participants). The inter-modules interactions 
are instead  more  controlled,  
so respecting the decoupling of modules 
characterising any sound decomposition of systems.
It is then possible to prove the properties of  Subject Reduction and Session Fidelity 
for typable modular sessions. Moreover, typability also entails the communication
property of Lock Freedom. Typability is shown to be independent from the way a session can be 
modularised as well as from the order in which the typing of the modules is ``sequentialised''.
We propose, as working example, a modular extension of the above mentioned leader election example of~\cite{PY24}.

{\em Overview.} In Section~\ref{tc} we introduce the calculus of multiparty sessions with mixed  choice 
as an extension of the SMPS calculus of~\cite{BDL22} and~\cite{BDGY23}.
Modularisable multiparty sessions are then formally presented in Section~\ref{sec:mms}.
Global types equipped with a coinductive LTS are defined in Section~\ref{sec:ts} in the style of~\cite{BDL24}. Properties of typable modular sessions are proven in Section~\ref{spr},
namely  Subject Reduction, Session Fidelity and Lock Freedom.  In  that  
section we also show that typability  does not depend   
on how a session is modularised or on the particular modules considered during typing. A summing-up section, also discussing related and future works, concludes
the paper.

\section{Multiparty Sessions with Mixed  Choice 
}\label{tc}

We  present now a SMPS synchronous calculus of multiparty sessions with  mixed choice, 
inspired  mainly 
by~\cite{PY24} and partially by~\cite{BDL22}. 
We assume to have the following denumerable base sets: \emph{messages} (ranged
over by $\msg{\lambda},\msg{\lambda'},\dots$); \emph{session participants} (ranged over
by  $\pp,\q,\pr, \ps, \ldots$); \emph{indexes} (ranged over by  $i, j, h, k,\dots$);
\emph{ finite sets of indexes} (ranged over by $I, J, H, K, \dots$). 
We refer to the denumerable set of participant names as $\mathfrak{P}$.

Processes, ranged over by $\PP,\Q,\PR,\PS,\dots$, 
implement the behaviour of participants.
In the following and in later definitions the symbol $\coDef$ does  express  
that the
productions have to be interpreted \emph{coinductively}  and that only \emph{regular} terms are allowed.  
   Then we can adopt in proofs the coinduction style
advocated in \cite{KozenS17} which, without any loss of formal rigour,
 promotes readability and conciseness.

 \begin{definition}[Processes]\label{p} 
 \begin{enumerate}[i)]
 \item
 {\em  Action  prefixes} are defined by\quad $\pi\ ::=\ \pp?\msg{\lambda}\ \mid\  \pp!\msg{\lambda}$. 
  \item
  {\em Processes} are  coinductively  defined by
  
  \Cline{\PP\coDef\inact\ \mid\ \Sigma_{i\in I}\pi_i.\PP_i}
  
  \noindent
where $I\neq\emptyset$ and finite, and 
  $\pi_{ l } 
  = \q? \msg{\lambda}_{ l }
  $, $\pi_j= \q? \msg{\lambda}_j$ (resp. $\pi_{ l }
  = \q! \msg{\lambda}_{ l }
  $, $\pi_j= \q! \msg{\lambda}_j$) imply 
 $\msg{\lambda}_l\neq\msg{\lambda}_j$, for any $ l,j  
 \in I$ such that $ l\neq j $. 
 \end{enumerate}
\end{definition}
 In the above definition, $\Sigma_{i\in I}\pi_i.\PP_i $ stands, as usual, for the summand of
processes $\pi_i.\PP_i$'s.
A $\Sigma_{i\in I}\pi_i.\PP_i $ process represents  the 
nondeterministic choice of one of the actions $\pi_i$, after which the process
continues as $\PP_i$ with $i\in I$. As usual, we assume the summand of processes to be commutative
and associative. 
A prefix $\pi$ can be any input (i.e. of the form $\pp?\msg{\lambda}$)
or output (i.e. of the form $\pp!\msg{\lambda}$) action.  
We use $\inact$ to denote the terminated process. 
For the sake of readability, we omit trailing $\inact$'s in processes.

 We  define  the participants of action prefixes by  $\plays{\pp?\msg{\lambda}} = \plays{\pp!\msg{\lambda}} = \Set{\pp}$.
 Moreover, we define the participants of processes by
 
 \Cline{ \plays{\inact}=\emptyset \qquad\plays{\Sigma_{i\in I}\pi_i.\PP_i }=\bigcup_{i\in I}\plays{\pi_i}\cup\bigcup_{i\in I}\plays{\PP_i}
          }
          
\noindent
By the regularity condition  and the finiteness of indexes,  $\plays{\PP}$ is finite for any $\PP$. 

Processes  correspond to 
 inductively defined terms of the calculus MCMP (Mixed Choice Multiparty Sessions) 
 as defined in~\cite{PY24}, where the $\mu$-operator is used to describe infinite behaviours. 
 Our use of coinductively defined (possibly) infinite terms 
 enables us to get simpler formalisations and proofs with respect to the use of $\mu$-terms.
The latter entail just technicalities that can be dealt with as done in the literature
on session types and MPST~\cite{Honda2016}, where such terms are usually considered.

\smallskip
 Multiparty sessions  are parallel compositions of  participant-process pairs, where all participants are different. 
 
\begin{definition}[Multiparty sessions] 
{\em  Multiparty sessions} are defined by\quad
$\Nt = \pP{\pp_1}{\PP_1} \parN \cdots \parN \pP{\pp_n}{\PP_n}$\\
where $\pp_j \neq \pp_l $ for $1\leq j,l\leq n$ and $j\neq l$.
\end{definition}
\noindent 
 We assume the standard structural congruence  $\equiv$  on
multiparty sessions, stating that
parallel composition is commutative and associative and has neutral
elements $\pP\pp\inact$ for any  fresh $\pp$.
Such a congruence can be inductively defined, as multiparty sessions are. 
We then 
write $\pP\pp\PP\in\Nt$ if 
$\Nt\equiv\pP\pp\PP\parN\Nt'$ and $\PP\neq\inact$. 
Moreover, we define the participants of multiparty sessions by $\plays\Nt=\set{\pp\mid \pP\pp\PP\in\Nt}$.

\smallskip
To define the {\em  synchronous operational semantics} of sessions we use an LTS,
whose transitions are decorated by {\em communication labels}, i.e. expressions of the shape 
$\pp\msg{\lambda}\q$.  In the following we use $\cml$ to range over communication labels.\\
 {\bf Notation}:  We use 
$\pi.\PP \pplus \Q$   as short for either $\pi.\PP + \Q$ or $\pi.\PP$.
 Such a notation enables us to present the following LTS in a compact and yet formal way, since
in our processes -- as in~\cite{PY24} -- we cannot have unprefixed $\inact$'s as summands.

\begin{definition}[LTS for Multiparty Sessions]\label{slts}
The {\em   labelled transition   system   (LTS) 
for multiparty sessions}  
  is the closure under structural congruence of the reduction specified by the unique axiom:

  \Cline{\begin{array}[c]{@{}c@{}}
      \NamedRule{\rn{Comm}}{  
      }{\quad
      \pP\pp  {\q!\msg{\lambda}.\PP \ppplus  \PP'}
            \parN
      \pP \q {\pp?\msg{\lambda}.\Q  \ppplus \Q'}\parN \Nt\\
     \SLTS{\pp \msg{\lambda} \q}\qquad
      \pP\pp{\PP}\parN\pP\q{\Q}\parN\Nt
      \quad
      }{}
    \end{array}
 }
 
\end{definition}
\noindent
Rule \rn{Comm} makes the communication possible: if participant
$\pp$ is enabled to send message $\msg{\lambda}$ to participant $\q$
which, in turn, is enabled to receive it, the message can be exchanged.
This rule is non-deterministic in the choice of messages exchanged.
 The implementation issues raised by such operational semantics are similar
to those for most of the calculi for concurrency and can be dealt with by resorting to suitable
coordination protocols. Such issues are however outside the scope of the present paper. 

 Note that in the above  
semantics there is no difference between the behaviours of inputs and outputs, while usually a sender freely chooses among all its available 
messages.
 In actual communicating systems, messages would also carry values that are abstracted
away in SMPS formalisms for the sake of simplicity. 
The present calculus (as well as its type system) could however be extended to messages with data.

We define {\em traces} as (possibly infinite) sequences of communication labels. Formally, 

\Cline{
\mypath\coDefGr\ee\ \mid\ \concat\cml\mypath}

\noindent
 where 
 $\ee$ is the empty sequence. 
 When $\mypath=\concat{\cml_1}{\concat\ldots{\cml_n}}$ ($n\geq 0)$
we write $\Nt\SLTS{\mypath}\Nt'$ as short for

\Cline{
\Nt\SLTS{\cml_1}\Nt_1\cdots\SLTS{\cml_n}\Nt_{n}  =  \Nt'
}

\noindent
We write $\Nt \SLTS{\sigma}$ 
with the standard meaning. 
 Moreover,  we define  the participants of labels
  by $\plays{\pp\msg{\lambda}\q} = \Set{\pp, \q}$ and 
  the participants of  traces
  -- notation   $\plays{\mypath}$ -- as its obvious extension.   
We also  denote by $\LL\Nt$ 
 the {\em set of all labels the session $\Nt$ can emit,} i.e.  $\LL\Nt=\set{\cml\mid\Nt\SLTS\cml}$.  
 
\medskip
We present now, in our setting,  the leader-election example used in~\cite{PY24} (inspired by \cite{Palamidessi03}, in turn inspired by \cite{Bouge88}) to make evident the
expressive power of mixed choice. 

\begin{example}[Leader election~\cite{PY24}]\label{ex:Yo}
\em
Five participants ($\pa,\pb,\pc,\pd,\pe$) interact with the aim of electing a leader.
Each of them can send to the next participant, in a circular fashion, the message $\msg{leader}$
in order to ask it to become the leader. If such  a  communication succeeds, the sender terminates.
Of course only two of this sort of communications can succeed.
The protocol then allows two, among the remaining three participants, to be able to 
exchange the  
$\msg{leader}$ message. The receiver is hence  
considered  as  
the $\msg{elect}$ed leader
and so it informs the station participant $\pS$ which, in turn, provides to $\msg{del}$ete the participant  which  
remained inactive during the previous interactions.  
\\[-8mm]
\end{example}
\begin{wrapfigure}{l}{0.22\textwidth}
\vspace{-12mm}
\begin{tikzpicture}[node distance=1cm]
        \node (pa) [draw,rectangle] {$\pa$};
        \node (pb) [draw,rectangle, below right of = pa,xshift = 7pt,yshift = 1pt] {$\pb$};
        \node (pc) [draw,rectangle, below left of = pb,xshift = 10pt,yshift = -10pt] {$\pc$};
        \node (pe) [draw,rectangle, below left of = pa,xshift = -7pt,yshift = 1pt] {$\pe$};       
        \node (pd) [draw,rectangle,  below right of = pe,xshift = -10pt,yshift = -10pt] {$\pd$};
        \node (pS) [draw,rectangle, right of = pb, xshift = 0pt,yshift = 0pt] {$\pS$};
        \path  (pa) edge [-stealth,bend right=20,blue] node {} (pe)
                  (pe) edge [-stealth,bend right=20,blue] node {} (pd)
                  (pd) edge [-stealth,bend right=20,blue] node {} (pc)
                  (pc) edge [-stealth,bend right=20,blue] node {} (pb)
                  (pb) edge [-stealth,bend right=20,blue] node {} (pa)
                  (pa) edge [-stealth,red] node {} (pc)
                  (pd) edge [-stealth,red] node {} (pa)
                  (pe) edge [-stealth,red] node {} (pb)
                  (pc) edge [-stealth,red] node {} (pe)
                  (pb) edge [-stealth,red] node {} (pd)
                  ;
 \end{tikzpicture}
 \vspace{-12mm}
\end{wrapfigure} 
The above behaviour is implemented by the following session:

\Cline{
\mathbb{E} = \pP\pa{\PP_{\pa}} \parN  \pP\pb{\PP_{\pb}} \parN \pP\pc{\PP_{\pc}} \parN\pP\pd{\PP_{\pd}} \parN  \pP\pe{\PP_{\pe}}\parN\ \pP{\pS}{\PP_{\pS}}
         }
  
  \noindent       
$
\begin{array}{llcl}
\text{where} & \PP_{\pa} & = & \ \ \, \pe!\msg{leader}\\
             &  &    & + \pb?\msg{leader}.(\pc!\msg{leader} + \pd?\msg{leader}.\pS!\msg{elect})\\
             &  &    &   + \pS?\msg{del} 
\end{array}
$

and (with some abuse of notation) \\
$
\begin{array}{l@{\hspace{37mm}}lcl}
 \qquad\qquad& \PP_{\pS} & = &  \Sigma_{\px\in\Set{\pa,\pb,\pc,\pd,\pe}}\big(\px?\msg{elect}.\Sigma_{\px\in\Set{\pa,\pb,\pc,\pd,\pe}}\px!\msg{del}\big)
\end{array}
$

\noindent
 Processes of participants $\pb$,  $\pc$, $\pd$ and $\pe$ are obtained out of $\PP_\pa$  by applying the 
name substitution \linebreak
$\st = [\pa{\mapsto}\pb,\pb{\mapsto}\pc,\pc{\mapsto}\pd,\pd{\mapsto}\pe,\pe{\mapsto}\pa]$
as follows: 

\Cline{
\PP_{\pb} = \PP_{\pa}\st,\ \PP_{\pc} = \PP_{\pb}\st,\ \PP_{\pd} = \PP_{\pc}\st,\
\PP_{\pe} = \PP_{\pd}\st}

Session $\mathbb{E}$ can be graphically represented by the diagram above on the left, where blue arrows represent the
initial possible exchanges of the message $\msg{leader}$ and the red  ones 
the potential
further exchanges of such message.
The following sequence of reductions is, for instance, the one leading to the election of $\pe$:

\Cline{\quad\qquad\hspace{32mm}\pa\msg{leader}\pe\cdot \pd\msg{leader}\pc\cdot \pc\msg{leader}\pe \cdot 
          \pe\msg{elect}\pS \cdot \pS\msg{del}\pb       \qquad\qquad\qquad\qquad\qquad\quad\text{\finex}
          }

\smallskip

Lock Freedom is a relevant property of concurrent systems.
We define it in our setting following ~\cite{Padovani14}: roughly,  
there is always a continuation enabling a participant to communicate
whenever it is willing to do so.
Lock Freedom entails Deadlock Freedom, since it ensures progress for each participant.

\begin{definition}[Lock Freedom]\label{d:lf}
  A session  $\Nt$ is \emph{lock free} if  $\Nt\SLTS{\mypath}\Nt'$ with $\mypath$ finite and
  $\pp\in\plays{\Nt'}$ imply 
	 $\Nt'\SLTS{\concat{\mypath'}\Lambda}$ 
	 for some $\mypath'$ and $\Lambda$ such that  $\pp\not\in\plays{\mypath'}$ and  $\pp\in\plays\Lambda$.
\end{definition}

The above definition corresponds to the notion of liveness used in \cite{KS10}
and \cite{LNTY17} in a channel-based synchronous communication setting.

\section{Modular Multiparty Sessions}\label{sec:mms}

``With great [expressive] power comes great responsibility'' (Spider-Man's Uncle Ben),
since expressive power is often difficult to control and tame.
Our aim is to provide a type system for multiparty sessions with mixed choice ensuring,
like usual in SMPS, relevant communication properties, together with the guarantee that
the overall behaviour of a session faithfully  respects 
what the type assigned to the session,  if any, describes.   
To do that, instead of restricting the expressive power of mixed choice, we decided
to consider multiparty sessions 
 that can be 
-- as suggested by a well-known software engineering principle -- modularised. 
A module, in our SMPS setting, is formalised in terms of a subsession inside which participants
can freely interact by means of mixed choice. 
The communications among the modules  are  instead controlled by imposing  them 
to be performed only  by particular participants called ``connectors''.  
The processes of the connectors (dubbed connecting processes) must satisfy the restriction that each choice involving a participant not belonging to the module must be between communications with only that participant. Definition~\ref{cpr}, where $\pSet$ is the set of module participants, formalises this condition.

\begin{definition}[Connecting processes]\label{cpr}
Given a set of participants $\pSet$,
 we say that a process $\PP$ is {\em $\pSet$-connecting} if for any subprocess of $\PP$, 
 say  $\Sigma_{i\in I}\pi_i.\PP_i $,  we have that $\plays{\pi_j}\not\in \pSet$ for some $j\in I$ implies $\plays{\pi_i}=\plays{\pi_j}$
 for all $i\in I$. 
 \end{definition}
 A connector is hence a participant of a module (represented by the session $\Nt$ in the  definition
 below) whose process is a connecting process which can interact with the outside of the module.
 \begin{definition}[Connectors]\label{def:conn}
Let $\pP\pp\PP\in\Nt$.
We say that the participant $\pp$ is a {\em connector}  for   
$\Nt$ if
$\PP$ is $\plays\Nt$-connecting and there is $\pq\in\plays{\PP}$ such that $\pq\not\in\plays{\Nt}$.
 \end{definition}

 The notions of subsession and session partition are at the basis of that of modular session.
 We say that $\Nt'$ is a {\em subsession} of $\Nt$ and write $\Nt'\subseteq\Nt$, whenever $\pP\pp\PP\in\Nt'$
implies  $\pP\pp\PP\in\Nt$.
A {\em partition} of a session $\Nt$ is, as expected, a set of subsessions $\Set{\Nt_h}_{h\in H}$ of $\Nt$ such that $\plays{\Nt}=\bigcup_{h\in H}\plays{{\Nt}_h}$ and, for all $h,k\in H$, $\plays{{\Nt}_h}\cap\plays{{\Nt}_k}=\emptyset$.  Therefore, 
$\pP\pp\PP\in\Nt$ implies that there is a unique $k\in H$ 
such that $\pP\pp\PP\in{\Nt}_k$.

\begin{definition}[$\mathcal{P}$-partition]
\label{def:ppartition}
Let $\Set{\Nt_h}_{h\in H}$ be a partition of a session $\Nt$, and let 
 $\mathcal{P}=\Set{\pSet_k}_{k\in K}$ be a partition of a finite subset of the set $\mathfrak{P}$. 
We say that $\Set{\Nt_h}_{h\in H}$ is a $\mathcal{P}$-{\em partition} of $\Nt$ if
$H\subseteq K$ and 
 $\plays{\Nt_h}\subseteq\pSet_h$   for all $h\in H$. 
\end{definition}
It is not difficult  
 to check that, given a session $\Nt$ and a partition $\mathcal{P}=\Set{\pSet_k}_{k\in K}$ such that $\plays\Nt\subseteq \bigcup_{k\in K}\pSet_k$, there is a unique $\mathcal{P}$-{\em partition} of $\Nt$.

\smallskip
A session  is  
modularisable  (with respect to a partition of participants)  when it can be partitioned into subsessions that interact only by means of connectors.

\begin{definition}[$\mathcal{P}$-modularisation]
\label{modularisation}
A $\mathcal{P}$-{\em modularisation of} $\Nt$ is a $\mathcal{P}$-{\em partition} 
$\Set{\Nt_h}_{h\in H}$ of $\Nt$ such that, for all $h\in H$, the following conditions hold
\begin{enumerate}[i)]
\item
\label{modularisation1}
$\pP\pp\PP\in\Nt_h$ implies  that either $\pp$ is a connector of $\Nt_h$ or, 
 for each $\pq\in \plays{\PP}$, $\pq\in\plays{\Nt}$ implies $\pq\in \plays{\Nt_h}$; 
\item
\label{modularisation2}
 for each connector $\pP\pp\PP\in\Nt_h$ and each $\pq\in\plays{\PP}\setminus\plays{\Nt_h}$:
 
\Cline{\text{if }\pq\in\plays{\Nt_k}, 
\text{ then  }\pq\text{ is a connector for }\Nt_k}
\end{enumerate}
In such a case, we say that $\Nt$ is $\mathcal{P}$-modularisable.
\end{definition}

It is worth noticing that we impose no limit on the number of connectors present in a
module, as well as on the number of external connectors a connector can interact with
 (see the following example). 
 Besides, a session $\Nt$ is always $\Set{\plays{\Nt}}$-modularisable.
 For example  $\Nt=\pP\pp{\pq!\msg{\lambda}+\pr!\msg{\lambda'}}$ is $\Set{\Set{\pp}}$-modularisable  and its unique module does not contain any connector.
Also, given a partition $\mathcal{P}$ and a session $\Nt$, there exists a unique $\mathcal{P}$-modularisation of  $\Nt$, if any. 
\begin{example}[Modules with multiple connectors]\label{ex:mconn}
{\em
 Let us consider
$\Nt = 
\Nt_1 \parN \Nt_2$,
\\ 
where
$\Nt_1 =
\pP\pu{\pp?\msg{\lambda}.\pq!\msg{\lambda}\ +\ \pq!\msg{\lambda}.\pp?\msg{\lambda}}
\parN
\pP\pp{\pu!\msg{\lambda}.(\pr!\msg{\lambda}_1+\pr?\msg{\lambda}_2)} 
\parN 
\pP\pq{\pu?\msg{\lambda}.(\ps!\msg{\lambda}_1+\ps?\msg{\lambda}_2)}
$\\ 
and
$\quad\Nt_2 = 
\pP\pv{\pr!\msg{\lambda}.\ps?\msg{\lambda}\ +\ \ps?\msg{\lambda}.\pr!\msg{\lambda}}
\parN
\pP\pr{\pv?\msg{\lambda}.(\pp?\msg{\lambda}_1+\pp!\msg{\lambda}_2)} 
\parN 
\pP\ps{\pv!\msg{\lambda}.(\pq?\msg{\lambda}_1+\pq!\msg{\lambda}_2)}
$. 
\\ 
This session is $\Set{\Set{\pu,\pp,\pq},\Set{\pv,\pr,\ps}}$-modularisable and it 
has multiple connectors.  
In fact, $\pp$ and $\pq$ are the connectors for the module $\Set{\pu,\pp,\pq}$,
whereas $\pr$ and $\ps$ are the connectors for the module $\Set{\pv,\pr,\ps}$. 
\finex
}
\end{example}

\medskip


 In actual programming, refining a modularisation enables to enhance parameters like 
scalability, maintenance, reusability and many more.
In the present setting, it also allows for simpler typings. 
Any modularisation refinement corresponds to a partition refinement. 

\begin{definition}[Refinement] A partition $\pSetS$ refines a partition $\pSetS'$ (notation $\pSetS \sqsubseteq\pSetS'$) if $\pSetS=\set{\pSet_h}_{h\in H}$, $\pSetS'=\set{\pSet'_k}_{k\in K}$ and for all $k\in K$ there is $H_k\subseteq H$ such that $\pSet_k'=\bigcup_{h\in H_k}\pSet_h$.
\end{definition}

As intuitively evident, it is possible to formally show that coarser partitions maintain 
modularisability.

\begin{lemma}
If $\Nt$ is $\pSetS$-modularisable and $\pSetS$ refines $\pSetS'$, then $\Nt$ is $\pSetS'$-modularisable.
\end{lemma}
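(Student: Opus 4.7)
The plan is to construct a $\pSetS'$-modularisation of $\Nt$ by coarsening the given $\pSetS$-modularisation along the refinement. Write $\pSetS=\set{\pSet_h}_{h\in H}$ and $\pSetS'=\set{\pSet'_k}_{k\in K}$, with the family $\set{H_k}_{k\in K}$ partitioning $H$ so that $\pSet'_k=\bigcup_{h\in H_k}\pSet_h$, and let $\set{\Nt_h}_{h\in H^0}$ with $H^0\subseteq H$ denote the given $\pSetS$-modularisation of $\Nt$. For every $k\in K^0:=\set{k\in K\mid H^0\cap H_k\neq\emptyset}$ I define $\Nt'_k$ as the parallel composition of all $\Nt_h$ with $h\in H^0\cap H_k$, and show that $\set{\Nt'_k}_{k\in K^0}$ is the desired $\pSetS'$-modularisation.

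Checking that $\set{\Nt'_k}_{k\in K^0}$ is a $\pSetS'$-partition (Definition~\ref{def:ppartition}) is essentially bookkeeping: disjointness of the $\Nt'_k$'s comes from disjointness of the $\Nt_h$'s together with the fact that each $h\in H^0$ belongs to a unique $H_k$; the inclusion $\plays{\Nt'_k}\subseteq\pSet'_k$ follows by unioning $\plays{\Nt_h}\subseteq\pSet_h$; and $K^0\subseteq K$ holds by construction. The main tool for the remaining clauses of Definition~\ref{modularisation} is a monotonicity fact, immediate from Definition~\ref{cpr}: if $\PP$ is $\pSet$-connecting and $\pSet\subseteq\pSet''$, then $\PP$ is also $\pSet''$-connecting, since a subprocess witnessing a violation at $\pSet''$ also witnesses one at $\pSet$.

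For clause (i), given $\pP\pp\PP\in\Nt'_k$ I locate the unique $h\in H^0\cap H_k$ with $\pP\pp\PP\in\Nt_h$ and case-split on the original clause (i) for $\Nt_h$. If $\pp$ was already a connector of $\Nt_h$, monotonicity lifts connectingness of $\PP$ to $\plays{\Nt'_k}$, and then either $\PP$ still has a participant outside $\plays{\Nt'_k}$ (so $\pp$ remains a connector of $\Nt'_k$) or it does not (so the alternative of clause (i) holds for $\Nt'_k$). If the alternative of clause (i) already held for $\Nt_h$, it persists for $\Nt'_k$ because $\plays{\Nt_h}\subseteq\plays{\Nt'_k}$.

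Clause (ii) is the most delicate step and the place I expect the real difficulty. Taking a connector $\pP\pp\PP\in\Nt'_k$ and $\pq\in\plays\PP\setminus\plays{\Nt'_k}$ with $\pq\in\plays{\Nt'_{k'}}$, I would first show that $\pp$ must be a connector of its original $\Nt_h$: otherwise the alternative of clause (i) for $\Nt_h$ would force $\pq\in\plays{\Nt_h}\subseteq\plays{\Nt'_k}$, contradicting $\pq\not\in\plays{\Nt'_k}$. Applying the original clause (ii) to $\pp\in\Nt_h$ yields $h'\in H^0\cap H_{k'}$ with $\pq\in\plays{\Nt_{h'}}$ and $\pq$ a connector of $\Nt_{h'}$, and monotonicity makes $\Q$, the process of $\pq$, $\plays{\Nt'_{k'}}$-connecting. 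The hard part is to exhibit a participant of $\Q$ outside $\plays{\Nt'_{k'}}$, since the external witness for $\pq$ being a connector of $\Nt_{h'}$ may now lie in some $\plays{\Nt_{h''}}$ with $h''\in H^0\cap H_{k'}$ and thus inside $\plays{\Nt'_{k'}}$. I plan to resolve this via the reciprocal nature of the cross-module interaction between $\pp$ and $\pq$: in a well-formed modular session the communication witnessed by $\pq\in\plays\PP$ is mirrored by $\pp\in\plays\Q$, and since $\pp\in\plays{\Nt'_k}\setminus\plays{\Nt'_{k'}}$ this supplies the required external witness.
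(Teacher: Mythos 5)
Your decomposition of the coarse modules as unions of the fine ones, the monotonicity of $\pSet$-connectingness, the check of the $\pSetS'$-partition conditions, the whole treatment of clause (i), and the first steps of clause (ii) (that $\pp$ must be a connector of its fine module $\Nt_h$, that the original clause (ii) then makes $\pq$ a connector of $\Nt_{h'}$, and that monotonicity makes the process $\Q$ of $\pq$ $\plays{\Nt'_{k'}}$-connecting) are all correct, and indeed more detailed than the paper's own proof, which merely asserts that conditions (i) and (ii) of Definition~\ref{modularisation} transfer from $\pSetS$ to $\pSetS'$. The genuine gap is exactly where you place it, and the repair you sketch does not work: the ``reciprocity'' claim that $\pq\in\plays{\PP}$ is mirrored by $\pp\in\plays{\Q}$ is not a consequence of Definitions~\ref{p}, \ref{cpr}, \ref{def:conn} or \ref{modularisation}. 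Nothing in the calculus forces a participant that is named by another to name it back, and modularisability adds no such symmetry, so $\pp$ cannot serve as the external witness needed to make $\pq$ a connector of $\Nt'_{k'}$.

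To see that the missing witness can genuinely be absent, take $\Nt=\pP{\pp}{\pq!\msg{\lambda}}\parN\pP{\pq}{\pr!\msg{\lambda'}}\parN\pP{\pr}{\pq?\msg{\lambda'}}$ with $\pSetS=\set{\set\pp,\set\pq,\set\pr}$ and $\pSetS'=\set{\set\pp,\set{\pq,\pr}}$. Each singleton module has its participant as a connector (every process is trivially connecting and names someone outside its module) and clause (ii) holds, so $\Nt$ is $\pSetS$-modularisable. After merging $\set\pq$ and $\set\pr$, the process of $\pq$ names only $\pr$, which is now internal, so $\pq$ names nobody outside $\plays{\pP{\pq}{\pr!\msg{\lambda'}}\parN\pP{\pr}{\pq?\msg{\lambda'}}}$ and is not a connector of the merged module in the sense of Definition~\ref{def:conn}, while the connector $\pp$ of the module $\pP{\pp}{\pq!\msg{\lambda}}$ still names $\pq$ from outside --- precisely the configuration your last step must exclude and cannot. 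Read literally, clause (ii) of Definition~\ref{modularisation} fails here for $\pSetS'$, so the step cannot be closed as stated: it goes through only if the demand on $\pq$ in clause (ii) is weakened to ``the process of $\pq$ is $\plays{\Nt_k}$-connecting'' (which your monotonicity fact already delivers), or under an extra hypothesis that enforces the mirroring you invoke. This is worth raising explicitly, since the paper's one-line proof does not engage with it.
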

\begin{proof} It is  not difficult  
to verify that if conditions (\ref{modularisation1}) and (\ref{modularisation2}) of Definition~\ref{modularisation} hold for $\pSetS$, then they hold also for $\pSetS'$. \qed
\end{proof}

\noindent
From  the  previous lemma,  being $\Nt$ $\Set{\plays{\Nt}}$-modularisable, it immediately follows that 
$\Nt$ is $\pSetS$-modularisable for all $\pSetS$ such that $\plays\Nt\subseteq\pSet$ for some $\pSet\in\pSetS$.

 We can build  a minimal  
refined partition, with respect to $\sqsubseteq$,  among those  modularising a session. 
We start with $\pSetS_0=\set{\set\pp\mid \pp\in\plays\Nt}$
 and we iteratively build $\pSetS_{i+1}$ by replacing in $\pSetS_i$ the two sets $\pSet,\pSet'$ with the unique set $\pSet\cup\pSet'$ if there is $\pP\pp\PP\in\Nt$ such that $\pp\in\pSet$, $\pq\in\pSet'\cap\plays\PP$ and either $\PP$ is not $\pSet$-connecting or $\pq$ is not a connector for the subsession of $\Nt$ whose set of participants is $\pSet'$.  It is  possible 
 to  verify 
 that $\Nt$ is $\pSetS$-modularisable, where $\pSetS$ is the fixed point of this procedure. 
 We show now such a partition to be also the minimum among the partitions modularising 
 a session.
\begin{lemma}
The  minimal   
partition 
modularising a session is unique,  i.e. a minimum.  
\end{lemma}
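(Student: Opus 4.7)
The plan is to prove existence of a minimum modularising partition of $\Nt$ (whence uniqueness of the minimal element is immediate) by establishing that the family of partitions modularising $\Nt$ is closed under the standard partition meet $\sqcap$: the blocks of $\pSetS\sqcap\pSetS'$ are the non-empty intersections $\pSet_h\cap\pSet'_k$ with $\pSet_h\in\pSetS$ and $\pSet'_k\in\pSetS'$. Since the partition lattice on the finite set $\plays\Nt$ is finite and the trivial partition $\set{\plays\Nt}$ always modularises $\Nt$, closure under $\sqcap$ will immediately yield that the meet of all modularising partitions is itself modularising and is the $\sqsubseteq$-minimum, which therefore coincides with the unique minimal modularising partition.

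The core step is to show that if $\pSetS$ and $\pSetS'$ both modularise $\Nt$, then so does $\pSetS\sqcap\pSetS'$. I would verify the two conditions of Definition~\ref{modularisation} for each meet-block, by case analysis on whether each $\pP\pp\PP$ in the block makes $\pp$ a connector or internal in the subsession corresponding to $\pSetS$ and in the one corresponding to $\pSetS'$. The pure case ``internal in both'' immediately gives $\plays\PP\cap\plays\Nt\subseteq\pSet_h\cap\pSet'_k$, so $\pp$ is internal in the meet. The pure case ``connector in both'' rests on the easily verified fact, from Definition~\ref{cpr}, that $\PP$ being simultaneously $X$-connecting and $Y$-connecting implies $\PP$ is $(X\cap Y)$-connecting: any summation branch $\pi_j$ with $\plays{\pi_j}$ lying outside $X\cap Y$ must lie outside one of $X$ or $Y$, triggering the uniformity of branches from that side; moreover any external witness participant for the $\pSetS$-connector status (a $\pq\in\plays\PP$ with $\pq\notin\pSet_h$) remains external for the meet (since $\pq\notin\pSet_h\cap\pSet'_k$). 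The two mixed cases combine internal-ness in one partition (which forces $\plays\PP\cap\plays\Nt$ into that block) with connector-ness in the other, yielding both the required connecting property of $\PP$ for the meet-block and an external participant lying in $\plays\Nt$.

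Condition~(ii) of Definition~\ref{modularisation} is verified in parallel fashion: given a meet-connector $\pp$ and $\pq\in\plays\PP$ external to $\pp$'s meet-block, $\pq$ must lie outside $\pp$'s block in at least one of $\pSetS,\pSetS'$; the corresponding instance of condition~(ii) then gives that $\pq$ is a connector for its own block, and the same intersection argument promotes $\Q$'s connectingness to $\pq$'s meet-block, while an external witness for $\pq$ persists in the meet.

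The main obstacle I anticipate is the case-analysis bookkeeping for condition~(i), especially in the mixed cases, where one has to combine the internal-ness restriction on one side with the connectingness on the other to simultaneously exhibit both the connecting property of $\PP$ for the meet-block and an external witness participant. A subtle point to check is that $\pSet$-connectingness quantifies over all summand branches -- including those possibly targeting participants outside $\plays\Nt$ -- while the internal-ness condition only restricts participants in $\plays\Nt$; this is handled by noting that, in the relevant configurations, any branch whose target falls outside the internal-side block is already covered by the connecting assumption on the other side.
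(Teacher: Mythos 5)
Your proposal is correct, and it takes a somewhat different, more structural route than the paper. The paper argues by contradiction: assuming two distinct minimal modularising partitions $\pSetS$ and $\pSetS'$, it picks participants $\pp,\pq$ lying in different blocks of $\pSetS$ but in a common block of $\pSetS'$, splits that block by intersecting it with the relevant blocks of $\pSetS$, and asserts that the resulting partition still modularises the session, contradicting minimality of $\pSetS'$ (existence of a minimal partition is obtained separately, via the iterative merging construction described just before the lemma). You instead prove the underlying fact in full generality -- closure of the modularising partitions under common refinement -- and then obtain existence and uniqueness of the minimum in one stroke from finiteness of the partition lattice on $\plays\Nt$ and the fact that $\set{\plays\Nt}$ always modularises. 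This buys something: the verification you sketch (that a process which is both $X$-connecting and $Y$-connecting is $(X\cap Y)$-connecting, together with the connector/internal case analysis for conditions (i) and (ii) of Definition~\ref{modularisation}) is precisely the content that the paper's ``clearly a contradiction'' leaves implicit, and your route does not presuppose the separate construction of a minimal partition. Two points to tighten when writing it out: in checking condition (ii) for the meet, before invoking condition (ii) of $\pSetS$ (or of $\pSetS'$) for the meet-connector $\pp$ you must first argue that $\pp$ is a connector of its block in that partition -- which follows because otherwise condition (i) there would force $\pq$ into that block, contradicting the choice of $\pq$ outside it; and the external witness required by Definition~\ref{def:conn} need not lie in $\plays\Nt$ (it only has to lie outside the module), so the phrase ``an external participant lying in $\plays\Nt$'' should be weakened -- the witness inherited from the connector side of the case analysis already lies outside the meet block, which is all that is needed.
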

\begin{proof}
Assume toward a contradiction that there are two different  minimal (w.r.t $\sqsubseteq$) 
partitions $\pSetS=\set{\pSet_h}_{h\in H}$ and $\pSetS'=\set{\pSet'_k}_{k\in K}$ for modularising a session. This implies that there are $h_1, h_2\in H$ and $k_0\in K$ such that $\pp\in\pSet_{h_1}$,  $\pq\in\pSet_{h_2}$ and $\set{\pp,\pq}\subseteq\pSet'_{k_0}$. Then also the partition obtained from $\pSetS'$ by replacing the set $\pSet'_{k_0}$ with the two sets $\pSet'_{k_0}\cap\pSet_{h_1}$ and 
$\pSet'_{k_0}\cap\pSet_{h_2}$ modularises the same session. This is clearly a contradiction. \\[-6mm]
\begin{flushright}\qed\end{flushright}
\end{proof}

 It is crucial that $\pSetS$-modularisation is preserved by reduction. 

\begin{lemma}
Let $\Nt$ be $\pSetS$-modularisable and let $\Nt\SLTS{\cml}\Nt'$. 
Then also $\Nt'$ is $\mathcal{P}$-modularisable.
\end{lemma}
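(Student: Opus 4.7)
The plan is to take the given $\pSetS$-modularisation $\set{\Nt_h}_{h\in H}$ of $\Nt$ and transport it along the single use of \rn{Comm} underlying $\Nt\SLTS{\cml}\Nt'$. Unfolding that rule, there are $P_\pp,P_\pq,Q_\pp',Q_\pq',\Nt''$ with $\Nt\equiv\pP\pp{\pq!\msg{\lambda}.P_\pp\ppplus Q_\pp'}\parN\pP\pq{\pp?\msg{\lambda}.P_\pq\ppplus Q_\pq'}\parN\Nt''$ and $\Nt'\equiv\pP\pp{P_\pp}\parN\pP\pq{P_\pq}\parN\Nt''$. Let $h_\pp,h_\pq\in H$ be the indices of the modules hosting $\pp$ and $\pq$; the candidate partition $\set{\Nt_h'}_{h\in H}$ is obtained by updating (at most) those two components and dropping any component whose new continuation is $\inact$. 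Since only $\pp$ and $\pq$ are modified and neither switches module, $\set{\Nt_h'}_{h\in H}$ is a $\pSetS$-partition of $\Nt'$, with the key invariant $\plays{\Nt_h'}=\plays{\Nt_h}\cap\plays{\Nt'}$.

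The workhorse lemma I would extract is that every subprocess of a $\pSet$-connecting process is $\pSet$-connecting, which is a routine coinduction on Definition~\ref{cpr}. With it I would verify the two clauses of Definition~\ref{modularisation} by splitting on whether the communication is intra-module ($h_\pp=h_\pq$) or inter-module ($h_\pp\neq h_\pq$). In the inter-module case, condition (\ref{modularisation1}) for $\Nt$ already forces $\pp$ to be a connector of $\Nt_{h_\pp}$, so by Definition~\ref{cpr} the outermost choice of $Q_\pp$ containing $\pq!\msg{\lambda}$ must have every summand targeting $\pq$; hence $P_\pp$ is a subprocess of $Q_\pp$ and inherits the $\plays{\Nt_{h_\pp}}$-connecting property via the workhorse. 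The same goes for $P_\pq$. Unchanged components $\pP{\pp'}{\PP'}\in\Nt_h'$ with $\pp'\notin\set{\pp,\pq}$ are verbatim, so both clauses are preserved whenever $\plays{\Nt_h'}=\plays{\Nt_h}$; condition (\ref{modularisation2}) propagates because external partners are either unchanged or equal to $\pp$ or $\pq$, whose continuations keep connector status again by the workhorse.

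The main obstacle is the sub-case in which $P_\pp$ or $P_\pq$ is $\inact$, so that $\plays{\Nt_h'}\subsetneq\plays{\Nt_h}$: a $\plays{\Nt_h}$-connecting process is not automatically $\plays{\Nt_h'}$-connecting, because a summand that used to target the now-disappeared $\pp$ or $\pq$ becomes ``external'' with respect to $\plays{\Nt_h'}$. The saving observation is that the alternative in (\ref{modularisation1}) quantifies only over $\plays{\Nt'}$, and exactly the same participants drop out of $\plays{\Nt'}$; formally
\[
\plays{\PP'}\cap(\plays{\Nt'}\setminus\plays{\Nt_h'})\;\subseteq\;\plays{\PP'}\cap(\plays{\Nt}\setminus\plays{\Nt_h})
\]
for each unchanged $\pP{\pp'}{\PP'}\in\Nt_h'$, so whichever branch of (\ref{modularisation1}) held in $\Nt$ transfers to $\Nt'$, possibly switching from the ``connector'' branch to the ``no external alive'' branch. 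Carefully enumerating the positions of the vanished $\pp,\pq$ inside the choice trees of the surviving components, and checking that condition~(\ref{modularisation2}) still pairs each residual connector with a residual connector, is where I expect the argument to require the most care.
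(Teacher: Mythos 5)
Your strategy --- transport the given modularisation along the unique redex and check that clauses (\ref{modularisation1}) and (\ref{modularisation2}) of Definition~\ref{modularisation} are invariant --- is exactly the paper's (whose proof is a two-sentence sketch of this argument). The genuine gap is the sub-case you flag at the end and then defer: your ``saving observation'' only rescues a process all of whose live partners lie inside its block, but says nothing about a surviving participant that still has a live external partner while its (unchanged) process mixes the now-terminated participant with another surviving member of its module. Take $\mathcal P=\Set{\set{\pp,\pr,\ps},\set{\pt,\pu}}$ and
\[
\Nt\;=\;\pP\pp{\pr!\msg{\lambda}}\parN\pP\pr{\pp?\msg{\lambda}.\ps!\msg{\lambda_2}}\parN\pP\ps{\pt!\msg{\lambda_0}.(\pp?\msg{\lambda_1}+\pr?\msg{\lambda_2})}\parN\pP\pt{\ps?\msg{\lambda_0}.\pu!\msg{\lambda_3}}\parN\pP\pu{\pt?\msg{\lambda_3}}
\]
which is $\mathcal P$-modularisable with connectors $\ps$ and $\pt$. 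After $\Nt\SLTS{\pp\msg{\lambda}\pr}\Nt'$ the participant $\pp$ disappears, so the block of $\ps$ shrinks to $\set{\pr,\ps}$; the subprocess $\pp?\msg{\lambda_1}+\pr?\msg{\lambda_2}$ then makes $\ps$'s process fail to be $\set{\pr,\ps}$-connecting (Definition~\ref{cpr}), yet $\ps$ still has the live external partner $\pt$, so neither branch of clause (\ref{modularisation1}) holds for $\ps$ --- the ``switch to the no-external-alive branch'' you rely on is not available. Your claim that clause (\ref{modularisation2}) propagates because the reduced continuations ``keep connector status'' also fails: in $\pP\pp{\pq!\msg{\lambda}.\pq?\msg{\lambda'}}\parN\pP\pq{\pp?\msg{\lambda}.\pr!\msg{\lambda''}}\parN\pP\pr{\pq?\msg{\lambda''}}$ with $\mathcal P=\Set{\set{\pp},\set{\pq,\pr}}$, after the first communication $\pq$'s continuation names nobody outside its block, so $\pq$ is no longer a connector in the sense of Definition~\ref{def:conn}, while $\pp$ still is one and still names $\pq$.

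So the step you expected to ``require the most care'' is not a routine enumeration: with connector/connecting status measured against the current participant sets $\plays{\Nt_h}$ (which shrink under reduction), and with clause (\ref{modularisation2}) demanding full connector status of the partner, the invariance that both you and the paper assert does not go through as stated. To close the argument one essentially has to read connecting/connector status relative to the fixed blocks $\pSet_h$ of $\mathcal P$ --- these are untouched by reduction, so your workhorse observation (every subprocess of a connecting process is connecting) then suffices --- and to weaken what clause (\ref{modularisation2}) asks of the external partner (that its process be connecting, rather than that it be a connector). As written, your proposal reproduces the paper's sketch in more detail but leaves the same hole, and the hole sits exactly where you predicted the most care would be needed.
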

\begin{proof} Notice that conditions (\ref{modularisation1}) and (\ref{modularisation2}) of Definition~\ref{modularisation} are invariant by reduction. In fact a participant $\pp$ which is a connector in $\Nt$ is either a connector in $\Nt'$ too or it has a process whose participants all belong to the subsession containing $\pp$ in the $\mathcal{P}$-modulation of $\Nt'$.
Therefore a $\mathcal{P}$-modularisation of $\Nt$ is also a $\mathcal{P}$-modularisation of $\Nt'$. 
\qed
\end{proof}

\smallskip
 We can notice that,  
if $\pSetS$ is the  minimal
partition for modularising $\Nt$ and $\Nt\SLTS{\cml}\Nt'$, in general $\pSetS$ is not the  minimal
partition for modularising $\Nt'$. In fact $\plays{\Nt'}$ can be a proper subset of $\plays{\Nt}$ and a process in $\Nt$ which is not a connector can reduce to a process which is a connector in $\Nt'$. 
For example the minimal
partition of $\Nt\equiv \pP\pp{\pq!\msg{\lambda}.\pr!\msg{\lambda} + \pr?\msg{\lambda'}.(\pq!\msg{\lambda}_1+\pq?\msg{\lambda}_2)}\parN\pP\pq{\pp?\msg{\lambda}+\pp?\msg{\lambda}_1+\pp!\msg{\lambda}_2}\parN\pP\pr{\pp?\msg{\lambda}+\pp!\msg{\lambda'}}$ is $\set{\set{\pp,\pq,\pr}}$, since $\pp$ is not a connector. But $\Nt
\SLTS{\pr\msg{\lambda'}\pp} \pP\pp{\pq!\msg{\lambda}_1+\pq?\msg{\lambda}_2}\parN\pP\pq{\pp?\msg{\lambda}+\pp?\msg{\lambda}_1+\pp!\msg{\lambda}_2}$ and the  minimal
partition of this last session is $\set{\set\pp,\set\pq}$, since $\pp$ and $\pq$ are connectors and $\pr$ disappeared. 

\begin{example}[Modular election]\label{ex:gelection}
{\em
We consider three ``local'' elections, all managed like in the Example \ref{ex:Yo}.
The names of the participants of the three local election are like the ones
in the Example \ref{ex:Yo}, but indexed with indexes in $\Set{1,2,3}$.
We also consider a further ``global election'' with participants $\pw_1$, $\pw_2$ and $\pw_3$ and  global  station  
$\pgS$. 
Such an election follows a protocol similar to that of the local elections (but simpler, since only three
participants do compete for leadership).
It can be seen as an election among the 
$\ptp{w}$inners of the local elections.
In the present example  the ``local leaders'', once they are elected, are informed whether they have been elected also ``global leader'' or not.
The above sketched global behaviour is implemented by the following session
 $\eg$ 
made of four subsessions: three local elections 
 ($ {\mathbb{E}}_1,  {\mathbb{E}}_2, {\mathbb{E}}_3$) 
and 
  one global election   ${\mathbb{G}}$ among the ``local'' $\pw$inners.

The processes of the participants,
  are fairly similar to the ones in the Example \ref{ex:Yo} to which a part is added implementing
  the communication to the local leaders of whether they are also global leader or not.

\begin{figure}[h!]
\begin{center}
$
\begin{array}{c@{\hspace{-8mm}}c@{\hspace{-8mm}}c}
&
\scalebox{1}{
\begin{tikzpicture}[node distance=1cm]
        \node (pa1) [draw,rectangle] {$\pa_1$};
        \node (pb1) [draw,rectangle, below right of = pa1,xshift = 7pt,yshift = 1pt] {$\pb_1$};
        \node (pc1) [draw,rectangle, below left of = pb1,xshift = 10pt,yshift = -10pt] {$\pc_1$};
        \node (pe1) [draw,rectangle, below left of = pa1,xshift = -7pt,yshift = 1pt] {$\pe_1$};       
        \node (pd1) [draw,rectangle,  below right of = pe1,xshift = -10pt,yshift = -10pt] {$\pd_1$};
        \node (pS1) [draw,rectangle, below of = pa1, xshift = 0pt,yshift = -46pt] {$\ptp{\pS}_1$};
        \path  (pa1) edge [-stealth,bend right=20,blue] node {} (pe1)
                  (pe1) edge [-stealth,bend right=20,blue] node {} (pd1)
                  (pd1) edge [-stealth,bend right=20,blue] node {} (pc1)
                  (pc1) edge [-stealth,bend right=20,blue] node {} (pb1)
                  (pb1) edge [-stealth,bend right=20,blue] node {} (pa1)
                  (pa1) edge [-stealth,red] node {} (pc1)
                  (pd1) edge [-stealth,red] node {} (pa1)
                  (pe1) edge [-stealth,red] node {} (pb1)
                  (pc1) edge [-stealth,red] node {} (pe1)
                  (pb1) edge [-stealth,red] node {} (pd1)
                  ;
 \end{tikzpicture}
 }
 \\[-5mm]
\scalebox{1}{
\begin{tikzpicture}[node distance=1cm]
        \node (pa3) [draw,rectangle] {$\pa_3$};
        \node (pb3) [draw,rectangle, below right of = pa3,xshift = 7pt,yshift = 1pt] {$\pb_3$};
        \node (pc3) [draw,rectangle, below left of = pb3,xshift = 10pt,yshift = -10pt] {$\pc_3$};
        \node (pe3) [draw,rectangle, below left of = pa3,xshift = -7pt,yshift = 1pt] {$\pe_3$};       
        \node (pd3) [draw,rectangle,  below right of = pe3,xshift = -10pt,yshift = -10pt] {$\pd_3$};
        \node (pS3) [draw,rectangle, right of = pb3, xshift = 0pt,yshift = 0pt] {$\ptp{\pS}_3$};
        \path  (pa3) edge [-stealth,bend right=20,blue] node {} (pe3)
                  (pe3) edge [-stealth,bend right=20,blue] node {} (pd3)
                  (pd3) edge [-stealth,bend right=20,blue] node {} (pc3)
                  (pc3) edge [-stealth,bend right=20,blue] node {} (pb3)
                  (pb3) edge [-stealth,bend right=20,blue] node {} (pa3)
                  (pa3) edge [-stealth,red] node {} (pc3)
                  (pd3) edge [-stealth,red] node {} (pa3)
                  (pe3) edge [-stealth,red] node {} (pb3)
                  (pc3) edge [-stealth,red] node {} (pe3)
                  (pb3) edge [-stealth,red] node {} (pd3)
                  ;
 \end{tikzpicture}
 }
 &
 &
 \scalebox{1}{
\begin{tikzpicture}[node distance=1cm]
        \node (pa2) [draw,rectangle] {$\pa_2$};
        \node (pb2) [draw,rectangle, below right of = pa2,xshift = 7pt,yshift = 1pt] {$\pb_2$};
        \node (pc2) [draw,rectangle, below left of = pb2,xshift = 10pt,yshift = -10pt] {$\pc_2$};
        \node (pe2) [draw,rectangle, below left of = pa2,xshift = -7pt,yshift = 1pt] {$\pe_2$};       
        \node (pd2) [draw,rectangle,  below right of = pe2,xshift = -10pt,yshift = -10pt] {$\pd_2$};
        \node (pS2) [draw,rectangle, left of = pe2, xshift = 0pt,yshift = 0pt] {$\ptp{\pS}_2$};
        \path  (pa2) edge [-stealth,bend right=20,blue] node {} (pe2)
                  (pe2) edge [-stealth,bend right=20,blue] node {} (pd2)
                  (pd2) edge [-stealth,bend right=20,blue] node {} (pc2)
                  (pc2) edge [-stealth,bend right=20,blue] node {} (pb2)
                  (pb2) edge [-stealth,bend right=20,blue] node {} (pa2)
                  (pa2) edge [-stealth,red] node {} (pc2)
                  (pd2) edge [-stealth,red] node {} (pa2)
                  (pe2) edge [-stealth,red] node {} (pb2)
                  (pc2) edge [-stealth,red] node {} (pe2)
                  (pb2) edge [-stealth,red] node {} (pd2)
                  ;
 \end{tikzpicture}
 } 
\\[-8mm]
&
\scalebox{1}{
\begin{tikzpicture}[node distance=1cm]
        \node (pw1) [draw,rectangle] {$\pw_1$};
        \node (pw2) [draw,rectangle, below right of = pw1,xshift = 0pt,yshift = -7pt] {$\pw_2$};
        \node (pw3) [draw,rectangle, below left of = pw1,xshift = 0pt,yshift = -7pt] {$\pw_3$};
        \node (gS) [draw,rectangle, above of = pw1,xshift = 0pt,yshift =0pt] {$\pgS$};
        \path  (pw3) edge [-stealth,bend right=35,blue] node {} (pw2)
                  (pw2) edge [-stealth,bend right=35,blue] node {} (pw1)
                  (pw1) edge [-stealth,bend right=35,blue]  node {} (pw3)
                  ;
 \end{tikzpicture}
 }
\end{array}
 $
 \end{center}
\caption{The three local elections and the global one of Example \ref{ex:gelection}.}\label{fig:glsession} 
\end{figure}
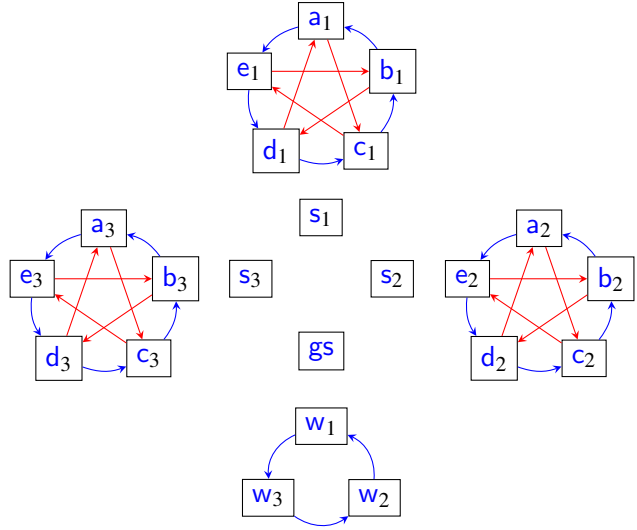

 \Cline{
\begin{array}{lcl}
\eg & = & {\mathbb{E}}_1 \parN {\mathbb{E}}_2 \parN  {\mathbb{E}}_3 \parN {\mathbb{G}}
\end{array}
}


\noindent 
 where, for  $1\leq i\leq 3$, 
 
  \Cline{
\begin{array}{lcl}
{\mathbb{E}}_i  
& = & \pP{\pa_i}{\PP_{\pa_i}} \parN  \pP{\pb_i}{\PP_{\pb_i}} \parN \pP{\pc_i}{\PP_{\pc_i}} \parN \pP{\pd_i}{\PP_{\pd_i}} \parN  \pP{\pe_i}{\PP_{\pe_i}}\ \parN\ \pP{\pS_i}{\PP_{\pS_i}}
\end{array}
}

\noindent
$
\begin{array}{llcl}
 \qquad
 \text{with} & \PP_{\pa_i} & = & \ \ \ \ \pe_i!\msg{leader}\\
             &  &    & +\ {\pb_i}?\msg{leader}.\Big(\pc_i!\msg{leader} + \pd_i?\msg{leader}.\pS_i!\msg{elect}.\big(\pS_i?\msg{gleader}\ +\ \pS_i?\msg{no}\big)\Big)\\
             &  &    &   +\ \pS_i?\msg{del} 
\end{array}
$

\noindent
$
\begin{array}{llcl}
\qquad
\text{and} & \PP_{\pS_i} & = &  \Sigma_{\px\in\Set{\pa_i,\pb_i,\pc_i,\pd_i,\pe_i}}\px?\msg{elect}.\big(\pgS?\msg{gleader}.\px!\msg{gleader}.\D_i\ +\ \pgS?\msg{no}.\px!\msg{no}.\D_i\big)\\[1mm]
  & & & \text{with }  \D_i  =\Sigma_{\px\in\Set{\pa_i,\pb_i,\pc_i,\pd_i,\pe_i}}\px!\msg{del}\\
& \multicolumn{3}{l}
{\text{and with  processes of participants $\pb_i$,  $\pc_i$, $\pd_i$ and $\pe_i$ obtained out of $\PP_{\pa_i}$  by}}\\
& \multicolumn{3}{l}
{\text{applying the 
name substitution}}\\
& \multicolumn{3}{l}
{\qquad\qquad\text{$\st_i 
= [\pa_i{\mapsto}\pb_i,\pb_i{\mapsto}\pc_i,\pc_i{\mapsto}\pd_i,\pd_i{\mapsto}\pe_i,\pe_i{\mapsto}\pa_i]$
 as in Example \ref{ex:Yo}}
}
\end{array}
$

\noindent
\\
\noindent
and 

 \Cline{
\begin{array}{lcl}
 \mathbb{G} & = & \pP{\pw_1}{\PP_{\pw_1}} \parN  \pP{\pw_2}{\PP_{\pw_2}} \parN \pP{\pw_3}{\PP_{\pw3}}\ \parN\ \pP{\pgS}{\PP_{\pgS}}
\end{array}
}

\noindent
$
\begin{array}{llcl}
\qquad\text{with} & \PP_{\pw_i} & = & \ \ \ \ \pw_{i+2}!\msg{leader}\\
              & &    & +\ {\pw_{i+1}}?\msg{leader}.\pgS!\msg{gleader}\\
              & &    &   +\ \pgS?\msg{del} 
\end{array}
$

\noindent
$
\begin{array}{llcl}
\qquad \text{and} & \PP_{\pgS} & = &  \Sigma_{i\in\Set{1,2,3}}\pw_i?\msg{gleader}.\pS_i!\msg{gleader}.\pS_{i+1}!\msg{no}.\pS_{i+2}!\msg{no}.\big(  \Sigma_{i\in\Set{1,2,3}}\pw_i!\msg{del}\big)\\
 & \multicolumn{3}{l}
{\text{where all the indexes above have to be considered modulo 3,  plus 1.}}
\end{array}
$
\medskip

\noindent
It is not difficult to check that $\eg$ is $\pSetS$-modularisable for
$\pSetS=\Set{\plays{{\mathbb{E}}_1},\plays{{\mathbb{E}}_2},\plays{{{\mathbb{E}}}_3},\plays{{\mathbb{G}}}}$,
where $\pS_1,\pS_2,\pS_3$ and $\pgS$ are the connectors
for, respectively, the modules 
 ${\mathbb{E}}_1$, ${\mathbb{E}}_2$, ${\mathbb{E}}_3$ and
 ${\mathbb{G}}$. 
}
\finex
\end{example}

\section{A Type System for Modular Sessions}\label{sec:ts}

Global types are used to represent the overall behaviour of multiparty sessions.
Here we use a notion of global type similar to the one in  MPST but, following SMPS,
we define them coinductively, as possibly infinite regular terms.

\begin{definition}[Global types]\label{def:gt}
{\em Global types} are  coinductively  defined by:

\Cline{\G\coDef\End\ \mid\ \Sigma_{i\in I}\cml_i.\G_i
         }

\noindent
where $I\neq\emptyset$ and finite, and for any $j, l\in I$ such that $j\neq l$, 
$\cml_i= \pp\msg{\lambda}_i\q$ and $\cml_j= \pp\msg{\lambda}_j\q$ imply $\msg{\lambda}_j\neq\msg{\lambda}_l$.  
\end{definition}

We define  the participants of global  types   
 by $\plays{\End}=\emptyset$ and $\plays{\Sigma_{i\in I}\cml_i.\G_i}=\bigcup_{i\in I}\plays{\cml_i}\cup\bigcup_{i\in I}\plays{\G_i}$.  By the regularity condition, $\plays{\G}$ is finite for any $\G$.
As usual, trailing $\End$'s will be omitted.

\smallskip
As mentioned in the Introduction, 
in standard SMPS, typing rules take into account single interactions
between pairs of participants.
 In our mixed-choice setting, that approach would allow to type  only 
sessions whose independent parts were 
intrinsically sequential, so ruling out protocols
like the leader election and the modular election.
 We hence consider a typing rule 
 where all the interactions involving the participants of
 a single module  $\widehat\Nt$   (so including also the communications of the connectors of 
  $\widehat\Nt$  with other modules) are taken into account. 
 Such a set of interactions is formalised through the notion of coherent set of 
 communication labels. 
 A module $\widehat\Nt$ is therefore indirectly represented in the rule in terms of 
 its corresponding coherent set 
 and referred to, when necessary, as the {\em witness} of such a set.

\begin{definition}[Coherent set of communication labels]\label{csrl}
A set of labels $\set{\cml_i}_{i\in I}$ is  $\mathcal{P}$-{\em coherent for $\Nt$}  if  $\Nt$ is $\mathcal{P}$-modularisable and 
there exists an element $\widehat\Nt$ of the (unique) $\mathcal{P}$-modularisation of $\Nt$ such that

\Cline{
\set{\cml_i}_{i\in I} = \Set{\cml\in\LL\Nt  \mid   \plays{\cml}\cap \plays{\widehat\Nt}\neq\emptyset}
} 

\noindent
The $\widehat\Nt$ above is called {\em witness} for the $\mathcal{P}$-coherence of  
$\set{\cml_i}_{i\in I}$.
\end{definition}

\begin{example}[Witness]\label{ex:awit}
{\em
Let us consider 
$\Nt'_1 =
\pP\pu{\pq!\msg{\lambda}}
\parN
\pP\pp{\pr!\msg{\lambda}_1+\pr?\msg{\lambda}_2} 
\parN 
\pP\pq{\pu?\msg{\lambda}.(\ps!\msg{\lambda}_1+\ps?\msg{\lambda}_2)}
$
and
$\Nt_2 = 
\pP\pv{\pr!\msg{\lambda}.\ps?\msg{\lambda}\ +\ \ps?\msg{\lambda}.\pr!\msg{\lambda}}
\parN
\pP\pr{\pv?\msg{\lambda}.(\pp?\msg{\lambda}_1+\pp!\msg{\lambda}_2)} 
\parN 
\pP\ps{\pv!\msg{\lambda}.(\pq?\msg{\lambda}_1+\pq!\msg{\lambda}_2)}
$. 
The session 
 $\Nt'_1 \parN \Nt_2$ can be obtained by reducing the session of 
 Example \ref{ex:mconn}. It 
is still $\Set{\Set{\pu,\pp,\pq},\Set{\pv,\pr,\ps}}$-modularisable
and   $\Nt'_1$ is the witness for the $\Set{\Set{\pu,\pp,\pq},\Set{\pv,\pr,\ps}}$-coherent 
set of labels $\Set{\pu \msg{\lambda}\pq,\, \pp\msg{\lambda_1}\pr,\, \pr\msg{\lambda_2}\pp}.$
\finex
}
\end{example}

 Here and in the following, the double line indicates that the rules are interpreted coinductively.  

\begin{definition}[Type system]\label{def:type-system}
The type system $\derP$ is defined by the following axiom and rule, where sessions are considered modulo structural congruence:  

\smallskip

\Cline{\NamedRule{\rn{End}}{}{ \End\derP\pP\pp\inact  
}{}{}}

\medskip

\Cline{
	\NamedCoRule{\rn{TComm}}
{\mbox{$\begin{array}{c} \Nt\SLTS{\cml_i}\Nt_i\qquad  {\G_i}\derP{\Nt_i}  
\qquad\forall i\in I\neq\emptyset\\
\set{\cml_i}_{i\in I}\text{ is $\mathcal{P}$-coherent for } \Nt \qquad\plays{\Sigma_{i\in I}\cml_i.\G_i}=\plays\Nt \\[3mm]
\end{array}$}}
{{\Sigma_{i\in I}\cml_i.\G_i}\derP{\Nt}}
{ }{}
	}
\end{definition}
It is not difficult to check that  we can derive the global type 
$\pu\msg{\lambda}\q.\pp\msg{\lambda}\pu.\G_1+\pp\msg{\lambda}\pu.\pu\msg{\lambda}\q.\G_1$, where $\G_1=\pv\msg{\lambda}\pr.\ps\msg{\lambda}\pv.\G_2+\ps\msg{\lambda}\pv.\pv\msg{\lambda}\pr.\G_2$, $\G_2=\pp\msg{\lambda}_1\pr.\G_3+ \pr\msg{\lambda}_2\pp.\G_3$, and $\G_3=\q\msg{\lambda}_1\ps + \ps\msg{\lambda}_2\pq$,
for the session of Example~\ref{ex:mconn}.

 The condition ``$\set{\cml_i}_{i\in I}$ is $\mathcal{P}$-coherent for $\Nt$'' is essential to get Subject Reduction. 
 In fact, by allowing any subset of $\LL\Nt $ as $\set{\cml_i}_{i\in I}$ in $\rn{TComm}$,
we could derive  $\tyn{\pp\msg{\lambda'}\pr.\pp\msg{\lambda}\pq}{\Nt_0}$ for 

\Cline{\Nt_0\equiv \pP\pp{\pq!\msg{\lambda}+\pr!\msg{\lambda'}.\pq!\msg{\lambda}}\parN\pP\pq{\pp?\msg{\lambda}}\parN\pP\pr{\pp?\msg{\lambda'}}} 

\noindent
regardless of $\mathcal{P}$.
However, we would also have $\Nt_0\SLTS{\pp\msg{\lambda}\pq} \pP\pr{\pp?\msg{\lambda'}}$,
with $\pP\pr{\pp?\msg{\lambda'}}$ untypable.
The above example also shows that, in order to get  Subject Reduction 
it is necessary that, at any moment, a connector can 
interact with one other connector only. Let us  assume 
to relax Definition \ref{cpr}
as follows
\begin{quote}\em
Given a set of participants $\pSet$,
 we say that a process $\PP$ is {\em $\pSet$-connecting} if for any subprocess of $\PP$, 
 say $\Sigma_{i\in I}\pi_i.\PP_i $, we have that $\plays{\pi_j}\not\in \pSet$ for some $j\in I$ implies $\plays{\pi_i}\not\in \pSet$ for all $i\in I$. 
\end{quote}
This would imply $\Nt_0$ above to be $\Set{\Set{\pp},\Set{\pq},\Set{\pr}}$-modularisable and
all the participants would turn out to be connectors in their respective modules. Hence
$\Nt_0$ would be $\vdash^{\Set{\Set{\pp},\Set{\pq},\Set{\pr}}}$ typable, 
whereas $\pP\pr{\pp?\msg{\lambda'}}$ would not.

  The condition ``$\plays{\Sigma_{i\in I}\cml_i.\G_i}=\plays\Nt$'' 
is  necessary 
to get Lock Freedom. For example without this condition we could derive 
${\G}\vdash^{\Set{\Set{\pp},\Set{\pq},\Set{\pr}}}{\pP\pp\PP\parN\pP\pq\PQ\parN\pP\pr{\ps!\lambda}}$ with $\PP=\pq!\msg{\lambda}.\PP$, $\PQ=\pp?\msg{\lambda}.\PQ$
and $\G=\pp\msg{\lambda}\pq.\G$.  For what concerns the condition $I\neq \emptyset$, let us consider one of our previous examples,
namely $\Nt=\pP\pp{\pq!\msg{\lambda}+\pr!\msg{\lambda'}}$. 
Such a session can be uniquely modularised with itself as possible module and it is not lock free.
In fact it is not typable, since its unique coherent set is empty. 

\smallskip

 It can be proved that   $\mathcal{P}$-coherence  
 of a label set for a session is preserved by reducing the session 
with a label not belonging to the   $\mathcal{P}$-coherent set, see Lemma~\ref{lem:coherpres}.

The type system is decidable, since processes and global types are regular, and there is only a finite number of partitions for the participants of a session.  More interesting, typability of a session does depend  on  the choice neither of the $\mathcal P$-coherent sets nor of the partition $\mathcal P$  
 (see, respectively, 
  Theorems~\ref{tpp} and~\ref{ttpp}).

\medskip
 We define the semantics of global types via a coinductive formal system, as done  first 
 in~\cite{BDL24}. 
Such a coinductive definition enables to take into account global types containing 
branches, where some communications can be indefinitely procrastinated,  see Example~\ref{ue}. 
In order to do that, it 
is handy to associate to  a global type the set
of communication  labels which might (not necessarily) decorate 
  its 
 transitions. We dub them capabilities of the  global  type.

\begin{definition}[Capabilities]\label{cp}
Capabilities of global types are defined by: 

\Cline{\cp{\End}=\emptyset\quad\cp{\Sigma_{i\in I}\cml_i.\G_i}=\set{\cml_i}_{i\in I}\cup\bigcup_{i\in I}\cp{\G_i}}   
\end{definition}

\begin{definition}[LTS for global types]\label{ltsgt}
  The {\em labelled transition system (LTS) for global types} is
  specified by the following axiom and  
  rule:
  
  \Cline{
    \begin{array}{c}
      \NamedRule{\rn{E-comm}}{}{ \Sigma_{i\in I}\cml_i.\G_i \SLTS{\cml_j} \G_j}\quad \raisebox{4.5mm}{$j\in I$}\\[2mm]
     \NamedCoRule{\rn{I-comm}}{  \G_i \SLTS{\cml}\G'_i\quad\cml\in\cp{\G_i}
           \quad
      \plays{\cml}\cap\plays{\cml_i}=\emptyset
      \quad
            \forall i\in I 
      }{
     \Sigma_{i\in I}\cml_i.\G_i 
     \SLTS{\cml}
      \Sigma_{i\in I}\cml_i.\G'_i 
      }{}
      {}
    \end{array}}
\end{definition}

 Axiom \rn{E-comm} formalises the fact that, in a session exposing the behaviour
$\Sigma_{i\in I}\cml_i.\G_i $,
the communication labelled $\cml_j$ for any $j\in I$ can happen. 
When such a communication is actually performed, the resulting 
session will expose the behaviour $\G_j$.

 Rule \rn{I-comm}  enables  
to describe independent and concurrent communications, even if global types
apparently look like sequential descriptions of sessions' overall behaviours.
In fact, behaviours involving
participants ready to interact with each other
uniformly in all branches of a global type, can do  that 
if neither of them is involved
in an interaction appearing at top level in the global type.  The condition $\cml\in\cp{\G_i}$ in Rule \rn{I-comm} is needed  because such a rule is coinductive.
 In fact, without such a condition,  
 we could get 
 the following infinite derivation for $\G\SLTS{\pp\lambda\pq}\G$ with 
$\G=\pr\lambda'\ps.\G$:

\Cline{
\mathcal{D}= \prooftree
                       \mathcal{D}
                       \Justifies
                       \G\SLTS{\pp\lambda\pq}\G
                       \using \rn{I-comm}
                       \endprooftree
            } 
 
 \begin{example}[Use of coinduction in Rule \rn{I-comm}]\label{ue}      {\em     
            As shown in~\cite{BDL24} the coinductive formulation of Rule \rn{I-comm} allows to get $\G\SLTS{\pp\lambda\pq}\G'$, where 
$\G=\pr\lambda_1\ps.\G+\pr\lambda_2\ps.\pp\lambda\pq$  
and 
$\G'=\pr\lambda_1\ps.\G'+\pr\lambda_2\ps$. 
The inductive definition of this rule does not allow the shown transition. \finex}
\end{example}


\begin{example}[Typing the modular election session]
\begin{figure}[h!]
\prooftree
    \prooftree
              \prooftree
                    \prooftree
     \prooftree
         \prooftree
                  \prooftree
                            \prooftree
                                             \prooftree
                                                      \prooftree
                                                      \End \vdash \pP\pb\inact \parN \pP\pS\inact
                                                      \Justifies
                                              \widetilde{\G}'''_3  = \pS\msg{del}\pb 
                                  \vdash    
                                  \raisebox{-6mm}{
                    \scalebox{0.7}{ 
\begin{tikzpicture}[node distance=1cm]
        \node (pa) [draw, fill=lightgray,rectangle] {$\pa$};
        \node (pb) [draw,rectangle, below right of = pa,xshift = 7pt,yshift = 1pt] {$\pb$};
        \node (pc) [draw,rectangle, fill=lightgray, below left of = pb,xshift = 10pt,yshift = -10pt] {$\pc$};
        \node (pe) [draw, fill=lightgray, rectangle, below left of = pa,xshift = -7pt,yshift = 1pt] {$\pe$};       
        \node (pd) [draw,rectangle, fill=lightgray, below right of = pe,xshift = -10pt,yshift = -10pt] {$\pd$};
         \node (pS) [draw,rectangle, fill=green, below right of = pb,xshift = 10pt,yshift = 10pt] {$\pS$};
        \path  (pS) edge [-stealth] node {} (pb)
                  ;
 \end{tikzpicture}
 } }            
                                  \raisebox{-1mm}{\,\text{\Huge $\cdot\!\!\cdot\!\!\cdot$}}                \endprooftree
                                              \Justifies
                                               \widetilde{\G}''_3 = \pS\msg{gl}\pe.\widetilde{\G}'''_3 
                                               \vdash 
                                               \raisebox{-6mm}{
                    \scalebox{0.7}{ 
\begin{tikzpicture}[node distance=1cm]
        \node (pa) [draw, fill=lightgray,rectangle] {$\pa$};
        \node (pb) [draw,rectangle, below right of = pa,xshift = 7pt,yshift = 1pt] {$\pb$};
        \node (pc) [draw,rectangle, fill=lightgray, below left of = pb,xshift = 10pt,yshift = -10pt] {$\pc$};
        \node (pe) [draw, fill=green, rectangle, below left of = pa,xshift = -7pt,yshift = 1pt] {$\pe$};       
        \node (pd) [draw,rectangle, fill=lightgray, below right of = pe,xshift = -10pt,yshift = -10pt] {$\pd$};
         \node (pS) [draw,rectangle, fill=pink, below right of = pb,xshift = 10pt,yshift = 10pt] {$\pS$};
        \path  (pS) edge [-stealth,bend left=15] node {} (pe)
                  ;
 \end{tikzpicture}
 } }
                                              \raisebox{-1mm}{\,\text{\Huge $\cdot\!\!\cdot\!\!\cdot$}}
                                              \endprooftree
                           \Justifies
                  \widetilde{\G}'_3  =  \pgS\msg{del}\pw_2.\widetilde{\G}''_3 
                  \vdash
                           \raisebox{-5mm}{\scalebox{0.7}{ 
\begin{tikzpicture}[node distance=1cm]  
        \node (pw1) [draw,rectangle, fill=lightgray] {$\pw_1$};
        \node (pw2) [draw,rectangle, below right of = pw1,xshift = 0pt,yshift = -7pt] {$\pw_2$};
        \node (pw3) [draw,rectangle, fill=lightgray, below left of = pw1,xshift = 0pt,yshift = -7pt] {$\pw_3$};
        \node (gS) [draw,rectangle, fill=lime, above left of = pw3,xshift = 0pt,yshift =0pt] {$\pgS$};
        \path (gS) edge [-stealth]  node {} (pw2)
                  ;
 \end{tikzpicture}
 } }     
                  \raisebox{-1mm}{\,\text{\Huge $\cdot\!\!\cdot\!\!\cdot$}}
                           \endprooftree
                  \Justifies
                            \prooftree
                                     \prooftree
                          \widetilde{\G}_3 =  \pgS\msg{no}\pS_2.\widetilde{\G}'_3 
                          \vdash
                           \raisebox{-5mm}{\scalebox{0.7}{ 
\begin{tikzpicture}[node distance=1cm]  
        \node (pw1) [draw,rectangle, fill=lightgray] {$\pw_1$};
        \node (pw2) [draw,rectangle, below right of = pw1,xshift = 0pt,yshift = -7pt] {$\pw_2$};
        \node (pw3) [draw,rectangle, fill=lightgray, below left of = pw1,xshift = 0pt,yshift = -7pt] {$\pw_3$};
        \node (gS) [draw,rectangle, fill=green, above left of = pw3,xshift = 0pt,yshift =0pt] {$\pgS$};
        \node (pS1) [draw,rectangle, fill=Yellow, left of = gS,xshift = 0pt,yshift =0pt] {$\ptp{\pS}_{2}$};
        \path (gS) edge [-stealth]  node {} (pS1)
                  ;
 \end{tikzpicture}
 } }
                          \raisebox{-1mm}{\,\text{\Huge $\cdot\!\!\cdot\!\!\cdot$}}
                                     \Justifies
                           \pgS\msg{no}\pS_1.\widetilde{\G}_3 
                          \vdash
                          \raisebox{-5mm}{\scalebox{0.7}{ 
\begin{tikzpicture}[node distance=1cm]  
        \node (pw1) [draw,rectangle, fill=lightgray] {$\pw_1$};
        \node (pw2) [draw,rectangle, below right of = pw1,xshift = 0pt,yshift = -7pt] {$\pw_2$};
        \node (pw3) [draw,rectangle, fill=lightgray, below left of = pw1,xshift = 0pt,yshift = -7pt] {$\pw_3$};
        \node (gS) [draw,rectangle, fill=pink, above left of = pw3,xshift = 0pt,yshift =0pt] {$\pgS$};
        \node (pS1) [draw,rectangle, fill=Yellow, left of = gS,xshift = 0pt,yshift =0pt] {$\ptp{\pS}_1$};
        \path (gS) edge [-stealth]  node {} (pS1)
                  ;
 \end{tikzpicture}
 } }
                          \raisebox{-1mm}{\,\text{\Huge $\cdot\!\!\cdot\!\!\cdot$}}
                                      \endprooftree  
                             \Justifies
                                         \prooftree
                                              \hspace{32mm}\text{\Huge $\vdots$} \raisebox{1mm}{\footnotesize (local elections $1$ and $2$)}
                                               \Justifies
                  \G'_3 = \pgS\msg{gl}\pS.\G''_3
                  \vdash 
                  \raisebox{-5mm}{\scalebox{0.7}{ 
\begin{tikzpicture}[node distance=1cm]  
        \node (pw1) [draw,rectangle, fill=lightgray] {$\pw_1$};
        \node (pw2) [draw,rectangle, below right of = pw1,xshift = 0pt,yshift = -7pt] {$\pw_2$};
        \node (pw3) [draw,rectangle, fill=lightgray, below left of = pw1,xshift = 0pt,yshift = -7pt] {$\pw_3$};
        \node (gS) [draw,rectangle, fill=Yellow, above left of = pw3,xshift = 0pt,yshift =0pt] {$\pgS$};
        \node (pS3) [draw,rectangle, fill=Yellow, left of = gS,xshift = 0pt,yshift =0pt] {$\ptp{\pS}$};
        \path (gS) edge [-stealth]  node {} (pS3)
                  ;
 \end{tikzpicture}
 } }
                  \raisebox{-1mm}{\,\text{\Huge $\cdot\!\!\cdot\!\!\cdot$}}
                                                \endprooftree
                              \endprooftree
                  \endprooftree
         \hspace{2mm}
         \raisebox{-6mm}{\qquad
         } \quad
         \Justifies
\G_3 = \pw_3\msg{gl}\pgS.\G'_3 
\vdash 
\raisebox{-5mm}{\scalebox{0.7}{
\begin{tikzpicture}[node distance=1cm]  
        \node (pw1) [draw,rectangle, fill=lightgray] {$\pw_1$};
        \node (pw2) [draw,rectangle, below right of = pw1,xshift = 0pt,yshift = -7pt] {$\pw_2$};
        \node (pw3) [draw,rectangle, fill=Yellow, below left of = pw1,xshift = 0pt,yshift = -7pt] {$\pw_3$};
        \node (gS) [draw,rectangle, above left of = pw3,xshift = 0pt,yshift =0pt] {$\pgS$};
        \path (pw3) edge [-stealth,bend left=35]  node {} (gS)
                  ;
 \end{tikzpicture}
 } }
\raisebox{-1mm}{\,\text{\Huge $\cdot\!\!\cdot\!\!\cdot$}}
         \endprooftree  
\hspace{2mm}
\raisebox{-6mm}{$\mathcal{D}_2$}
\hspace{6mm}
\raisebox{-6mm}{$\mathcal{D}_1$}
\Justifies
\G'= \Sigma_{i\in\Set{1,2,3}}\pw_{i+1}\msg{l}\pw_i.\G_i
\vdash 
\raisebox{-6mm}{\scalebox{0.7}{
\begin{tikzpicture}[node distance=1cm]
        \node (pw1) [draw,rectangle] {$\pw_1$};
        \node (pw2) [draw,rectangle, below right of = pw1,xshift = 0pt,yshift = -7pt] {$\pw_2$};
        \node (pw3) [draw,rectangle, below left of = pw1,xshift = 0pt,yshift = -7pt] {$\pw_3$};
        \node (gS) [draw,rectangle, above left of = pw3,xshift = 0pt,yshift =0pt] {$\pgS$};
        \path  (pw3) edge [-stealth,bend right=35] node {} (pw2)
                  (pw2) edge [-stealth,bend right=35] node {} (pw1)
                  (pw1) edge [-stealth,bend right=35]  node {} (pw3)
                  ;
 \end{tikzpicture}
 } }
\raisebox{-1mm}{\,\text{\Huge $\cdot\!\!\cdot\!\!\cdot$}} 
\endprooftree
                    \Justifies
               \G''_{\pc\pe} = \pe\msg{e}\ptp{s}.\G'  \vdash
                    \raisebox{-6mm}{
                    \scalebox{0.7}{ 
\begin{tikzpicture}[node distance=1cm]
        \node (pa) [draw, fill=lightgray,rectangle] {$\pa$};
        \node (pb) [draw,rectangle, below right of = pa,xshift = 7pt,yshift = 1pt] {$\pb$};
        \node (pc) [draw,rectangle, fill=lightgray, below left of = pb,xshift = 10pt,yshift = -10pt] {$\pc$};
        \node (pe) [draw, fill=pink, rectangle, below left of = pa,xshift = -7pt,yshift = 1pt] {$\pe$};       
        \node (pd) [draw,rectangle, fill=lightgray, below right of = pe,xshift = -10pt,yshift = -10pt] {$\pd$};
         \node (pS) [draw,rectangle,  below right of = pb,xshift = 10pt,yshift = 10pt] {$\pS$};
        \path  (pe) edge [-stealth,bend right=15] node {} (pS)
                  ;
 \end{tikzpicture}
 }
 } \ \raisebox{-1mm}{\,\text{\Huge $\cdot\!\!\cdot\!\!\cdot$}}
                    \endprooftree
              \Justifies
               \G'_{\pd\pc} = \pc\msg{l}\pe.\G''_{\pc\pe} \vdash
              \raisebox{-6mm}{
              \scalebox{0.7}{ 
\begin{tikzpicture}[node distance=1cm]
        \node (pa) [draw, fill=lightgray,rectangle] {$\pa$};
        \node (pb) [draw,rectangle, below right of = pa,xshift = 7pt,yshift = 1pt] {$\pb$};
        \node (pc) [draw,rectangle, fill=Yellow, below left of = pb,xshift = 10pt,yshift = -10pt] {$\pc$};
        \node (pe) [draw, fill=Yellow, rectangle, below left of = pa,xshift = -7pt,yshift = 1pt] {$\pe$};       
        \node (pd) [draw,rectangle, fill=lightgray, below right of = pe,xshift = -10pt,yshift = -10pt] {$\pd$};
         \node (pS) [draw,rectangle,  below right of = pb,xshift = 10pt,yshift = 10pt] {$\pS$};
        \path  (pc) edge [-stealth] node {} (pe)
                  ;
 \end{tikzpicture}
 }
 }  \ \raisebox{-1mm}{\,\text{\Huge $\cdot\!\!\cdot\!\!\cdot$}}
              \endprooftree
              \quad
              \raisebox{-6mm}{$\mathcal{D}'_{\pc\pb}$} \quad
                  \Justifies
   \G_{\pa\pe} = \pd\msg{l}\pc.\G'_{\pd\pc} + \pc\msg{l}\pb.\G'_{\pc\pb} \vdash
   \raisebox{-6mm}{
\scalebox{0.7}{ 
\begin{tikzpicture}[node distance=1cm]
        \node (pa) [draw, fill=lightgray,rectangle] {$\pa$};
        \node (pb) [draw,rectangle, below right of = pa,xshift = 7pt,yshift = 1pt] {$\pb$};
        \node (pc) [draw,rectangle, below left of = pb,xshift = 10pt,yshift = -10pt] {$\pc$};
        \node (pe) [draw,fill=Yellow,rectangle, below left of = pa,xshift = -7pt,yshift = 1pt] {$\pe$};       
        \node (pd) [draw,rectangle,  below right of = pe,xshift = -10pt,yshift = -10pt] {$\pd$};
         \node (pS) [draw,rectangle,  below right of = pb,xshift = 10pt,yshift = 10pt] {$\pS$};
        \path  (pd) edge [-stealth,bend right=20] node {} (pc)
                  (pc) edge [-stealth,bend right=20] node {} (pb)
                  ;
 \end{tikzpicture}
 }
 }  \ \raisebox{-1mm}{\,\text{\Huge $\cdot\!\!\cdot\!\!\cdot$}}
    \endprooftree
    \hspace{0mm}
    \raisebox{-5mm}{$
    \mathcal{D}_{\pe\pd}
    \hspace{2mm}
    \mathcal{D}_{\pd\pc}
    \hspace{2mm}
    \mathcal{D}_{\pc\pb}
    \hspace{2mm}
    \mathcal{D}_{\pb\pa}$
    }
\Justifies
\pa\msg{l}\pe.\G_{\pa\pe} + \pe\msg{l}\pd.\G_{\pe\pd} + \pd\msg{l}\pc.\G_{\pd\pc} + \pc\msg{l}\pb.\G_{\pc\pb} + \pb\msg{l}\pa.\G_{\pb\pa} \vdash
\raisebox{-6mm}{
\scalebox{0.7}{ 
\begin{tikzpicture}[node distance=1cm]
        \node (pa) [draw,rectangle] {$\pa$};
        \node (pb) [draw,rectangle, below right of = pa,xshift = 7pt,yshift = 1pt] {$\pb$};
        \node (pc) [draw,rectangle, below left of = pb,xshift = 10pt,yshift = -10pt] {$\pc$};
        \node (pe) [draw,rectangle, below left of = pa,xshift = -7pt,yshift = 1pt] {$\pe$};       
        \node (pd) [draw,rectangle,  below right of = pe,xshift = -10pt,yshift = -10pt] {$\pd$};
        \node (pS) [draw,rectangle,  below right of = pb,xshift = 10pt,yshift = 10pt] {$\pS$};
        \path  (pa) edge [-stealth,bend right=20] node {} (pe)
                  (pe) edge [-stealth,bend right=20] node {} (pd)
                  (pd) edge [-stealth,bend right=20] node {} (pc)
                  (pc) edge [-stealth,bend right=20] node {} (pb)
                  (pb) edge [-stealth,bend right=20] node {} (pa)
                  ;
 \end{tikzpicture}
 }
 } \ \raisebox{-1mm}{\,\text{\Huge $\cdot\!\!\cdot\!\!\cdot$}}
\endprooftree
\vspace{4mm}
\caption{A type derivation for Example \ref{ex:gelection}.}\label{fig:typingex} 
\end{figure}
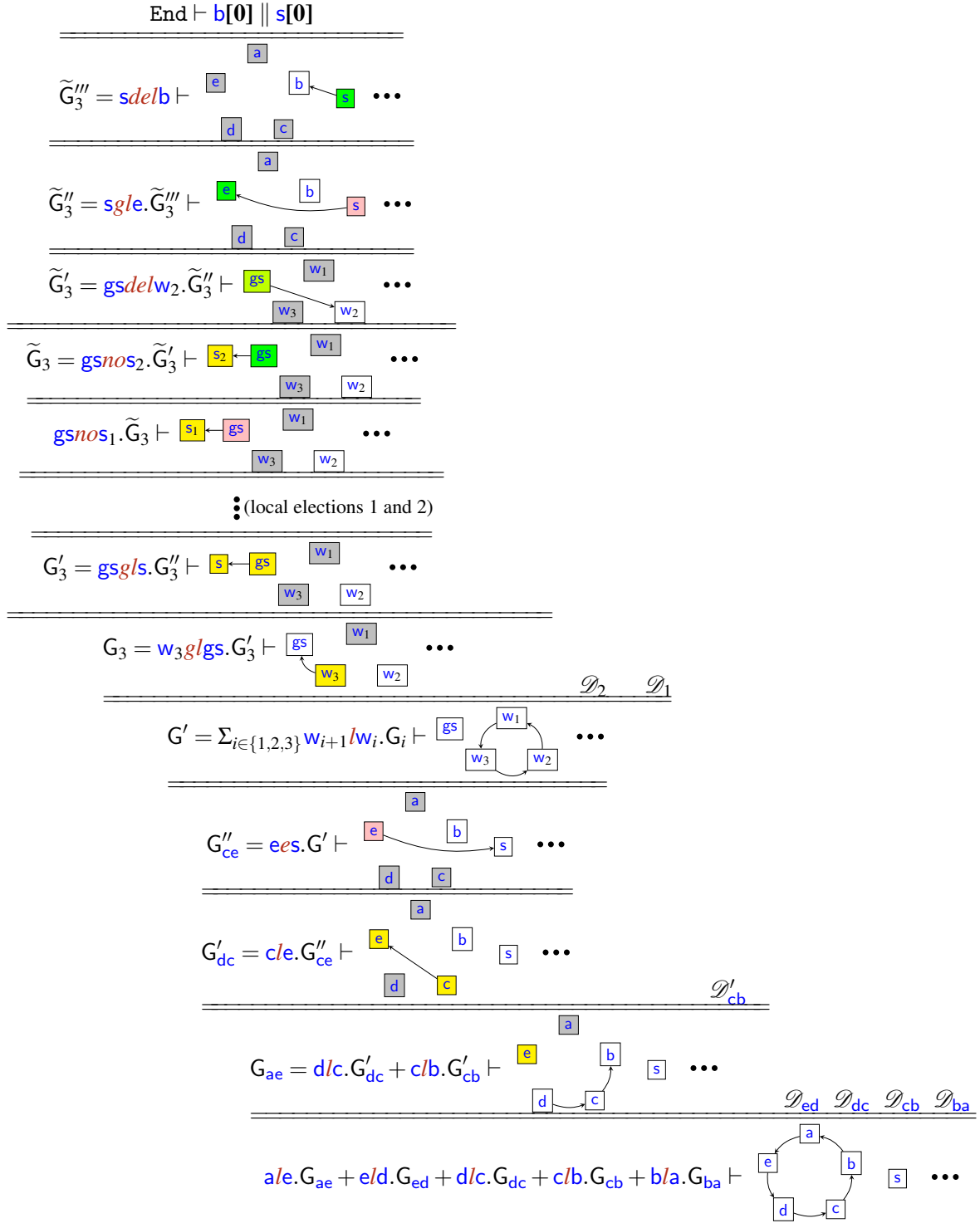
{\em
In Figure \ref{fig:typingex} we provide a typing for the modular session $\eg$ of Example \ref{ex:gelection} in the type system $\vdash^{\mathcal{P}}$ with 

  \Cline{{\mathcal{P}}=\bigcup_{i\in  \set{1,2,3}  
  }\set{\set{\pa_i,\pb_i,\pc_i,\pd_i,\pe_i,\pS_i}} \cup\Set{\set{\pw_1,\pw_2,\pw_3,\pgS}}}
  
  \smallskip
  
  \noindent 
 We  use colours for representing reduced participants.
 In particular, a participant has the colour 
 
 \noindent $\begin{tikzpicture}
 \node (pw3) [draw,rectangle] {};
 \end{tikzpicture}$  in its initial state; \qquad \qquad\qquad
 $\begin{tikzpicture}
 \node (pw3) [draw,rectangle,fill=Yellow] {};
\end{tikzpicture}$  after one interaction; \qquad\qquad\qquad
$\begin{tikzpicture}
 \node (pw3) [draw,rectangle,fill=pink] {};
\end{tikzpicture}$ after two interactions;\\
$\begin{tikzpicture}
 \node (pw3) [draw,rectangle,fill=green] {};
\end{tikzpicture}$ after three interactions;\qquad\qquad
$\begin{tikzpicture}
 \node (pw3) [draw,rectangle,fill=lime] {};
\end{tikzpicture}$ after four interactions;\qquad\qquad\quad~
$\begin{tikzpicture}
 \node (pw3) [draw,rectangle,fill=lightgray] {};
\end{tikzpicture}$  when terminated. 

\smallskip

\noindent
For instance:\\
\raisebox{-1mm}{\scalebox{0.7}{$\begin{tikzpicture}
 \node (pa) [draw,rectangle] {$\pgS$};
\end{tikzpicture}$}}  $= 
\pP{\pgS}{\Sigma_{i\in\Set{1,2,3}}\pw_i?\msg{gleader}.\pS_i!\msg{gleader}.\pS_{i+1}!\msg{no}.\pS_{i+2}!\msg{no}.\big(  \Sigma_{i\in\Set{1,2,3}}\pw_i!\msg{del}\big)}$\\
\raisebox{-1mm}{\scalebox{0.7}{$\begin{tikzpicture}
 \node (pa) [draw,rectangle,fill=Yellow] {$\pgS$};
\end{tikzpicture}$}}  $= 
\pP{\pgS}{\pS_3!\msg{gleader}.\pS_{1}!\msg{no}.\pS_{2}!\msg{no}.\big(  \Sigma_{i\in\Set{1,2,3}}\pw_i!\msg{del}\big)}$\qquad\qquad
\raisebox{-1mm}{\scalebox{0.7}{$\begin{tikzpicture}
 \node (pa) [draw,rectangle,fill=pink] {$\pgS$};
\end{tikzpicture}$}}  $= 
\pP{\pgS}{\pS_{1}!\msg{no}.\pS_{2}!\msg{no}.\big(  \Sigma_{i\in\Set{1,2,3}}\pw_i!\msg{del}\big)}$\\
\raisebox{-1mm}{\scalebox{0.7}{$\begin{tikzpicture}
 \node (pa) [draw,rectangle,fill=green] {$\pgS$};
\end{tikzpicture}$}}  $= 
\pP{\pgS}{\pS_{2}!\msg{no}.\big(  \Sigma_{i\in\Set{1,2,3}}\pw_i!\msg{del}\big)}$\qquad\qquad
\raisebox{-1mm}{\scalebox{0.7}{$\begin{tikzpicture}
 \node (pa) [draw,rectangle,fill=lime] {$\pgS$};
\end{tikzpicture}$}}  $= 
\pP{\pgS}{ \Sigma_{i\in\Set{1,2,3}}\pw_i!\msg{del}}$\qquad\qquad
\raisebox{-1mm}{\scalebox{0.7}{$\begin{tikzpicture}
 \node (pa) [draw,rectangle,fill=lightgray] {$\pgS$};
\end{tikzpicture}$}}  $= 
\pP{\pgS}{\inact}$

\bigskip
 
 In Figure \ref{fig:typingex},
arrows do connect pair of participants forming a redex. 
 Moreover, 
in the conclusion of the rules we show only the participants of the
module containing the coherent set of reductions and, in case, the ``external'' connectors.
The rest of the session will be  denoted by ``\raisebox{-1mm}{\,\text{\Huge $\cdot\!\!\cdot\!\!\cdot$}}''. 
 Also, the figure 
shows only one  
branch of the  typing  derivation tree:  the one concerning the 
global election of $\pe_3$.

For the sake of readability, $\msg{l}$ and $\msg{gl}$ are abbreviation for, respectively,
 $\msg{leader}$ and $\msg{gleader}$. Moreover, $\pa,\pb,\pc,\pd,\pe$ and $\pS$ stand
 for $\pa_3,\pb_3,\pc_3,\pd_3,\pe_3$ and $\pS_3$.
\finex
} 

\end{example}

\section{Properties}\label{spr}

A  subsession  
of the shape  $\pP{\pp}{\q!\msg{\lambda}.\PP\pplus\PP'} \parN \pP{\q}{\pp?\msg{\lambda}.\Q\pplus\Q'}$  is called a {\em redex} and 
$\pP{\pp}{\PP} \parN \pP{\q}{\Q}$  
is the {\em contractum} of the redex. In a transition labelled by ${\pp}{\msg{\lambda}}{\q}$ both the redex
and the contractum are uniquely determined.

\begin{lemma}\label{lem:unique-redex}
If $\Nt \SLTS{{\pp}{\msg{\lambda}}{\q}} \Nt'$, then there exists a unique redex
$\pP{\pp}{\q!\msg{\lambda}.\PP \pplus \PP'} \parN \pP{\q}{\pp?\msg{\lambda}.\Q \pplus \Q'}$
such that  

\Cline{
\Nt \equiv \pP{\pp}{\q!\msg{\lambda}.\PP \ppplus \PP'} \parN \pP{\q}{\pp?\msg{\lambda}.\Q \ppplus \Q'} \parN \Nt'' 
}

\noindent
and 
$\Nt' \equiv \pP{\pp}{\PP} \parN \pP{\q}{\Q} \parN \Nt''$.
\end{lemma}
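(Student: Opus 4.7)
The plan is to handle existence and uniqueness separately, both by a direct unfolding of Definition~\ref{slts}.

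For \emph{existence}, I would recall that the LTS is defined as the closure under structural congruence of the single axiom \rn{Comm}. Hence, from $\Nt \SLTS{\pp\msg{\lambda}\q} \Nt'$, I obtain sessions $\Nt_0, \Nt'_0$ with $\Nt \equiv \Nt_0$, $\Nt' \equiv \Nt'_0$, and an instance of \rn{Comm} deriving $\Nt_0 \SLTS{\pp\msg{\lambda}\q} \Nt'_0$. By the shape of the axiom, $\Nt_0$ is exactly of the form $\pP{\pp}{\q!\msg{\lambda}.\PP \pplus \PP'} \parN \pP{\q}{\pp?\msg{\lambda}.\Q \pplus \Q'} \parN \Nt''$ and $\Nt'_0$ is $\pP{\pp}{\PP} \parN \pP{\q}{\Q} \parN \Nt''$, which gives the two congruences required by the statement.

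For \emph{uniqueness}, I would exploit two facts from the definitions of Section~\ref{tc}. First, by the definition of multiparty sessions, all participant names in $\Nt$ are distinct; hence there is a unique $\PP^\star$ such that $\pP\pp{\PP^\star} \in \Nt$ and a unique $\Q^\star$ such that $\pP\q{\Q^\star} \in \Nt$, and any decomposition witnessing the transition must assign these processes (up to $\equiv$) to $\pp$ and $\q$ respectively. Second, within $\PP^\star$ the selected summand $\q!\msg{\lambda}.\PP$ is uniquely determined: the side condition in Definition~\ref{p} forbids two summands with the same output prefix $\q!\msg{\lambda}$, so the continuation $\PP$ is fixed by $\pp$, $\q$, and $\msg{\lambda}$; the same argument applied to inputs fixes $\Q$ inside $\Q^\star$. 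The residual context $\Nt''$ is then forced as the parallel composition of all remaining components of $\Nt$.

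The only mildly delicate point is ensuring that structural congruence cannot produce genuinely different decompositions. But since $\equiv$ on sessions is generated only by commutativity and associativity of $\parN$ together with absorption of idle components $\pP\pp{\inact}$, it cannot split a single participant's process across two positions nor rearrange the summands inside a process. Thus the redex and contractum exhibited above are unique up to $\equiv$, as required. I expect no real obstacle here; the argument is essentially bookkeeping on the side conditions of Definitions~\ref{p} and~\ref{slts}.
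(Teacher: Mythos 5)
Your proposal is correct and follows essentially the same route as the paper, which simply states that the lemma is immediate by the definition of the session LTS; your argument is exactly that unfolding, with the uniqueness bookkeeping (distinct participant names in a session, the ban on repeated prefixes $\q!\msg{\lambda}$, $\pp?\msg{\lambda}$ among summands in Definition~\ref{p}, and the fact that $\equiv$ only rearranges parallel components and absorbs $\pP\pp\inact$) spelled out explicitly.
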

\begin{proof}
Immediate by the definition of session LTS. 
\qed
\end{proof}

\medskip

Rule \rn{Comm} in Definition~\ref{slts} entails 
an easy relation between the participants  connected by reductions
in a session.

\begin{lemma}\label{pr}
If $\Nt\SLTS{\cml}\Nt'$, then $\plays\Nt=\plays\cml\cup\plays{\Nt'}$.
\end{lemma}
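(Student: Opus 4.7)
The plan is to reduce this directly to the previous lemma. By Lemma~\ref{lem:unique-redex}, writing $\cml=\pp\msg{\lambda}\q$, there is a unique decomposition
\[
\Nt \equiv \pP{\pp}{\q!\msg{\lambda}.\PP \ppplus \PP'} \parN \pP{\q}{\pp?\msg{\lambda}.\Q \ppplus \Q'} \parN \Nt'',
\qquad
\Nt' \equiv \pP{\pp}{\PP} \parN \pP{\q}{\Q} \parN \Nt''.
\]
From this decomposition, the identity $\plays{\Nt} = \{\pp,\q\} \cup \plays{\Nt''}$ is immediate, since the components sitting at $\pp$ and $\q$ in $\Nt$ are prefixed and hence not equal to $\inact$, so they contribute $\pp$ and $\q$ to $\plays{\Nt}$.

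For the equality I would then observe that $\plays{\cml}=\{\pp,\q\}$ by definition, and that $\plays{\Nt'} \subseteq \{\pp,\q\}\cup\plays{\Nt''}$ trivially (with the two inclusions being equalities only when the residual processes $\PP$ and $\Q$ are not $\inact$). Combining, $\plays{\cml}\cup\plays{\Nt'} = \{\pp,\q\}\cup\plays{\Nt''}=\plays{\Nt}$, which is the desired conclusion. The only point that deserves a moment of care, and which I expect to be the sole technical subtlety, is that $\PP$ or $\Q$ could be $\inact$, so that $\pp$ or $\q$ need not belong to $\plays{\Nt'}$; this is precisely why the label contribution $\plays{\cml}=\{\pp,\q\}$ is needed on the right-hand side to recover the two endpoints of the reduction. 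Since the structural congruence absorbs $\pP\pp\inact$ components, no further case analysis is required.
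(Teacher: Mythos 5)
Your proof is correct and follows essentially the argument the paper has in mind: the paper omits an explicit proof, treating the lemma as an immediate consequence of Rule \rn{Comm}, and your argument is just the explicit unfolding of that rule via Lemma~\ref{lem:unique-redex}. You also correctly isolate the only delicate point, namely that $\pp$ or $\q$ may drop out of $\plays{\Nt'}$ when the continuation is $\inact$, which is exactly why $\plays{\cml}$ must appear on the right-hand side.
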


It is not difficult to check that the participants of a session and of its global type are the same. 
\begin{lemma}\label{l:p}
If \, $\tyn\G\Nt$, then $\plays\G=\plays\Nt$.
\end{lemma}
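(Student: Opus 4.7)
The statement is essentially immediate from the definitions, so the plan is to proceed by inspection of the last rule used in the derivation of $\tyn\G\Nt$. Since coinductive reasoning is not required for proving a plain equality that is built into each rule's conclusion, I only need a case analysis.

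First I would handle the base case \rn{End}. Here $\G=\End$ and $\Nt\equiv\pP\pp\inact$. By Definition of participants for global types, $\plays\End=\emptyset$. On the session side, the convention $\pP\pp\PP\in\Nt$ requires $\PP\neq\inact$, so no participant is counted; alternatively, by the structural congruence, $\pP\pp\inact$ is a neutral element, so $\plays{\pP\pp\inact}=\emptyset$. The two sides agree.

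For the inductive case (applied coinductively), suppose the derivation ends with rule \rn{TComm}, with conclusion $\Sigma_{i\in I}\cml_i.\G_i\derP\Nt$. The required equality $\plays{\Sigma_{i\in I}\cml_i.\G_i}=\plays\Nt$ is literally one of the side conditions of rule \rn{TComm}, so there is nothing further to show.

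There is no real obstacle: the cleanest presentation is just to point out that the equality is recorded as a premise of \rn{TComm} and holds trivially in the \rn{End} axiom, so no genuine coinductive argument is needed. The statement is in effect an invariant made explicit by the typing rules, which is why it can be used freely as a lemma in the subsequent development (for Subject Reduction and Lock Freedom).
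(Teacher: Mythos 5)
Your proof is correct and matches the paper's intent: the paper omits the proof as an easy check, and the check is exactly what you describe — the equality is recorded as a side condition of rule \rn{TComm}, and in the \rn{End} case both $\plays\End$ and $\plays{\pP\pp\inact}$ are empty (the latter because $\pP\pp\PP\in\Nt$ requires $\PP\neq\inact$). In particular, your observation that only a case analysis on the last rule is needed, with no genuine (co)induction, is accurate.
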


  The following technical lemma relating capabilities and possible reductions 
 of a global type will be handy later on. 
\begin{lemma}\label{lem:capabilities}
If \hspace{1pt}$\G \SLTS{\cml} \G'$, then $\cml \in \cp{\G}$.
\end{lemma}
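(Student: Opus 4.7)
My plan is a straightforward case analysis on the final (outermost) rule applied in the derivation of $\G \SLTS{\cml} \G'$. Observe first that neither axiom nor rule in Definition~\ref{ltsgt} has conclusion of the shape $\End \SLTS{\cml}\G'$, so $\G$ cannot be $\End$; hence $\G = \Sigma_{i\in I}\cml_i.\G_i$ for some nonempty $I$, and the transition must have been produced by either \rn{E-comm} or \rn{I-comm}.

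In the \rn{E-comm} case, the conclusion forces $\cml = \cml_j$ for some $j\in I$. By Definition~\ref{cp} we have $\set{\cml_i}_{i\in I} \subseteq \cp{\G}$, so $\cml \in \cp{\G}$ immediately.

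In the \rn{I-comm} case, the premises of the rule explicitly include $\cml \in \cp{\G_i}$ for every $i\in I$. Since Definition~\ref{cp} gives $\cp{\G} = \set{\cml_i}_{i\in I} \cup \bigcup_{i\in I}\cp{\G_i} \supseteq \cp{\G_i}$ for any $i\in I$, we again conclude $\cml \in \cp{\G}$.

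No induction or coinduction is required: the claim only examines the top-most inference step, and the capability set of $\G$ is determined by the head constructor of $\G$ together with the capabilities of its immediate subterms. In fact, this lemma is precisely the reason the side condition $\cml \in \cp{\G_i}$ was added to Rule \rn{I-comm} (as the paper itself notes right after Definition~\ref{ltsgt}); without that premise the coinductive rule would permit transitions carrying labels that do not occur anywhere in $\G$, as in the pathological derivation $\mathcal D$ exhibited earlier. So there is no genuine obstacle — the result is essentially a sanity check confirming that the extra premise achieves its intended purpose.
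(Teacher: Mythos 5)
Your proof is correct and matches the paper's argument exactly: both proceed by a case analysis on the axiom/rule justifying $\G \SLTS{\cml} \G'$, using the definition of capabilities to conclude $\cml\in\cp{\G}$ either from $\cml=\cml_j$ (Axiom \rn{E-comm}) or from the premise $\cml\in\cp{\G_i}$ together with $\bigcup_{i\in I}\cp{\G_i}\subseteq\cp{\G}$ (Rule \rn{I-comm}). Your observation that only the topmost inference step matters, so no (co)induction is needed, is also consistent with the paper's proof.
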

\begin{proof}
By cases  on the applied  axiom/rule justifying 
$\G \SLTS{\cml} \G'$.  If  this is $\rn{E-Comm}$, then $\G = \Sigma_{i\in I}\cml_i.\G_i$ and $\cml=\cml_j$
for some $j \in I$ and $\cml_j \in \cp{\Sigma_{i\in I}\cml_i.\G_i}$ by Definition~\ref{cp}. 

\noindent
Otherwise,  $\G = \Sigma_{i\in I}\cml_i.\G_i$ and $\G'= \Sigma_{i\in I}\cml_i.\G_i'$  by Rule $\rn{I-Comm}$, where
$\G_i \SLTS{\cml} \G'_i$, $\cml\in\cp{\G_i}$ 
 and $\plays{\cml}\cap\plays{\cml_i}=\emptyset$ for all $i \in I$. 
This implies 
$\cml \in \cp{\G}$,  since $\bigcup_{i \in I} \cp{\G_i}\subseteq\cp\G$ by Definition~\ref{cp}. 
\qed
\end{proof}

\smallskip
 For showing Subject Reduction it is crucial to ensure that the $\mathcal P$-coherence of a set of labels is preserved by reducing a label not belonging to this set, see 
 Lemma~\ref{lem:coherpres} whose proof uses Lemma~\ref{lem:cohint}  below. 

\begin{lemma}
\label{lem:cohint}
Let $\set{\cml_i}_{i\in I}$ be $\mathcal P$-coherent for $\Nt$ and let $\cml\in\LL\Nt$.
Moreover, let $\cml\neq \cml_i$ for all $i\in I$.  Then $\plays{\cml} \,\cap\,\plays{\cml_i}= \emptyset\,$ for all $i\in I$. 
\end{lemma}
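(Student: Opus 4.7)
The plan is to use the witness of $\mathcal{P}$-coherence as the pivot. Let $\widehat{\Nt}$ be the witness guaranteed by Definition~\ref{csrl}, so that $\{\cml_i\}_{i\in I} = \{\mu\in\LL\Nt \mid \plays{\mu}\cap\plays{\widehat{\Nt}}\neq\emptyset\}$. Since $\cml\in\LL\Nt$ and $\cml\neq\cml_i$ for every $i\in I$, the defining equality of the coherent set immediately gives $\plays{\cml}\cap\plays{\widehat{\Nt}}=\emptyset$. This is the one cheap fact I would extract before doing any real work; the rest of the argument is about transferring this ``disjointness from the module'' into disjointness from each $\cml_i$ individually.

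Fix $i\in I$ and write $\cml_i=\pr\msg{\lambda_i}\ps$. By $\mathcal{P}$-coherence, $\plays{\cml_i}\cap\plays{\widehat{\Nt}}\neq\emptyset$. I would split on whether $\plays{\cml_i}\subseteq\plays{\widehat{\Nt}}$ (both endpoints internal) or exactly one endpoint lies outside. The first case is immediate: $\plays{\cml_i}\subseteq\plays{\widehat{\Nt}}$ together with $\plays{\cml}\cap\plays{\widehat{\Nt}}=\emptyset$ yields $\plays{\cml}\cap\plays{\cml_i}=\emptyset$.

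The interesting case is WLOG $\pr\in\plays{\widehat{\Nt}}$ and $\ps\notin\plays{\widehat{\Nt}}$. Because the reduction $\Nt\SLTS{\cml_i}$ pairs $\pr$ with an external participant, $\pr$ is a connector of $\widehat{\Nt}$ (Definition~\ref{def:conn}), and hence by clause~(\ref{modularisation2}) of Definition~\ref{modularisation}, $\ps$ is a connector of its own module $\widehat{\Nt}_{\ps}$, so its process is $\plays{\widehat{\Nt}_{\ps}}$-connecting. The current top-level sum of $\ps$'s process contains a prefix interacting with $\pr$, and since $\pr\notin\plays{\widehat{\Nt}_{\ps}}$, Definition~\ref{cpr} forces \emph{every} top-level prefix of $\ps$ to have $\pr$ as its partner. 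Consequently, any label in $\LL\Nt$ involving $\ps$ must also involve $\pr$. If $\ps\in\plays{\cml}$ this would force $\pr\in\plays{\cml}$, contradicting $\plays{\cml}\cap\plays{\widehat{\Nt}}=\emptyset$; so $\ps\notin\plays{\cml}$, and $\pr\notin\plays{\cml}$ is already known, giving $\plays{\cml}\cap\plays{\cml_i}=\emptyset$.

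The step I expect to be the most delicate is the ``lock-in'' argument in the second case: translating the syntactic $\pSet$-connecting condition on $\ps$'s process into the semantic claim that every label of $\Nt$ touching $\ps$ also touches $\pr$. One has to be careful that the connecting condition applies to the currently exposed top-level sum of $\ps$ (not merely to the original source process), which is fine because the definition quantifies over \emph{all} subprocesses, and that an external partner appearing in one branch of $\ps$ really does dictate the partners of all other branches. Once that is firmly established, the disjointness chain closes without further work.
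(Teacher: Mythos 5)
Your proof is correct and follows essentially the same route as the paper's own argument: both hinge on extracting $\plays{\cml}\cap\plays{\widehat\Nt}=\emptyset$ from the witness of coherence, using condition~(\ref{modularisation2}) of Definition~\ref{modularisation} to conclude that the external endpoint of $\cml_i$ is a connector of its own module, and then invoking the connecting condition of Definition~\ref{cpr} to lock that endpoint to the single partner inside $\widehat\Nt$, which rules out any shared participant with $\cml$ (the paper phrases this as a contradiction on a hypothetical shared participant, while you argue directly with an explicit case split, but the substance is identical). The only small imprecision is your justification that the internal endpoint is a connector of $\widehat\Nt$: this follows from condition~(\ref{modularisation1}) of Definition~\ref{modularisation} (since its process mentions a participant of $\Nt$ outside $\plays{\widehat\Nt}$), not from Definition~\ref{def:conn} alone, which is exactly the step the paper covers with its ``by definition of $\mathcal P$-modularisation''.
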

\begin{proof}
 By definition of coherence  (Definition~\ref{csrl}),  we have a subsession  $\widehat\Nt$  of $\Nt$
witnessing 
the $\mathcal P$-coherence of $\set{\cml_i}_{i\in I}$.
By contradiction, let us assume $\plays{\cml} \,\cap\,\plays{\cml_j}\neq \emptyset$ for some 
$j\in I$. By definition of $\mathcal P$-modularisation, this implies that $\plays{\cml_j}$ contains a connector $\pp$ of $\widehat\Nt$.
Let $\plays{\cml_j}=\set{\pp,\pq}$ with $\pq\not\in\plays{\widehat\Nt}$ and $\plays{\cml}=\set{\pq,\pr}$. 
Then $\q\in\plays{\Nt}$ and the process $\PQ$ of $\pq$
must have a choice between a communication with $\pp$
 and a communication 
with $\pr$. But this is impossible, since $\pq$ must be a connector for some subsession $\widehat\Nt'$ of $\Nt$  by condition~(\ref{modularisation2}) of Definition~\ref{modularisation} and then the process $\PQ$ must be $\plays{\widehat\Nt'}$-connecting by Definition~\ref{def:conn}. 
\qed
\end{proof}

\begin{lemma}[Coherence preservation]
\label{lem:coherpres}
Let $\set{\cml_i}_{i\in I}$ be $\mathcal P$-coherent for $\Nt$ and let $\Nt \SLTS\cml \Nt'$.
Then $\cml\not\in\set{\cml_i}_{i\in I}$ implies that $\set{\cml_i}_{i\in I}$ is $\mathcal P$-coherent for $\Nt'$
as well.
\end{lemma}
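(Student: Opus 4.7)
My plan is to keep the very witness $\widehat\Nt$ that makes $\set{\cml_i}_{i\in I}$ coherent for $\Nt$, and to show that it still witnesses coherence of the same set for $\Nt'$. First, directly from Definition~\ref{csrl}, the hypothesis $\cml\not\in\set{\cml_i}_{i\in I}$ together with $\cml\in\LL\Nt$ forces $\plays\cml\cap\plays{\widehat\Nt}=\emptyset$: otherwise $\cml$ would meet the defining condition and hence belong to the coherent set generated by $\widehat\Nt$. As a consequence, every participant of $\widehat\Nt$ keeps the same process in $\Nt'$, and, using the preservation of $\mathcal{P}$-modularisation under reduction already established, $\widehat\Nt$ remains an element of the (unique) $\mathcal{P}$-modularisation of $\Nt'$.

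Next I would invoke Lemma~\ref{lem:cohint} to strengthen the disjointness to $\plays\cml\cap\plays{\cml_i}=\emptyset$ for every $i\in I$. This says that, for each label $\cml_i$, both of its participants have identical processes in $\Nt$ and in $\Nt'$, hence $\cml_i\in\LL{\Nt'}$. Together with $\plays{\cml_i}\cap\plays{\widehat\Nt}\neq\emptyset$, which comes from coherence of $\set{\cml_i}_{i\in I}$ for $\Nt$, this yields the inclusion $\set{\cml_i}_{i\in I}\subseteq\Set{\cml'\in\LL{\Nt'}\mid\plays{\cml'}\cap\plays{\widehat\Nt}\neq\emptyset}$.

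The reverse inclusion is the heart of the proof. Given $\cml'\in\LL{\Nt'}$ with some $\pp\in\plays{\cml'}\cap\plays{\widehat\Nt}$, let $\pq$ be $\cml'$'s partner. Since $\pp$'s process is unchanged, the matching prefix on $\pp$'s side is already available in $\Nt$. If $\pq\not\in\plays\cml$, then $\pq$'s process is unchanged too, so $\cml'\in\LL\Nt$ and coherence of $\set{\cml_i}_{i\in I}$ for $\Nt$ immediately delivers $\cml'\in\set{\cml_i}_{i\in I}$. The delicate case, which I expect to be the main obstacle, is $\pq\in\plays\cml$. Here I would mirror the contradiction strategy used in Lemma~\ref{lem:cohint}: $\pp$ is a connector of $\widehat\Nt$ interacting with $\pq\not\in\plays{\widehat\Nt}$, so by Definition~\ref{cpr} every external summand of $\pp$'s top sum must target $\pq$ alone; symmetrically, condition~(ii) of Definition~\ref{modularisation} forces $\pq$ to be a connector for its own module, so the summand of $\pq$'s top-level sum that realises $\cml$ cannot coexist with a summand realising $\cml'$ unless the two external partners coincide, which contradicts $\plays\cml\cap\plays{\widehat\Nt}=\emptyset$ together with $\pp\in\plays{\widehat\Nt}$. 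This should rule out any genuinely new label in $\LL{\Nt'}$ touching $\widehat\Nt$, yielding the reverse inclusion and hence the desired coherence of $\set{\cml_i}_{i\in I}$ for $\Nt'$.
\qed
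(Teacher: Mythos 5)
Your overall strategy is the same as the paper's (keep the witness $\widehat\Nt$, use Lemma~\ref{lem:cohint} to get $\plays{\cml}\cap\plays{\cml_i}=\emptyset$, then check the defining set equality for $\Nt'$), and your forward inclusion $\set{\cml_i}_{i\in I}\subseteq\Set{\cml'\in\LL{\Nt'}\mid\plays{\cml'}\cap\plays{\widehat\Nt}\neq\emptyset}$ is fine. The gap is exactly in the ``delicate case'' of the reverse inclusion. The prefix of $\pq$ realising $\cml'$ in $\Nt'$ is a top-level summand of the \emph{continuation} of $\pq$'s $\cml$-prefix, not a sibling summand of that prefix in $\Nt$; Definition~\ref{cpr} constrains each sum of a connecting process separately, so there is no conflict between ``every summand of a mixed/external sum targets the same external partner'' and the presence, \emph{under} an internal prefix, of a later sum whose partner is the connector $\pp$ of $\widehat\Nt$. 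Hence the contradiction you invoke does not follow from Definitions~\ref{cpr} and~\ref{modularisation}, and the case is not closed.

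Worse, that case can actually materialise. Take $\mathcal P=\Set{\Set{\pp,\pa,\pb},\Set{\pq,\pr}}$ and $\Nt=\pP\pp{\pq!\msg{\lambda'}}\parN\pP\pa{\pb!\msg{\mu}}\parN\pP\pb{\pa?\msg{\mu}}\parN\pP\pq{\pr?\msg{\lambda}.\pp?\msg{\lambda'}}\parN\pP\pr{\pq!\msg{\lambda}}$. All conditions of Definitions~\ref{cpr} and~\ref{modularisation} hold ($\pp$ and $\pq$ are connectors of their modules), $\LL{\Nt}=\Set{\pa\msg{\mu}\pb,\,\pr\msg{\lambda}\pq}$, and the set that is $\mathcal P$-coherent for $\Nt$ with witness the first module is $\Set{\pa\msg{\mu}\pb}$. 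Reducing with $\cml=\pr\msg{\lambda}\pq\not\in\Set{\pa\msg{\mu}\pb}$ gives $\Nt'$ in which the new label $\pp\msg{\lambda'}\pq$ is enabled; it touches the witness although it is not in $\LL{\Nt}$, so the reverse inclusion fails for the same witness, and in fact $\Set{\pa\msg{\mu}\pb}$ is not $\mathcal P$-coherent for $\Nt'$ with any witness. So no argument along your lines can succeed without strengthening the hypotheses (some invariant preventing a step from newly exposing a connector-to-connector synchronisation that meets the witness). For comparison, the paper's proof is much terser and does not address this direction at all: from Lemma~\ref{lem:cohint} it argues only that the $\cml$-step cannot disable any reduction labelled $\cml_i$ and then asserts that $\widehat\Nt$ is still a witness, saying nothing about labels the step may newly enable --- precisely the point you rightly identified as the heart of the matter.
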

\begin{proof} Let $\widehat\Nt$ be the subsession of $\Nt$ witnessing the $\mathcal P$-coherence of $\set{\cml_i}_{i\in I}$ for $\Nt$. From $\cml\not\in\set{\cml_i}_{i\in I}$ and 
 Lemma \ref{lem:cohint} we get that 
$\plays{\cml} \,\cap\,\plays{\cml_i}= \emptyset$ for all $i\in I$.
This implies that the reduction $\SLTS{\cml}$ cannot affect any reduction with label in 
$\set{\cml_i}_{i\in I}$. Hence 
$\widehat\Nt$ is a witness of the $\mathcal P$-coherence of $\set{\cml_i}_{i\in I}$ also for $\Nt'$.
\qed
\end{proof} 
\smallskip

Notice how the conditions on connectors (Definitions \ref{cpr} and \ref{def:conn}) are crucial in getting
the property that $\plays{\cml} \,\cap\,\plays{\cml_i}= \emptyset$ for all $i\in I$ in the above 
result of coherence preservation. If we allowed connectors to communicate with external partners 
having unrestricted processes we could
consider the $\Set{\Set{\pp},\Set{\pr,\ps}}$-modularisation of
$\Nt= \pP\pp{\pr!\msg{\lambda}} \parN \pP\pr{\pp?\msg{\lambda}+\ps?\msg{\lambda}}
\parN \pP\ps{\pr!\msg{\lambda}}$.
In such a case, the set $\Set{\pp\msg{\lambda}\pr}$ would be $\Set{\Set{\pp},\Set{\pr,\ps}}$-coherent with witness $\Nt'= \pP\pp{\pr!\msg{\lambda}}$. 
However, we would also have that $\Nt\SLTS{\ps\msg{\lambda}\pr} \Nt'$,  
but $\Set{\pp\msg{\lambda}\pr}$ would not be $\Set{\Set{\pp},\Set{\pr,\ps}}$-coherent for $\Nt'$,  since $\pp\msg{\lambda}\pr\not\in\LL{\Nt'}$.

\begin{theorem}[Subject Reduction]\label{thm:SR}
 If  \hspace{1pt}$\tyn\G  \Nt $ and $\Nt \SLTS\cml \Nt'$, then $ \tyn{\G'} \Nt'$
 and  $\G \SLTS\cml \G' $ 
   for some $\G'$. 
\end{theorem}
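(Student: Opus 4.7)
The plan is to proceed by coinduction on the typing derivation $\tyn\G\Nt$, using the coinductive nature of rule \textsc{TComm}. The last applied axiom/rule cannot be \textsc{End} because a reduction is possible, so $\G = \Sigma_{i\in I}\cml_i.\G_i$, and the premises give $\Nt \SLTS{\cml_i}\Nt_i$ with $\tyn{\G_i}{\Nt_i}$ for all $i\in I$, the set $\set{\cml_i}_{i\in I}$ is $\mathcal P$-coherent for $\Nt$, and $\plays{\G} = \plays\Nt$. I split on whether $\cml \in \set{\cml_i}_{i\in I}$.

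In the first case, $\cml = \cml_j$ for some $j\in I$. By Lemma~\ref{lem:unique-redex} the redex (and hence the contractum, up to $\equiv$) fired by a given label is unique, so $\Nt' \equiv \Nt_j$. Then I take $\G' = \G_j$: rule \textsc{E-comm} yields $\G \SLTS{\cml_j}\G_j$ immediately, and $\tyn{\G_j}{\Nt_j}$ is a premise of the typing, finishing this case.

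In the second case, $\cml \notin \set{\cml_i}_{i\in I}$. Lemma~\ref{lem:cohint} gives $\plays\cml \cap \plays{\cml_i} = \emptyset$ for all $i\in I$. This disjointness forces the two reductions from $\Nt$ to commute: first I would establish the diamond property that, for each $i\in I$, there exists $\Nt'_i$ with $\Nt_i \SLTS{\cml}\Nt'_i$ and $\Nt' \SLTS{\cml_i}\Nt'_i$. The argument is routine by inspecting the unique redexes provided by Lemma~\ref{lem:unique-redex} and \textsc{Comm}: the redex/contractum for $\cml_i$ and for $\cml$ involve disjoint pairs of participants, so the two steps can be performed in either order and reach the same residual. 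Applying the coinductive hypothesis to each of the premises $\tyn{\G_i}{\Nt_i}$ together with $\Nt_i \SLTS{\cml}\Nt'_i$ yields global types $\G'_i$ with $\G_i \SLTS{\cml}\G'_i$ and $\tyn{\G'_i}{\Nt'_i}$. I then set $\G' = \Sigma_{i\in I}\cml_i.\G'_i$.

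To conclude I verify the two required facts. For $\G \SLTS{\cml}\G'$ I apply rule \textsc{I-comm}: its three premises hold by coinductive IH ($\G_i \SLTS{\cml}\G'_i$), by Lemma~\ref{lem:capabilities} applied to each $\G_i \SLTS{\cml}\G'_i$ (giving $\cml \in \cp{\G_i}$), and by Lemma~\ref{lem:cohint} ($\plays\cml \cap \plays{\cml_i} = \emptyset$). For $\tyn{\G'}{\Nt'}$ I apply rule \textsc{TComm} coinductively: the transitions $\Nt' \SLTS{\cml_i}\Nt'_i$ and the typings $\tyn{\G'_i}{\Nt'_i}$ come from the diamond and the coinductive IH, the coherence of $\set{\cml_i}_{i\in I}$ for $\Nt'$ follows from Lemma~\ref{lem:coherpres}, and the participants equality $\plays{\G'} = \plays{\Nt'}$ follows from Lemma~\ref{pr} together with $\plays\G = \plays\Nt$ (Lemma~\ref{l:p}) and the disjointness $\plays\cml \cap \plays{\cml_i} = \emptyset$, which guarantees that no label $\cml_i$ and no $\G'_i$ mentions a participant that has disappeared from $\Nt'$ via the $\cml$-step. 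The principal obstacle is the verification of the diamond property together with the $\plays{\G'} = \plays{\Nt'}$ equality, since one must carefully track that participants involved in $\cml$ (possibly terminating) are decoupled from all $\cml_i$ — this is exactly where the connector discipline of Definitions~\ref{cpr} and~\ref{def:conn} pays off, via Lemma~\ref{lem:cohint}.
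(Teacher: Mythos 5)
Your proposal is correct and follows essentially the same route as the paper's proof: coinduction on the typing derivation, a case split on whether $\cml\in\set{\cml_i}_{i\in I}$, with the first case handled by Axiom \rn{E-Comm} and a premise of \rn{TComm}, and the second by the disjointness from Lemma~\ref{lem:cohint} (which the paper uses implicitly when asserting that the redex survives each $\cml_i$-step), the commutation of the two reductions, the coinductive hypothesis, Lemma~\ref{lem:capabilities} for Rule \rn{I-Comm}, Lemma~\ref{lem:coherpres} for coherence, and Lemmas~\ref{pr} and~\ref{l:p} for the participant condition before reapplying \rn{TComm}. The only cosmetic difference is that the paper obtains $\plays{\G'}=\plays{\Nt'}$ by applying Lemma~\ref{l:p} to the new typings $\tyn{\G'_i}{\Nt'_i}$ together with Lemma~\ref{pr} on $\Nt'\SLTS{\cml_i}$, which is the cleanest instantiation of the argument you sketch.
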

\noindent
\begin{proof} By coinduction on the derivation of $\tyn\G \Nt$. Let $\cml=\pp\msg{\lambda}\q$.
By Lemma \ref{lem:unique-redex},
if $\Nt \SLTS\cml \Nt'$, then there exists a unique redex

\Cline{
\RR = \pP{\pp}{\q!\msg{\lambda}.\PP \pplus \PP'}\parN\pP{\q}{\pp?\msg{\lambda}.\Q \pplus \Q'}
}

\noindent
such that $\Nt \equiv \RR \parN \Nt''$ and 
$\Nt' \equiv \pP{\pp}{\PP} \parN \pP{\q}{\Q} \parN \Nt''$ for some $\Nt''$.
By the hypothesis that $\tyn\G \Nt$ we know that 
$\G$ is of the form $\Sigma_{i\in I}\cml_i.\G_i$ and the derivation ends by

	\Cline{
	\NamedCoRule{\rn{TComm}}
{\mbox{$\begin{array}{c} \Nt\SLTS{\cml_i}\Nt_i\qquad\tyn{\G_i}{\Nt_i}\qquad\forall i\in I\neq\emptyset\\
\set{\cml_i}_{i\in I}\text{ is $\mathcal P$-coherent for } \Nt \qquad\plays{\Sigma_{i\in I}\cml_i.\G_i}=\plays\Nt\\[3mm]
\end{array}$}}
{\tyn{\Sigma_{i\in I}\cml_i.\G_i}{\Nt}}
{ }{}
	}

\noindent
We proceed by distinguishing the two possible following cases.

{\em Case}  $\pp\msg{\lambda}\q=\cml_j$ for some $j\in I$. 
 By the premises of the rule, we have $\Nt \SLTS{\cml} \Nt_j$ and $\tyn{\G_j} \Nt_j$, where
$\Nt'=\Nt_j$ .  
Moreover, it immediately follows that $\G \SLTS{\cml} \G_j$  by  Axiom  $\rn{E-Comm}$.

{\em Case}   $\pp\msg{\lambda}\q\neq\cml_i$ for all $i\in I$.  
We have that, for all $i\in I$,  $\Nt_i\equiv\RR\parN\Nt'_i$ for some $\Nt'_i$. 
Hence we get that $\Nt_i \SLTS{\cml} \pP{\pp}{\PP} \parN \pP{\q}{\Q} \parN \Nt'_i$ for all $i\in I$. Moreover, for all $i\in I$,
$\Nt'' \SLTS{\cml_i}\Nt'_i$. 
	 By the  coinduction hypothesis  on the premises of the rule, we have that, for all $i \in I$, $\tyn{\G'_i}{\pP{\pp}{\PP} \parN \pP{\q}{\Q} \parN \Nt'_i}$ and 
	 $\G_i \SLTS{\cml} \G'_i$ for some $\G'_i$. 
	Now, by Lemma \ref{lem:capabilities}, we get that $\cml \in \cp{\G_i}$ for all $i \in I$, hence 	 we have that
	 $\G \SLTS{\cml}  \Sigma_{i\in I}\cml_i.\G_i'$  by Rule $\rn{I-Comm}$.  From $\Nt\SLTS\cml\Nt'\equiv \pP{\pp}{\PP} \parN \pP{\q}{\Q} \parN \Nt''$ and 
	 $\Nt'' \SLTS{\cml_i}\Nt'_i$  
	we get $\Nt'\SLTS{\cml_i}\pP{\pp}{\PP} \parN \pP{\q}{\Q} \parN \Nt'_i$ for all $i\in I$,    which imply,  by Lemma~\ref{pr},
	
	\Cline{\plays{\Nt'}=\bigcup_{i\in I}\plays{\cml_i}\cup\bigcup_{i\in I}\plays{\pP{\pp}{\PP} \parN \pP{\q}{\Q} \parN \Nt'_i}}  
	
	\noindent By Lemma~\ref{l:p},
	$\tyn{\G'_i}{\pP{\pp}{\PP} \parN \pP{\q}{\Q} \parN \Nt'_i}$
	  gives 
	$\plays{\G'_i}=\plays{\pP{\pp}{\PP} \parN \pP{\q}{\Q} \parN \Nt'_i}$  for all $i\in I$.  
	Hence 
	$\plays{\Sigma_{i\in I}\cml_i.\G_i'}= \bigcup_{i\in I}\plays{\cml_i}\cup\bigcup_{i\in I}\plays{\G'_i}=\plays{\Nt'}$. 
	Moreover, 
	from Lemma \ref{lem:coherpres} and $\pp\msg{\lambda}\q\neq\cml_i$ for all $i\in I$, it follows that  $\set{\cml_i}_{i\in I}$ is $\mathcal P$-coherent for $\Nt'$ as well.  Therefore Rule \rn{TComm} applies, namely
	
\Cline{\qquad\qquad\qquad\quad
	\NamedCoRule{}
{\mbox{$\begin{array}{c} \Nt'\SLTS{\cml_i}\pP{\pp}{\PP} \parN \pP{\q}{\Q} \parN\Nt'_i\quad\tyn{\G'_i}{\pP{\pp}{\PP} \parN \pP{\q}{\Q} \parN\Nt'_i}\quad\forall i\in I\neq\emptyset\\
\set{\cml_i}_{i\in I}\text{ is $\mathcal P$-coherent for }  \Nt' 
\qquad \plays{\Sigma_{i\in I}\cml_i.\G'_i}=\plays{\Nt'}\\[2mm]
\end{array}$}}
{\tyn{\Sigma_{i\in I}\cml_i.\G'_i}{\Nt'}}
{ }{}
	\quad\qquad\qquad\qquad\text{\qed}}
\end{proof}

\medskip

\begin{theorem}[Session Fidelity]\label{thm:SF}
 If \hspace{1pt}$\G \vdash \Nt $ and $\G \SLTS{\cml} \G' $, then 
 $\Nt \SLTS{\cml} \Nt'$ and  $ \G' \vdash \Nt'$ 
  for some $\Nt'$. 
\end{theorem}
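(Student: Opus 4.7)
The plan is to mirror the structure of the Subject Reduction proof (Theorem \ref{thm:SR}), proceeding by coinduction on the derivation of $\G \vdash \Nt$ and then case-splitting on the axiom/rule justifying $\G \SLTS{\cml} \G'$. Since $\G \vdash \Nt$, the derivation must end with \rn{TComm} applied to $\G = \Sigma_{i\in I}\cml_i.\G_i$, with premises giving $\Nt \SLTS{\cml_i} \Nt_i$, $\G_i \vdash \Nt_i$ for each $i \in I$, the coherence of $\set{\cml_i}_{i\in I}$ for $\Nt$, and the participant equality $\plays{\G} = \plays{\Nt}$.

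In the easy case, \rn{E-Comm} is applied, so $\cml = \cml_j$ and $\G' = \G_j$ for some $j\in I$. Then $\Nt' := \Nt_j$ works directly from the premises of \rn{TComm}.

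The interesting case is \rn{I-Comm}, where $\G' = \Sigma_{i\in I}\cml_i.\G_i'$ with $\G_i \SLTS{\cml} \G_i'$ and, crucially, $\plays{\cml} \cap \plays{\cml_i} = \emptyset$ for every $i \in I$. Applying the coinductive hypothesis to each premise $\G_i \vdash \Nt_i$, I obtain $\Nt_i \SLTS{\cml} \Nt_i'$ with $\G_i' \vdash \Nt_i'$. The hard step is a \emph{diamond} argument: from $\Nt \SLTS{\cml_i} \Nt_i \SLTS{\cml} \Nt_i'$ together with $\plays{\cml} \cap \plays{\cml_i} = \emptyset$, I need to produce a single $\Nt'$ with $\Nt \SLTS{\cml} \Nt' \SLTS{\cml_i} \Nt_i'$ for all $i$. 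Using Lemma \ref{lem:unique-redex}, the $\cml$-redex in $\Nt_i$ involves participants disjoint from the contractum of the $\cml_i$-redex, so it must already be present in $\Nt$; decomposing $\Nt \equiv \RR_\cml \parN \RR_{\cml_i} \parN \Nt_i''$ and reducing $\RR_\cml$ first yields the desired $\Nt'$, which is independent of $i$ since the $\cml$-redex and its contractum are uniquely determined by $\cml$.

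Once $\Nt \SLTS{\cml} \Nt'$ is established with $\Nt' \SLTS{\cml_i} \Nt_i'$ for each $i$, closing the proof with an application of \rn{TComm} to conclude $\G' \vdash \Nt'$ requires checking the two remaining side conditions. Coherence of $\set{\cml_i}_{i\in I}$ for $\Nt'$ follows from Lemma \ref{lem:coherpres}: the disjointness $\plays{\cml} \cap \plays{\cml_i} = \emptyset$ combined with $\cml_i$ having two participants guarantees $\cml \notin \set{\cml_i}_{i\in I}$. The participant equality $\plays{\Sigma_{i\in I}\cml_i.\G_i'} = \plays{\Nt'}$ follows by applying Lemma \ref{pr} to each transition $\Nt' \SLTS{\cml_i} \Nt_i'$ and Lemma \ref{l:p} to the derivations $\G_i' \vdash \Nt_i'$ obtained from the coinductive hypothesis, exactly as in the end of the Subject Reduction proof.

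The main obstacle is the diamond/commutativity lemma for independent session transitions. It is not stated as a standalone result in the paper, but it reduces to careful bookkeeping via Lemma \ref{lem:unique-redex} and the structural congruence on sessions; the disjointness of participants provided by Rule \rn{I-Comm} is precisely what makes the two redexes independent and the reduction order immaterial.
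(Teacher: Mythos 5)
Your proposal is correct and takes essentially the same route as the paper: case analysis on \rn{E-Comm} versus \rn{I-Comm}, coinductive hypothesis on the premises, the unique-redex lemma (Lemma~\ref{lem:unique-redex}) together with $\plays{\cml}\cap\plays{\cml_i}=\emptyset$ to commute the independent reductions and obtain a single $\Nt'$ (the paper carries out exactly this diamond bookkeeping inside the \rn{I-Comm} case), and Lemmas~\ref{lem:coherpres}, \ref{pr} and \ref{l:p} to discharge the side conditions of \rn{TComm}. The only cosmetic difference is that the paper phrases the coinduction as being on the derivation of $\G \SLTS{\cml} \G'$ rather than on the typing derivation, but the guarded use of the coinductive hypothesis is identical.
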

\begin{proof}
By coinduction on the derivation of  $\G \SLTS{\cml} \G'$. We distinguish two cases according to the  axiom/rule  justifying $\G \SLTS{\cml} \G'$.

 {\em Axiom}  $\rn{E-Comm}$: then  $\G = \Sigma_{i\in I}\cml_i.\G_i$, 
 $\cml = \cml_j$ and $\G'=\G_j$ for some $j \in I$.  Since $\G \neq \End$, the last rule in the derivation of $\G \vdash \Nt$
	must be $\rn{TComm}$, which implies that 
	 $\cml=\Act{\pp}{\msg{\lambda}}{\q}$
	for some ${\pp}$, ${\msg{\lambda}}$ and ${\q}$ such that 
	
	\Cline{
	\Nt \equiv \pP{\pp}{\q!\msg{\lambda} .\PP \pplus  \PP'} \parN \pP{\q}{\pp?\msg{\lambda}.\PQ \pplus  \Q'} \parN \Nt_0 \SLTS{\Act{\pp}{\msg{\lambda}}{\q}} {\pP{\pp}{\PP}} \parN \pP{\q}{\Q} \parN \Nt_0\equiv\Nt' 
	}
	
	\noindent
	 for some $\Nt_0$, and $\tyn{\G'}{\Nt'}$.

{\em Rule}  $\rn{I-Comm}$: then $\G =  \Sigma_{i\in I}\cml_i.\G_i$  and $\G'=  \Sigma_{i\in I}\cml_i.\G'_i $ 
with $\G_i \SLTS{\cml}\G'_i$ and $\cml\in\cp{\G_i}$ and $\plays{\cml}\cap\plays{\cml_i}=\emptyset$ for all $i\in I$. 

 \noindent
	 Since the last rule in the derivation of $\G \vdash \Nt$
	must be $\rn{TComm}$,
	it follows that
	\begin{itemize}
	\item 
$\set{\cml_i}_{i\in I}\text{ is $\mathcal P$-coherent for } \Nt$;
      \item 	
$\Nt\SLTS{\cml_i}\Nt_i$ and $\tyn{\G_i}{\Nt_i}$, for all $ i\in I\neq\emptyset$;
      \item
$\plays{\Sigma_{i\in I}\cml_i.\G_i}=\plays\Nt$.
	\end{itemize}

\noindent
	By the coinduction hypothesis, we know that, for each $i\in I$, there exists $\Nt'_i$ such that
	
	\Cline{
	 \Nt_i \SLTS{\cml} \Nt'_i \qquad \mbox{and} \qquad \G'_i \vdash \Nt'_i
	}
	
\noindent
	Notice that, being the label $\cml$ the same for all these reductions, by Lemma \ref{lem:unique-redex} there exists a unique redex 
	
	\Cline{
	\pP{\pp}{\q!\msg{\lambda}.\PP\ \pplus\ \PP'} \parN \pP{\q}{\pp?\msg{\lambda}.{\Q}\ \pplus\ \Q'}
	} 
	
	\noindent
	with contractum $\pP{\pp}{\PP} \parN \pP{\q}{\Q}$  in all the $\Nt_i$, 
	 such  that
	$\cml =\Act{\pp}{\msg{\lambda}}{\q}$.
	On the other hand, since we know that 
	 $\plays{\cml}\cap\plays{\cml_i}=\emptyset$ for all $i\in I$, 
     it must be the case that   $\cml_i=\pr_i\msg{\lambda}_i\ps_i$ and  
    
     \Cline{\Nt_i' 
     \equiv \pP{\pr_i}{R_i} \parN \pP{\ps_i}{S_i} \parN \Nt_i''}
     
\noindent
	for  some $\pr_i$, $\ps_i$, $\PR_i$, $S_i$, $\Nt_i''$ and for all $i\in I$.
Hence, since  
	 $\Nt_i \SLTS{\Act{\pp}{\msg{\lambda}}{\q}} \Nt'_i$,  we have that, for each $i\in I$,
	
	\Cline{
	\begin{array}{c}
		\Nt \equiv {\pP{\pr_i}{\ps_i!\msg{\lambda}_i.R_i\ \pplus\ R'_i}} \parN \pP{\ps_i}{\pr?\msg{\lambda}_i.S_i\ \pplus\ S'_i} \parN \Nt_i'' 
		\SLTS{\Act{\pp}{\msg{\lambda}}{\q}} 
		{\pP{\pr_i}{\ps_i!\msg{\lambda}_i.R_i\ \pplus\ R'_i}} \parN \pP{\ps_i}{\pr?\msg{\lambda}_i.S_i\ \pplus\ S'_i} \parN \Nt_i''' \equiv \Nt'
	\end{array}
	}	
	
\noindent
 for some $\PR'_i$, $S'_i$ (if any), $\Nt_i'''$   and for all $i\in I$. 
	
By Lemma~\ref{l:p}, $\G_i'\vdash\pP{\pr_i}{R_i} \parN \pP{\ps_i}{S_i} \parN \Nt_i''$ implies  
$\plays{\G'_i}=\plays{\pP{\pr_i}{R_i} \parN \pP{\ps_i}{S_i} \parN  \Nt_i''}$  
 and then $\plays{\G'}=\plays{\Nt'}$. 
 Moreover, from $\plays{\cml}\cap\plays{\cml_i}=\emptyset$ for all $i\in I$, 
 we immediately get that $\cml\not\in\Set{\cml_i}_{i\in I}$. 
 So, by Lemma \ref{lem:coherpres} we get that $\set{\cml_i}_{i\in I}\text{ is $\mathcal P$-coherent for } \Nt'$.
	 We conclude that there exists  a  
	derivation ending by  the following application of  Rule $\rn{TComm}$
	
	\Cline{\qquad\qquad
\NamedCoRule{\rn{TComm}}
{\mbox{$\begin{array}{c}  \Nt' 
\SLTS{\cml_i} \Nt_i'\qquad\G_i'\vdash\Nt_i'\qquad\forall i\in I\neq\emptyset\\[0.5mm]
\set{\cml_i}_{i\in I}\text{ is $\mathcal P$-coherent for } \Nt' \qquad\plays{\G'}=\plays{\Nt'} \\[3mm]
\end{array}$}}
{\tyn{\G'}{\Nt'}}
{ }{}
\qquad\qquad\text{\qed}}	
\end{proof}

\medskip
Toward establishing the property that typable sessions are lock free, we first prove the following lemma. In words, 
if $\pp \in \plays{\G}$, then it must occur somewhere in its syntactic tree, hence there is a trace $\sigma\cdot \cml$ out of $\G$, consisting just
of external communications, which corresponds to  a 
path in the tree ending by  the first 
communication label $\cml$ involving $\pp$.

 \begin{lemma}\label{lem:G-participants} If $\pp \in \plays{\G}$, then there are  $\sigma$,  $\cml$ and $\G'$ such that $ \G  \SLTS{\;\sigma\cdot\cml\;}  \G'$, 
 $\pp \not\in \plays{\sigma}$   and $\pp \in \plays{\cml}$.
\end{lemma}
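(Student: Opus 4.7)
The plan is to proceed by induction on the minimum depth at which $\pp$ occurs in a top-level prefix of a subterm of $\G$. Since $\G$ is a regular term, it has only finitely many distinct subterms, and by Definition~\ref{cp} together with the definition of $\plays\G$, the set $\plays\G$ is precisely the union of $\plays\cml$ over all communication labels $\cml$ that occur as top-level prefixes of subterms of $\G$. Hence for every $\pp \in \plays\G$ there is a well-defined, finite minimum number $n$ of applications of Axiom \rn{E-comm} needed, starting from $\G$, to reach a subterm $\Sigma_{i \in I}\cml_i.\G_i$ with $\pp \in \plays{\cml_j}$ for some $j \in I$.

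In the base case $n = 0$, we can write $\G = \Sigma_{i \in I}\cml_i.\G_i$ with $\pp \in \plays{\cml_j}$ for some $j \in I$; then Axiom \rn{E-comm} gives $\G \SLTS{\cml_j} \G_j$, so that $\sigma = \ee$, $\cml = \cml_j$ and $\G' = \G_j$ satisfy the statement.

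In the inductive step $n > 0$, write $\G = \Sigma_{i \in I}\cml_i.\G_i$. By the minimality of $n$ we have $\pp \notin \plays{\cml_i}$ for every $i \in I$, whereas $\pp \in \plays\G = \bigcup_{i \in I}\plays{\cml_i} \cup \bigcup_{i \in I}\plays{\G_i}$ forces $\pp \in \plays{\G_k}$ for some $k \in I$; moreover the depth of $\pp$ in $\G_k$ is at most $n - 1$. The induction hypothesis applied to $\G_k$ yields $\sigma'$, $\cml$, $\G'$ with $\G_k \SLTS{\sigma' \cdot \cml} \G'$, $\pp \notin \plays{\sigma'}$ and $\pp \in \plays\cml$. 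Prepending the \rn{E-comm} step $\G \SLTS{\cml_k} \G_k$ and concatenating traces, we obtain $\G \SLTS{\cml_k \cdot \sigma' \cdot \cml} \G'$; since $\pp \notin \plays{\cml_k}$ and $\pp \notin \plays{\sigma'}$, the prefix $\sigma = \cml_k \cdot \sigma'$ contains no occurrence of $\pp$, while $\pp \in \plays\cml$, which is exactly what is required.

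The only delicate point is the well-foundedness of the induction on ``depth'', which I expect to be the main obstacle to phrase cleanly. It can be justified by viewing the finitely many distinct subterms of $\G$ (finite by regularity) as the nodes of a finite directed graph whose edges are the one-step \rn{E-comm} transitions, and defining the depth of $\pp$ in a subterm $\G^*$ to be the length of the shortest path in this graph from $\G^*$ to a node whose top-level summation contains $\pp$ in one of its prefix labels; this distance exists and is finite precisely because $\pp \in \plays{\G^*}$, and it strictly decreases when we pass from $\G$ to $\G_k$ in the inductive step.
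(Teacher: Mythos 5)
Your proof is correct, and it follows the same case split as the paper's: if $\pp$ occurs in a top-level label, apply \rn{E-comm} with $\sigma=\ee$; otherwise descend into a branch $\G_k$ with $\pp\in\plays{\G_k}$, obtain a trace for $\G_k$, and prepend the \rn{E-comm} step $\G\SLTS{\cml_k}\G_k$. The difference is the proof principle: the paper argues ``by coinduction on $\G$'' and applies a coinduction hypothesis to $\G_j$, whereas you run a well-founded induction on the minimal depth at which $\pp$ occurs in a top-level prefix, justified by regularity (finitely many distinct subterms, shortest path in the finite subterm graph). Your choice is arguably the more robust one here: the statement asserts the \emph{existence of a finite trace}, which is an inductive property, and a naive coinductive unfolding would not by itself guarantee that the witnessing label is ever reached; what grounds the paper's argument is precisely that $\pp\in\plays{\G}$ forces $\pp$ to occur at some finite depth, and your measure makes that grounding explicit, at the cost of the preliminary discussion of depth and regularity that the paper's terser coinductive phrasing avoids. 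One small wording repair: in the inductive step you should pick $k$ along a shortest path witnessing depth $n$, so that the depth of $\pp$ in $\G_k$ is exactly $n-1$; as written, an arbitrary $k$ with $\pp\in\plays{\G_k}$ need not have smaller depth, though this is exactly the choice your final well-foundedness remark intends.
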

\begin{proof} The proof is by coinduction on $\G$. 
Since $\pp \in \plays{\G}$ we have that $\G = \Sigma_{i\in I}\cml_i.\G_i$.	
Now, let us assume $\pp \in \bigcup_{i\in I}\plays{\cml_i}$.
Without loss of generality, we can also assume that $\pp\in\plays{\cml_j}$ for some
$j\in I$.
Then  we immediately have that $\G \SLTS{\;\cml_j\;} \G_j$
by  Axiom $\rn{E-Comm}$,  and the thesis trivially
follows by taking $\sigma=\varepsilon$.  
Otherwise, since $\pp \in \plays{\G} = \bigcup_{i\in I}\plays{\cml_i} \cup \bigcup_{i\in I}\plays{\G_i}$, 
we have that $\pp \not\in \bigcup_{i\in I}\plays{\cml_i}$ implies $\pp \in \plays{\G_j}$ for some $j\in I$. 
By  the  coinduction hypothesis, we have that there are  a  $\sigma'$ and a  $\cml$   
such that  $\G_j  \SLTS{\;\sigma'\cdot  \cml  \;}  \G'$, 
$\pp\not\in\plays{\sigma'}$ and $\pp\in\plays{\cml}$.
Then the thesis follows by setting  
$\sigma=\cml_j\cdot\sigma'$, since $\G\SLTS{\cml_j}\G_j$ by  Axiom  \rn{E-Comm} and $\G_j\SLTS{\sigma'\cdot\cml}\G'$. 
\qed
\end{proof}

\medskip
Observe that the last lemma is a sort of inverse implication w.r.t. Lemma \ref{lem:capabilities}, 
 since it shows that
the existence of a capability which is an actual communication of a global type $\G$ follows by the fact that one of the involved participants is in $\plays{\G}$.

We are now in place to prove that typable sessions are lock free.

\begin{theorem}[Lock Freedom]
 If \hspace{1pt}$\Nt$ is typable, then $\Nt$ is lock free.
\end{theorem}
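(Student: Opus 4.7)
The plan is to combine Subject Reduction, Session Fidelity, and Lemma~\ref{lem:G-participants} (the existence of a witnessing trace for every participant of a global type) in a standard chain: use types to transfer information from the session to a global type, find the needed continuation inside the global type, and then transfer it back to the session.

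Concretely, suppose $\tyn{\G}{\Nt}$ and $\Nt \SLTS{\sigma} \Nt'$ with $\sigma$ finite and $\pp \in \plays{\Nt'}$. First I would iterate Theorem~\ref{thm:SR} (Subject Reduction) along $\sigma$. A straightforward induction on $|\sigma|$ produces a global type $\G'$ such that $\G \SLTS{\sigma} \G'$ and $\tyn{\G'}{\Nt'}$. By Lemma~\ref{l:p} this gives $\plays{\G'} = \plays{\Nt'}$, so $\pp \in \plays{\G'}$.

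Next, I would invoke Lemma~\ref{lem:G-participants} on $\G'$: it yields a finite trace $\sigma'$, a label $\cml$, and some $\G''$ with $\G' \SLTS{\sigma' \cdot \cml} \G''$, $\pp \not\in \plays{\sigma'}$, and $\pp \in \plays{\cml}$. The finiteness of $\sigma'$ is essential here: it follows from the regularity of $\G'$ together with the fact that $\plays{\G'}$ is finite, so a first occurrence of $\pp$ must appear at finite depth in the syntactic tree. Having $\sigma' \cdot \cml$ finite, I then iterate Theorem~\ref{thm:SF} (Session Fidelity) along this trace to obtain $\Nt'' $ with $\Nt' \SLTS{\sigma' \cdot \cml} \Nt''$.

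Setting $\Lambda = \cml$, this is exactly what Definition~\ref{d:lf} of lock freedom requires: from $\Nt'$ there is a continuation $\sigma' \cdot \Lambda$ with $\pp \not\in \plays{\sigma'}$ and $\pp \in \plays{\Lambda}$. The main conceptual point — and the one step that would need the most care to verify in detail — is the iteration of Subject Reduction and Session Fidelity over finite traces, which is a routine induction but depends critically on having finiteness of the trace produced by Lemma~\ref{lem:G-participants}. The conditions ``$\plays{\Sigma_{i\in I}\cml_i.\G_i}=\plays\Nt$'' and $I \neq \emptyset$ in rule \rn{TComm} are what guarantee respectively that $\pp$ is preserved in the global type (via Lemma~\ref{l:p}) and that $\G'$ admits some transition at all, so the witness trace exists.
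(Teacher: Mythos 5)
Your proposal is correct and follows essentially the same route as the paper's proof: iterate Subject Reduction along $\sigma$, use Lemma~\ref{l:p} to get $\pp\in\plays{\G'}$, apply Lemma~\ref{lem:G-participants} to obtain the witnessing trace $\sigma'\cdot\cml$, and transfer it back to $\Nt'$ via Session Fidelity. The only difference is that you spell out the routine inductions over finite traces and the finiteness of $\sigma'$, which the paper leaves implicit.
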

\begin{proof}
Let $\tyn\G\Nt$.  Following Definition \ref{d:lf}, in order to prove Lock Freedom for $\Nt$, 
let $\Nt \SLTS{\sigma}\Nt'$ for a finite $\sigma$ and let $\pp\in\plays{\Nt'}$.
By  Subject Reduction (Theorem \ref{thm:SR}) we get   $\tyn{\G'}{\Nt'}$.
We can now recur to Lemma~\ref{l:p} and get $\pp \in  \plays{\G'}$.
From the fact that $\pp \in \plays{\G'}$ and by Lemma \ref{lem:G-participants}
it follows that $\G'  \SLTS{\sigma'\cdot\cml\;}  \G''$ for some  $\sigma'$ and  $\cml$ with  $\pp \not\in \plays{\sigma'}$ and $\pp \in \plays{\cml}$. Now the thesis follows
by Session Fidelity (Theorem \ref{thm:SF}). 
\qed
\end{proof}

\medskip

We conclude this section by showing that typability of a session does depend on the choice 
neither of the $\mathcal P$-coherent sets nor of the partition $\mathcal P$.

\begin{theorem}\label{tpp}
If  $\Nt$ is typable in $\vdash^{\mathcal P}$ and $\set{\cml_i}_{i\in I}$ is $\mathcal P$-coherent for $\Nt$, then there are $\G_i$ for $i \in I$ such that
$\tyn{\Sigma_{i\in I}\cml_i.\G_i}\Nt$. 
\end{theorem}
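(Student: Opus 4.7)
The plan is to use the existing typing of $\Nt$ only as a vehicle for invoking Subject Reduction on each reduct by $\cml_i$, and then to build the required derivation by a single application of Rule~\rn{TComm} with the given coherent set. In particular, the $\G_i$'s asked for by the statement are precisely those supplied by Subject Reduction for each $\Nt_i$.

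First, since $\Nt$ is typable in $\vdash^{\mathcal P}$, there exists $\G$ with $\tyn{\G}{\Nt}$. Because $\set{\cml_i}_{i\in I}$ is $\mathcal P$-coherent for $\Nt$, Definition~\ref{csrl} yields $\cml_i\in\LL{\Nt}$ for every $i\in I$, so there is $\Nt_i$ with $\Nt\SLTS{\cml_i}\Nt_i$. Applying Subject Reduction (Theorem~\ref{thm:SR}) to each such reduction starting from the typing $\tyn{\G}{\Nt}$ produces a global type $\G_i$ with $\tyn{\G_i}{\Nt_i}$. These will be the $\G_i$'s we return.

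It then remains to check the side conditions of Rule~\rn{TComm} on $\Sigma_{i\in I}\cml_i.\G_i$ and $\Nt$. Two of them, namely $\Nt\SLTS{\cml_i}\Nt_i$ with $\tyn{\G_i}{\Nt_i}$ and the $\mathcal P$-coherence of $\set{\cml_i}_{i\in I}$ for $\Nt$, are immediate. For the participant equality $\plays{\Sigma_{i\in I}\cml_i.\G_i}=\plays{\Nt}$, I would combine Lemma~\ref{l:p} (which gives $\plays{\G_i}=\plays{\Nt_i}$ from $\tyn{\G_i}{\Nt_i}$) with Lemma~\ref{pr} (which gives $\plays{\Nt}=\plays{\cml_i}\cup\plays{\Nt_i}$ from $\Nt\SLTS{\cml_i}\Nt_i$) to deduce $\plays{\cml_i}\cup\plays{\G_i}=\plays{\Nt}$ for each $i\in I$; unfolding the definition of $\plays$ on global types and taking the union over $I$ then yields the desired equality. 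A single application of Rule~\rn{TComm} concludes the proof.

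Because the premise $\tyn{\G_i}{\Nt_i}$ is already delivered by Subject Reduction, no coinductive unfolding is needed in this argument; it is a direct, finitary recombination. The only subtlety worth flagging is that the statement tacitly presupposes $I\neq\emptyset$, since $\Sigma_{i\in I}\cml_i.\G_i$ with $I=\emptyset$ is not a well-formed global type according to Definition~\ref{def:gt} and Rule~\rn{TComm} explicitly requires non-emptiness; the union-over-$I$ step in the participant equality also relies on this.
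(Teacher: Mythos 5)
Your proposal is correct and follows essentially the same route as the paper's proof: obtain the reductions $\Nt\SLTS{\cml_i}\Nt_i$ from coherence, apply Subject Reduction to get $\tyn{\G_i}{\Nt_i}$, establish $\plays{\Sigma_{i\in I}\cml_i.\G_i}=\plays{\Nt}$ via Lemmas~\ref{l:p} and~\ref{pr} together with the definition of participants of global types, and conclude with one application of Rule~\rn{TComm}. The extra remark on $I\neq\emptyset$ is a reasonable (if minor) observation not made explicit in the paper.
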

\begin{proof}   The $\mathcal P$-coherence of $\set{\cml_i}_{i\in I}$ for $\Nt$ gives 
$\Nt\SLTS{\cml_i}\Nt_i$, which implies 
by Subject Reduction (Theorem~\ref{thm:SR}) $\tyn{\G_i}{\Nt_i}$ for some $\G_i$ and all $i\in I$. 
By Lemma~\ref{l:p} $\plays{\G_i}=\plays{\Nt_i}$ for all $i\in I$. 
From $\Nt\SLTS{\cml_i}\Nt_i$ for all $i\in I$ we get  $\plays\Nt=\bigcup_{i\in I}\plays{\cml_i}\cup \bigcup_{i\in I}\plays{\Nt_i}$ by Lemma~\ref{pr}. 
By definition, $\plays{\Sigma_{i\in I}\cml_i.\G_i}=\bigcup_{i\in I}\plays{\cml_i}\cup \bigcup_{i\in I}\plays{\G_i}$.  We conclude $\plays{\Sigma_{i\in I}\cml_i.\G_i}=\plays{\Nt}$, so we can derive $\tyn{\Sigma_{i\in I}\cml_i.\G_i}\Nt$  using 
 Rule \rn{TComm}.  
\qed
\end{proof}

\begin{theorem}\label{ttpp} If $\Nt$ is typable in $\vdash^{\mathcal P}$ and it is ${\mathcal P'}$-modularisable, then $\Nt$ is typable in $\vdash^{\mathcal P'}$ too.
\end{theorem}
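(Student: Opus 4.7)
The plan is to proceed by coinduction on the derivation of $\tyn\G\Nt$ in $\vdash^{\mathcal P}$, mirroring the structure of the proof of Theorem~\ref{tpp} but switching from $\mathcal P$-coherent to $\mathcal P'$-coherent sets at each step. The additional hypothesis that $\Nt$ is $\mathcal P'$-modularisable plays a role only in guaranteeing that $\mathcal P'$-coherent sets for $\Nt$ (and for its reducts) are available.

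First, if the derivation ends by \rn{End}, then $\Nt \equiv \pP\pp\inact$ and I apply \rn{End} in $\vdash^{\mathcal P'}$ to obtain $\tyn\End\Nt$. Otherwise, the derivation ends by \rn{TComm}, which forces $\LL\Nt \neq \emptyset$. I pick any non-empty $\mathcal P'$-coherent set $\{\cml_i\}_{i\in I}$ for $\Nt$. Such a set exists because, taking any $\cml \in \LL\Nt$ and any participant $\pp \in \plays\cml$, the participant $\pp$ belongs to some module $\widehat\Nt$ of the (unique) $\mathcal P'$-modularisation of $\Nt$, and the set $\{\cml' \in \LL\Nt \mid \plays{\cml'} \cap \plays{\widehat\Nt} \neq \emptyset\}$ is $\mathcal P'$-coherent with witness $\widehat\Nt$ and contains $\cml$.

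For each $i \in I$, $\Nt \SLTS{\cml_i} \Nt_i$ for some $\Nt_i$. By Subject Reduction (Theorem~\ref{thm:SR}), $\Nt_i$ is typable in $\vdash^{\mathcal P}$, and by the lemma on preservation of $\mathcal{P}'$-modularisation under reduction, $\Nt_i$ is also $\mathcal P'$-modularisable. Hence the coinduction hypothesis applies and gives $\tyn{\G_i'}{\Nt_i}$ in $\vdash^{\mathcal P'}$ for some $\G_i'$, for every $i \in I$.

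Finally, I apply \rn{TComm} in $\vdash^{\mathcal P'}$ with the chosen set $\{\cml_i\}_{i\in I}$ and the types $\G_i'$. Its premises are: the reductions $\Nt \SLTS{\cml_i} \Nt_i$ with derivations $\tyn{\G_i'}{\Nt_i}$ (already established), the $\mathcal P'$-coherence of $\{\cml_i\}_{i\in I}$ for $\Nt$ (by choice), and the participant equality $\plays{\Sigma_{i\in I}\cml_i.\G_i'} = \plays\Nt$, which follows by Lemma~\ref{l:p} ($\plays{\G_i'} = \plays{\Nt_i}$) combined with Lemma~\ref{pr} ($\plays\Nt = \plays{\cml_i} \cup \plays{\Nt_i}$ for each $i$). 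This yields $\tyn{\Sigma_{i\in I}\cml_i.\G_i'}\Nt$ in $\vdash^{\mathcal P'}$, as required. The only genuinely delicate point is ensuring the existence of a non-empty $\mathcal P'$-coherent set for $\Nt$, since $\mathcal P'$-coherence imposes that all labels lie in $\LL\Nt$ and involve the chosen module $\widehat\Nt$; the argument sketched above anchors $\widehat\Nt$ to an actual reduction label, which is what makes the construction go through.
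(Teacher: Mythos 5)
Your proposal is correct, and it reaches the conclusion by a noticeably more direct route than the paper. The paper first invokes Theorem~\ref{tpp} to equip $\Nt$ with a family of $\vdash^{\mathcal P}$ derivations, one per block of $\mathcal P$, whose top-level labels jointly exhaust $\LL\Nt$; each label of the chosen $\mathcal P'$-coherent set is then matched with a premise of one of these derivations, and the argument proceeds by a \emph{simultaneous} coinduction on that whole family, so that every corecursive call lands on an actual premise. You instead maintain the invariant ``typable in $\vdash^{\mathcal P}$ and $\mathcal P'$-modularisable'': Subject Reduction (Theorem~\ref{thm:SR}) re-establishes $\vdash^{\mathcal P}$-typability of each reduct $\Nt_i$ along any label of the chosen $\mathcal P'$-coherent set, and the preservation-of-modularisability lemma re-establishes the second half, after which one guarded application of \rn{TComm} in $\vdash^{\mathcal P'}$ closes the step; the side conditions are checked exactly as in the paper (Lemmas~\ref{l:p} and~\ref{pr}). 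Since Theorem~\ref{tpp} is itself just Subject Reduction plus bookkeeping, the two arguments are equivalent in substance, but yours avoids the auxiliary family of types and the simultaneous coinduction, and it additionally makes explicit two points the paper leaves implicit: the existence of a \emph{non-empty} $\mathcal P'$-coherent set (anchored to a label in $\LL\Nt$, needed for $I\neq\emptyset$) and the \rn{End} case. The only caveat is presentational: calling it ``coinduction on the derivation of $\tyn\G\Nt$'' is a slight misnomer, since your corecursive call is not applied to a subderivation of the given one but to a fresh derivation produced by Subject Reduction; it should be phrased as a guarded corecursive construction of the $\vdash^{\mathcal P'}$ derivation with the theorem statement as invariant, which is legitimate in the Kozen--Silva style the paper adopts and is productive because each call is preceded by one \rn{TComm} rule.
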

\begin{proof} Since $\Nt$ is typable in $\vdash^{\mathcal P}$, then $\Nt$ is ${\mathcal P}$-modularisable by Definition~\ref{csrl}. 
Let $\mathcal P=\set{\pSet_k}_{k\in K}$ and $\set{\cml^{ k }_h}_{h\in{H_k}}=\set{\cml\in\LL\Nt\mid\plays\cml\cap\pSet_k\neq\emptyset}$.
By definition of partition we observe that $\bigcup_{k\in K}\bigcup_{h\in H_k}\cml^{ k }_h
=\LL\Nt$.  By Definitions~\ref{csrl} and~\ref{modularisation} for all $k\in K$ $\set{\cml^{ k }_h}_{h\in{H_k}}$ is ${\mathcal P}$-coherent for $\Nt$.
By Lemma~\ref{tpp} for each $k\in K$ there is a global type $\G_k=\Sigma_{h\in H_k}\cml^{ k }_h.\widehat\G_h$  which can be assigned to $\Nt$ in $\vdash^{\mathcal P}$,
that is $\tyn{\Sigma_{h\in H_k}\cml^{ k }_h.\widehat\G_h}{\Nt}$. 

\noindent
Let $\set{\cml_i}_{i\in I}$ be $\mathcal P'$-coherent for $\Nt$. 
We proceed now by coinduction simultaneously on the derivations 
$\tyn{\Sigma_{h\in H_k}\cml^{ k }_h.\widehat\G_h}{\Nt}$  for all $k\in K$. 
From the above, for each $i\in I$ there is $k_i\in K$ and $l_i\in H_{k_i}$
such that one of the premises of the conclusion
 $\tyn{\Sigma_{h\in H_{k_i}}\cml^{ k_i }_h.\widehat\G_h}{\Nt}$
is $\tyn{\widehat\G_{l_i}}{\Nt_i}$ where $\Nt\SLTS{\cml_i}\Nt_i$.  
By coinduction we get $\G''_i\vdash^{\mathcal P'}\Nt_i$ for some $\G''_i$ and for all $i\in I$. 
By Lemma~\ref{l:p} $\plays{\G''_i}=\plays{\Nt_i}$ for all $i\in I$.  We have $\plays{\Sigma_{i\in I}\cml_i.\G''_i}=\bigcup_{i\in I}\plays{\cml_i}\cup\bigcup_{i\in I}\plays{\G''_i}$ and  $\plays{\Nt}=\bigcup_{i\in I}\plays{\cml_i}\cup\bigcup_{i\in I}\plays{\Nt_i}$ by Lemma~\ref{pr}, so we get $\plays{\Sigma_{i\in I}\cml_i.\G''_i}=\plays{\Nt}$. Then we can use Rule \rn{TComm} to derive $\Sigma_{i\in I}\cml_i.\G''_i\vdash^{\mathcal P'}\Nt$.
\qed
\end{proof}

\section{Concluding Remarks, Related and Future Works}\label{rfw}

 In the setting of the {\em message-passing} communication model, it is possible to envisage 
mechanisms of interaction where a process can be, at the very same time, both a potential
sender and a potential receiver. In various frameworks  
for concurrent systems, like session types and 
communicating finite state machines, as well as the $\pi$-calculus, such mechanism is referred to as {\em mixed  choice}.
The flexibility and expressive power of mixed choice is understandably counterbalanced by a
difficult control of the behaviour of systems. 
That was arguably the motivation that mostly restrained the session type community from 
pursuing a thorough investigation of this sort of interactions. A stimulus in that direction has been instead recently given by some papers like~\cite{CMV22,PBK23,PBMK24,PY24b,PY24}. 
In particular,~\cite{CMV22,PY24b} investigate mixed choice 
for binary session types,  whereas~\cite{PY24} considers mixed choice in a MPST  setting~\cite{HYC08,Honda2016}  
following the approach of~\cite{ScalasY19}
(global types are in fact not taken into account in~\cite{PY24}). Even if the main concern  of~\cite{PY24}  is the expressivity of multiparty calculi 
(according to the full range of possible restrictions of mixed choice), 
type systems  assigning local types to processes are provided, 
where various predicates on contexts of local types are investigated.  
In~\cite{PBK23,PBMK24}  binary sessions with timeout and mixed choice are enriched with 
a semantics guaranteeing Progress and a type system enjoying Subject Reduction. 

Inspired by~\cite{PY24}, we carry on an investigation on the use of mixed choice
for synchronous communications in the  setting of SMPS~\cite{DGD22,BDL22}.
In SMPS, global types are inferred for sessions, i.e. parallel compositions of named processes, the latter being an abstraction for both processes and local types usually considered in MPST. 
Our processes can use now mixed choice. 
Subject Reduction, Session Fidelity,
as well as Lock Freedom are ensured for typable sessions.
The most relevant aspect of our type system is that we look at sessions as implicitly composed 
by modules whose participants freely interact via unrestricted mixed  choice, 
whereas the inter-module communications can be more easily controlled by 
 allowing communications with only one participant.  
Such an approach does not discard a priori any session. 
In fact all sessions
-- even those developed without any specific modularisation in mind -- can be modularised 
in a less or more refined way:
from a single  large module comprising the whole session, to a set of modules made by single
participants. In the former case the typing is less effective, since the global type would
result in a complete interaction tree, whereas in the latter case the typing coincides with the
standard  SMPS (with no mixed choice). 
In particular, our type system is conservative since, for processes without mixed choice,  
it coincides with the type system of~\cite{BDL24}, which is  at present 
one of the  most expressive. 
For mixed  choice 
there is only the type system of~\cite{PY24}, which is modularised by predicates on local types. An interesting property ensured by  that  
type system is {\em safety}. 
Safety in~\cite{PY24} entails that the protocol for process interactions
is such that, when  a participant 
intends to perform an interaction, it nondeterministically  chooses  
among all the participants that can interact with it. 
Then there is a nondeterministic decision concerning who has to play
the role of the sender and who  of  the receiver. 
At that point, the  sender performs an internal choice among the available outputs. 
Typing then guarantees that the possibility of interaction does
not depend on the chosen output.
We consider, instead, a simplified synchronisation protocol where 
there exists a nondeterministic choice among all the participants that can actually interact
 and all the possible communication interactions.
As mentioned in Section \ref{tc}, 
this approach makes ``!'' and ``?''
just two complementary synchronisation actions. It is however possible to modify our type system
in order to guarantee safety of sessions as defined in~\cite{PY24} by requiring that 
	
	\Cline{\pP\pp{\q!\msg{\lambda}.\PP\pplus\PP'}\in\Nt\text{ and }\pP\q{\pp?\msg{\lambda'}.\Q\pplus\Q'}\in\Nt\text{  imply }\pp\msg{\lambda}\q\in\LL\Nt}
	
\noindent	 The safety condition only ensures that an output finds the corresponding input if the receiver offers some inputs for the sender. An alternative condition is 

\Cline{\pP\pp{\q!\msg{\lambda}.\PP\pplus\PP'}\in\Nt\text{ implies }\Nt\SLTS{\sigma\cdot\pp\lambda\pq}  \text{ for some }\sigma}

\noindent With this last condition we would type less sessions, for example we would 
not type 

\Cline{\pP\pp{\q!\msg{\lambda} + \pr!\msg{\lambda'}}\parN\pP\pr{\pp?\msg{\lambda'}}}

\noindent
 and the election example. The feature of this alternative condition is that the choice between outputs is internal, in agreement with an asynchronous implementation. 
In such a case it is worth remarking that  a restriction of  the subtyping relation used in~\cite{PY24}
would be implicitly entailed by our  typing.

As first pointed out in~\cite{GHH21}, the  naive 
extension of the original type system~\cite{Honda2016} to sessions where input choices have different senders is unsound. In fact one can type sessions which reduce to untypable and stuck sessions.  Suitable conditions ensuring a sound extension have been proposed both for the synchronous~\cite{GHH21} and asynchronous~\cite{MMSZ21,CDG22} communications. 
Notably, the type system proposed here does not have this problem.

We notice that modular sessions can be obtained by connecting independent sessions via gateways,
according to the PaI approach to system composition.
When composing several typable SMPS systems (through {\em compatible} interfaces) one gets
a typable  system \cite{BDGY23}. 
Although the presence of mixed  choice 
does not seem to be a major obstacle, it is unlikely a result like the one of  \cite{BDGY23} could be easily translated in the present context.
In fact, as shown in Example \ref{ex:mconn} 
we allow the presence of multiple connectors per module.
This possibility is actually a severe impediment to the safeness of PaI composition. In fact,
let us take the following two typable sessions:

\Cline{
\pP\pp{\pq!\msg{\lambda}} \parN \pP\pq{\pp?\msg{\lambda}}\qquad\qquad \pP\pr{\ps?\msg{\lambda}} \parN \pP\ps{\pr!\msg{\lambda}}
}

\noindent
Taking all the participants as interfaces and considering that there exists 
a typable connection policy among them~\cite{BDGY23}, the PaI composition would result in the
following untypable $\Set{\Set{\pp,\pq}, \Set{\pr,\ps}}$-modular session

\Cline{
\pP\pp{\pr?\msg{\lambda}.\pq!\msg{\lambda}} \parN \pP\pq{\pp?\msg{\lambda}.\ps!\msg{\lambda}} \parN \pP\pr{\ps?\msg{\lambda}.\pp!\msg{\lambda}} \parN \pP\ps{\pq?\msg{\lambda}.\pr!\msg{\lambda}}
}

\noindent
where all the processes begin with an input, so forming a deadlock. This problem was overcome in \cite{GY23}, in a setting using projections and without mixed choice, by means of a suitable extension of the syntax of global types.

 It is worth noticing that mixed choice 
 make PaI composition problematic even by allowing one connector only. 
For example, by composing using $\pp$ and $\pt$ the following typable sessions

\Cline{\begin{array}{c}
\pP\pp{\pq!\msg{\lambda}_1+\pr?\msg{\lambda}_2} \parN \pP\pq{\pp?\msg{\lambda}_1+\ps?\msg{\lambda}_3}\parN \pP\pr{\pp!\msg{\lambda}_2+\ps!\msg{\lambda}_4}\parN \pP\ps{\pq!\msg{\lambda}_3+\pr?\msg{\lambda}_4}\\
\pP\pt{\pu?\msg{\lambda}_1+\pv!\msg{\lambda}_2} \parN \pP\pu{\pt!\msg{\lambda}_1+\pw!\msg{\lambda}_3}\parN \pP\pv{\pt?\msg{\lambda}_2+\pw?\msg{\lambda}_4}\parN \pP\pw{\pu?\msg{\lambda}_3+\pv!\msg{\lambda}_4}
\end{array}
}  

\noindent 
we get the session 

\Cline{\begin{array}{c}
\pP\pp{\pt?\msg{\lambda}_1.\pq!\msg{\lambda}_1+\pr?\msg{\lambda}_2.\pt!\msg{\lambda}_2} \parN \pP\pq{\pp?\msg{\lambda}_1+\ps?\msg{\lambda}_3}\parN \pP\pr{\pp!\msg{\lambda}_2+\ps!\msg{\lambda}_4}\parN \pP\ps{\pq!\msg{\lambda}_3+\pr?\msg{\lambda}_4}\parN\\
\pP\pt{\pu?\msg{\lambda}_1.\pp!\msg{\lambda}_1+\pp?\msg{\lambda}_2.\pv!\msg{\lambda}_2}\parN \pP\pu{\pt!\msg{\lambda}_1+\pw!\msg{\lambda}_3}\parN \pP\pv{\pt?\msg{\lambda}_2+\pw?\msg{\lambda}_4}\parN \pP\pw{\pu?\msg{\lambda}_3+\pv!\msg{\lambda}_4}
\end{array}
} 

\noindent
which reduces to the stuck session

\Cline{
\pP\pp{\pt!\msg{\lambda}_2} \parN \pP\pt{\pp!\msg{\lambda}_1}
} 

PaI composition  provides  
also an intuitive justification for the  shape  
of connectors 
in our modularisation. In fact, by composing systems via interfaces with unrestricted mixed choice, most of the
communication properties, if any, are not preserved,  as shown by  the  previous example. 

As future work we plan to investigate PaI 
composition for sessions with mixed  choice 
and asynchronous communication,  taking inspiration from~\cite{PBK23,PBMK24}, where asynchronous communication for sessions with mixed choice is first modelled.   
We deem worth investigating the modular approach to session types 
also for standard  MPST, 
as well as for the  recent approaches to global types and projections 
devised in~\cite{MMSZ21,LiSWZ23}.

{\bf Acknowledgments} We wish to gratefully thank the  anonymous reviewers for their thoughtful and helpful comments.

\bibliographystyle{eptcs}
\bibliography{session}

\begin{thebibliography}{10}
\providecommand{\bibitemdeclare}[2]{}
\providecommand{\surnamestart}{}
\providecommand{\surnameend}{}
\providecommand{\urlprefix}{Available at }
\providecommand{\url}[1]{\texttt{#1}}
\providecommand{\href}[2]{\texttt{#2}}
\providecommand{\urlalt}[2]{\href{#1}{#2}}
\providecommand{\doi}[1]{doi:\urlalt{http://dx.doi.org/#1}{#1}}
\providecommand{\bibinfo}[2]{#2}

\bibitemdeclare{article}{BBD25a}
\bibitem{BBD25a}
\bibinfo{author}{Franco \surnamestart Barbanera\surnameend},
  \bibinfo{author}{Viviana \surnamestart Bono\surnameend} \&
  \bibinfo{author}{Mariangiola \surnamestart Dezani-Ciancaglini\surnameend}
  (\bibinfo{year}{2025}): \emph{\bibinfo{title}{Open compliance in multiparty
  sessions with partial typing}}.
\newblock {\sl \bibinfo{journal}{Journal of Logical and Algebraic Methods in
  Programming}} \bibinfo{volume}{144}, p. \bibinfo{pages}{101046},
  \doi{10.1016/j.jlamp.2025.101046}.

\bibitemdeclare{article}{BBD25}
\bibitem{BBD25}
\bibinfo{author}{Franco \surnamestart Barbanera\surnameend},
  \bibinfo{author}{Viviana \surnamestart Bono\surnameend} \&
  \bibinfo{author}{Mariangiola \surnamestart Dezani{-}Ciancaglini\surnameend}
  (\bibinfo{year}{2025}): \emph{\bibinfo{title}{Partially typed multiparty
  sessions with internal delegation}}.
\newblock {\sl \bibinfo{journal}{Journal of Logical and Algebraic Methods in
  Programming}} \bibinfo{volume}{142}, p. \bibinfo{pages}{101018},
  \doi{10.1016/J.JLAMP.2024.101018}.

\bibitemdeclare{inproceedings}{BD23}
\bibitem{BD23}
\bibinfo{author}{Franco \surnamestart Barbanera\surnameend} \&
  \bibinfo{author}{Mariangiola \surnamestart Dezani-Ciancaglini\surnameend}
  (\bibinfo{year}{2023}): \emph{\bibinfo{title}{Partially typed multiparty
  sessions}}.
\newblock In \bibinfo{editor}{Cl\'ement \surnamestart Aubert\surnameend},
  \bibinfo{editor}{Cinzia \surnamestart Di~Giusto\surnameend},
  \bibinfo{editor}{Simon \surnamestart Fowler\surnameend} \&
  \bibinfo{editor}{Larisa \surnamestart Safina\surnameend}, editors: {\sl
  \bibinfo{booktitle}{ICE}}, {\sl \bibinfo{series}{EPTCS}}
  \bibinfo{volume}{383}, \bibinfo{publisher}{Open Publishing Association}, pp.
  \bibinfo{pages}{15--34}, \doi{10.4204/EPTCS.383.2}.

\bibitemdeclare{inproceedings}{BDL22}
\bibitem{BDL22}
\bibinfo{author}{Franco \surnamestart Barbanera\surnameend},
  \bibinfo{author}{Mariangiola \surnamestart Dezani-Ciancaglini\surnameend} \&
  \bibinfo{author}{Ugo \surnamestart de'Liguoro\surnameend}
  (\bibinfo{year}{2022}): \emph{\bibinfo{title}{Open compliance in multiparty
  sessions}}.
\newblock In \bibinfo{editor}{S.~Lizeth~Tapia \surnamestart Tarifa\surnameend}
  \& \bibinfo{editor}{Jos\'e \surnamestart Proen\c{c}a\surnameend}, editors:
  {\sl \bibinfo{booktitle}{FACS}}, {\sl \bibinfo{series}{LNCS}}
  \bibinfo{volume}{13712}, \bibinfo{publisher}{Springer}, pp.
  \bibinfo{pages}{222--243}, \doi{10.1007/978-3-031-20872-0\_13}.

\bibitemdeclare{inproceedings}{BDL23}
\bibitem{BDL23}
\bibinfo{author}{Franco \surnamestart Barbanera\surnameend},
  \bibinfo{author}{Mariangiola \surnamestart Dezani-Ciancaglini\surnameend} \&
  \bibinfo{author}{Ugo \surnamestart de'Liguoro\surnameend}
  (\bibinfo{year}{2024}): \emph{\bibinfo{title}{Partial typing for asynchronous
  multiparty sessions}}.
\newblock In \bibinfo{editor}{Sandra \surnamestart Alves\surnameend} \&
  \bibinfo{editor}{Ian \surnamestart Mackie\surnameend}, editors: {\sl
  \bibinfo{booktitle}{DCM}}, {\sl \bibinfo{series}{EPTCS}}
  \bibinfo{volume}{408}, \bibinfo{publisher}{Open Publishing Association}, pp.
  \bibinfo{pages}{1--20}, \doi{10.4204/EPTCS.408.1}.

\bibitemdeclare{inproceedings}{BDL24}
\bibitem{BDL24}
\bibinfo{author}{Franco \surnamestart Barbanera\surnameend},
  \bibinfo{author}{Mariangiola \surnamestart Dezani-Ciancaglini\surnameend} \&
  \bibinfo{author}{Ugo \surnamestart de'Liguoro\surnameend}
  (\bibinfo{year}{2024}): \emph{\bibinfo{title}{Un-projectable global types for
  multiparty sessions}}.
\newblock In \bibinfo{editor}{Alessandro \surnamestart Bruni\surnameend},
  \bibinfo{editor}{Alberto \surnamestart Momigliano\surnameend},
  \bibinfo{editor}{Matteo \surnamestart Pradella\surnameend} \&
  \bibinfo{editor}{Matteo \surnamestart Rossi\surnameend}, editors: {\sl
  \bibinfo{booktitle}{PPDP}}, \bibinfo{publisher}{ACM Press}, pp.
  \bibinfo{pages}{15:1--15:13}, \doi{10.1145/3678232.3678245}.

\bibitemdeclare{inproceedings}{BDGY23}
\bibitem{BDGY23}
\bibinfo{author}{Franco \surnamestart Barbanera\surnameend},
  \bibinfo{author}{Mariangiola \surnamestart Dezani{-}Ciancaglini\surnameend},
  \bibinfo{author}{Lorenzo \surnamestart Gheri\surnameend} \&
  \bibinfo{author}{Nobuko \surnamestart Yoshida\surnameend}
  (\bibinfo{year}{2023}): \emph{\bibinfo{title}{Multicompatibility for
  multiparty-session composition}}.
\newblock In \bibinfo{editor}{Santiago \surnamestart Escobar\surnameend} \&
  \bibinfo{editor}{Vasco~T. \surnamestart Vasconcelos\surnameend}, editors:
  {\sl \bibinfo{booktitle}{PPDP}}, \bibinfo{publisher}{{ACM} Press}, pp.
  \bibinfo{pages}{2:1--2:15}, \doi{10.1145/3610612.3610614}.

\bibitemdeclare{article}{Bouge88}
\bibitem{Bouge88}
\bibinfo{author}{Luc \surnamestart Boug\'{e}\surnameend}
  (\bibinfo{year}{1988}): \emph{\bibinfo{title}{On the existence of symmetric
  algorithms to find leaders in networks of communicating sequential
  processes}}.
\newblock {\sl \bibinfo{journal}{Acta Informaticae}}
  \bibinfo{volume}{25}(\bibinfo{number}{2}), p. \bibinfo{pages}{179–201},
  \doi{10.1007/BF00263584}.

\bibitemdeclare{article}{bz83}
\bibitem{bz83}
\bibinfo{author}{Daniel \surnamestart Brand\surnameend} \&
  \bibinfo{author}{Pitro \surnamestart Zafiropulo\surnameend}
  (\bibinfo{year}{1983}): \emph{\bibinfo{title}{On communicating finite-state
  machines}}.
\newblock {\sl \bibinfo{journal}{Journal of {ACM}}}
  \bibinfo{volume}{30}(\bibinfo{number}{2}), pp. \bibinfo{pages}{323--342},
  \doi{10.1145/322374.322380}.

\bibitemdeclare{article}{CMV22}
\bibitem{CMV22}
\bibinfo{author}{Filipe \surnamestart Casal\surnameend},
  \bibinfo{author}{Andreia \surnamestart Mordido\surnameend} \&
  \bibinfo{author}{Vasco~T. \surnamestart Vasconcelos\surnameend}
  (\bibinfo{year}{2022}): \emph{\bibinfo{title}{Mixed sessions}}.
\newblock {\sl \bibinfo{journal}{Theoretical Computer Science}}
  \bibinfo{volume}{897}, pp. \bibinfo{pages}{23--48},
  \doi{10.1016/J.TCS.2021.08.005}.

\bibitemdeclare{inproceedings}{CDG22}
\bibitem{CDG22}
\bibinfo{author}{Ilaria \surnamestart Castellani\surnameend},
  \bibinfo{author}{Mariangiola \surnamestart Dezani{-}Ciancaglini\surnameend}
  \& \bibinfo{author}{Paola \surnamestart Giannini\surnameend}
  (\bibinfo{year}{2022}): \emph{\bibinfo{title}{Asynchronous Sessions with
  Input Races}}.
\newblock In \bibinfo{editor}{Marco \surnamestart Carbone\surnameend} \&
  \bibinfo{editor}{Rumyana \surnamestart Neykova\surnameend}, editors: {\sl
  \bibinfo{booktitle}{PLACES}}, {\sl \bibinfo{series}{EPTCS}}
  \bibinfo{volume}{356}, \bibinfo{publisher}{Open Publishing Association}, pp.
  \bibinfo{pages}{12--23}, \doi{10.4204/EPTCS.356.2}.

\bibitemdeclare{article}{DGD22}
\bibitem{DGD22}
\bibinfo{author}{Francesco \surnamestart Dagnino\surnameend},
  \bibinfo{author}{Paola \surnamestart Giannini\surnameend} \&
  \bibinfo{author}{Mariangiola \surnamestart Dezani{-}Ciancaglini\surnameend}
  (\bibinfo{year}{2023}): \emph{\bibinfo{title}{Deconfined global types for
  asynchronous sessions}}.
\newblock {\sl \bibinfo{journal}{Logical Methods in Computer Science}}
  \bibinfo{volume}{19}(\bibinfo{number}{1}), pp. \bibinfo{pages}{1--41},
  \doi{10.46298/lmcs-19(1:3)2023}.

\bibitemdeclare{article}{GY23}
\bibitem{GY23}
\bibinfo{author}{Lorenzo \surnamestart Gheri\surnameend} \&
  \bibinfo{author}{Nobuko \surnamestart Yoshida\surnameend}
  (\bibinfo{year}{2023}): \emph{\bibinfo{title}{Hybrid multiparty session
  types: compositionality for protocol specification through endpoint
  projection}}.
\newblock {\sl \bibinfo{journal}{PACMPL}}
  \bibinfo{volume}{7~}(\bibinfo{number}{OOPSLA1}), pp.
  \bibinfo{pages}{112--142}, \doi{10.1145/3586031}.

\bibitemdeclare{inproceedings}{GHH21}
\bibitem{GHH21}
\bibinfo{author}{Rob~van \surnamestart Glabbeek\surnameend},
  \bibinfo{author}{Peter \surnamestart H{\"{o}}fner\surnameend} \&
  \bibinfo{author}{Ross \surnamestart Horne\surnameend} (\bibinfo{year}{2021}):
  \emph{\bibinfo{title}{Assuming just enough fairness to make session types
  complete for lock-freedom}}.
\newblock In \bibinfo{editor}{Leonid \surnamestart Libkin\surnameend}, editor:
  {\sl \bibinfo{booktitle}{{LICS}}}, \bibinfo{publisher}{{IEEE}}, pp.
  \bibinfo{pages}{1--13}, \doi{10.1109/LICS52264.2021.9470531}.

\bibitemdeclare{article}{GMY80}
\bibitem{GMY80}
\bibinfo{author}{Mohamed~G. \surnamestart Gouda\surnameend},
  \bibinfo{author}{Eric~G. \surnamestart Manning\surnameend} \&
  \bibinfo{author}{Yao{-}Tin \surnamestart Yu\surnameend}
  (\bibinfo{year}{1984}): \emph{\bibinfo{title}{On the progress of
  communication between two machines}}.
\newblock {\sl \bibinfo{journal}{Information and Control}}
  \bibinfo{volume}{63}(\bibinfo{number}{3}), pp. \bibinfo{pages}{200--2016},
  \doi{10.1016/S0019-9958(84)80014-5}.

\bibitemdeclare{inproceedings}{HondaK:typdyi}
\bibitem{HondaK:typdyi}
\bibinfo{author}{Kohei \surnamestart Honda\surnameend} (\bibinfo{year}{1993}):
  \emph{\bibinfo{title}{{Types for dyadic Interaction}}}.
\newblock In \bibinfo{editor}{Eike \surnamestart Best\surnameend}, editor: {\sl
  \bibinfo{booktitle}{CONCUR}}, {\sl \bibinfo{series}{LNCS}}
  \bibinfo{volume}{715}, \bibinfo{publisher}{Springer}, pp.
  \bibinfo{pages}{509--523}, \doi{10.1007/3-540-57208-2\_35}.

\bibitemdeclare{inproceedings}{HVK98}
\bibitem{HVK98}
\bibinfo{author}{Kohei \surnamestart Honda\surnameend},
  \bibinfo{author}{Vasco~T. \surnamestart Vasconcelos\surnameend} \&
  \bibinfo{author}{Makoto \surnamestart Kubo\surnameend}
  (\bibinfo{year}{1998}): \emph{\bibinfo{title}{Language primitives and type
  discipline for structured communication-based programming}}.
\newblock In \bibinfo{editor}{Chris \surnamestart Hankin\surnameend}, editor:
  {\sl \bibinfo{booktitle}{ESOP}}, {\sl \bibinfo{series}{LNCS}}
  \bibinfo{volume}{1381}, \bibinfo{publisher}{Springer}, pp.
  \bibinfo{pages}{122--138}, \doi{10.1007/BFb0053567}.

\bibitemdeclare{inproceedings}{HYC08}
\bibitem{HYC08}
\bibinfo{author}{Kohei \surnamestart Honda\surnameend}, \bibinfo{author}{Nobuko
  \surnamestart Yoshida\surnameend} \& \bibinfo{author}{Marco \surnamestart
  Carbone\surnameend} (\bibinfo{year}{2008}): \emph{\bibinfo{title}{Multiparty
  asynchronous session types}}.
\newblock In \bibinfo{editor}{George~C. \surnamestart Necula\surnameend} \&
  \bibinfo{editor}{Philip \surnamestart Wadler\surnameend}, editors: {\sl
  \bibinfo{booktitle}{POPL}}, \bibinfo{publisher}{ACM Press}, pp.
  \bibinfo{pages}{273--284}, \doi{10.1145/1328897.1328472}.

\bibitemdeclare{article}{Honda2016}
\bibitem{Honda2016}
\bibinfo{author}{Kohei \surnamestart Honda\surnameend}, \bibinfo{author}{Nobuko
  \surnamestart Yoshida\surnameend} \& \bibinfo{author}{Marco \surnamestart
  Carbone\surnameend} (\bibinfo{year}{2016}): \emph{\bibinfo{title}{Multiparty
  asynchronous session types}}.
\newblock {\sl \bibinfo{journal}{Journal of the ACM}}
  \bibinfo{volume}{63}(\bibinfo{number}{1}), pp. \bibinfo{pages}{9:1--9:67},
  \doi{10.1145/2827695}.

\bibitemdeclare{article}{KS10}
\bibitem{KS10}
\bibinfo{author}{Naoki \surnamestart Kobayashi\surnameend} \&
  \bibinfo{author}{Davide \surnamestart Sangiorgi\surnameend}
  (\bibinfo{year}{2010}): \emph{\bibinfo{title}{A hybrid type system for
  lock-freedom of mobile processes}}.
\newblock {\sl \bibinfo{journal}{{ACM} Transactions on Programming Languages
  and Systems}} \bibinfo{volume}{32}(\bibinfo{number}{5}), pp.
  \bibinfo{pages}{16:1--16:49}, \doi{10.1016/j.jlamp.2024.101018}.

\bibitemdeclare{article}{KozenS17}
\bibitem{KozenS17}
\bibinfo{author}{Dexter \surnamestart Kozen\surnameend} \&
  \bibinfo{author}{Alexandra \surnamestart Silva\surnameend}
  (\bibinfo{year}{2017}): \emph{\bibinfo{title}{{Practical coinduction}}}.
\newblock {\sl \bibinfo{journal}{Mathematical Structures in Computer Science}}
  \bibinfo{volume}{27}(\bibinfo{number}{7}), pp. \bibinfo{pages}{1132--1152},
  \doi{10.1017/S0960129515000493}.

\bibitemdeclare{inproceedings}{LNTY17}
\bibitem{LNTY17}
\bibinfo{author}{Julien \surnamestart Lange\surnameend},
  \bibinfo{author}{Nicholas \surnamestart Ng\surnameend},
  \bibinfo{author}{Bernardo \surnamestart Toninho\surnameend} \&
  \bibinfo{author}{Nobuko \surnamestart Yoshida\surnameend}
  (\bibinfo{year}{2017}): \emph{\bibinfo{title}{Fencing off go: liveness and
  safety for channel-based programming}}.
\newblock In \bibinfo{editor}{Giuseppe \surnamestart Castagna\surnameend} \&
  \bibinfo{editor}{Andrew~D. \surnamestart Gordon\surnameend}, editors: {\sl
  \bibinfo{booktitle}{POPL}}, \bibinfo{publisher}{{ACM} Press}, pp.
  \bibinfo{pages}{748--761}, \doi{10.1145/3009837.3009847}.

\bibitemdeclare{inproceedings}{LiSWZ23}
\bibitem{LiSWZ23}
\bibinfo{author}{Elaine \surnamestart Li\surnameend}, \bibinfo{author}{Felix
  \surnamestart Stutz\surnameend}, \bibinfo{author}{Thomas \surnamestart
  Wies\surnameend} \& \bibinfo{author}{Damien \surnamestart
  Zufferey\surnameend} (\bibinfo{year}{2023}): \emph{\bibinfo{title}{Complete
  multiparty session type projection with automata}}.
\newblock In \bibinfo{editor}{Constantin \surnamestart Enea\surnameend} \&
  \bibinfo{editor}{Akash \surnamestart Lal\surnameend}, editors: {\sl
  \bibinfo{booktitle}{CAV}}, {\sl \bibinfo{series}{LNCS}}
  \bibinfo{volume}{13966}, \bibinfo{publisher}{Springer}, pp.
  \bibinfo{pages}{350--373}, \doi{10.1007/978-3-031-37709-9\_17}.

\bibitemdeclare{inproceedings}{MMSZ21}
\bibitem{MMSZ21}
\bibinfo{author}{Rupak \surnamestart Majumdar\surnameend},
  \bibinfo{author}{Madhavan \surnamestart Mukund\surnameend},
  \bibinfo{author}{Felix \surnamestart Stutz\surnameend} \&
  \bibinfo{author}{Damien \surnamestart Zufferey\surnameend}
  (\bibinfo{year}{2021}): \emph{\bibinfo{title}{Generalising projection in
  asynchronous multiparty session types}}.
\newblock In \bibinfo{editor}{Serge \surnamestart Haddad\surnameend} \&
  \bibinfo{editor}{Daniele \surnamestart Varacca\surnameend}, editors: {\sl
  \bibinfo{booktitle}{CONCUR}}, {\sl \bibinfo{series}{LIPIcs}}
  \bibinfo{volume}{203}, \bibinfo{publisher}{Schloss Dagstuhl - Leibniz-Zentrum
  f{\"{u}}r Informatik}, pp. \bibinfo{pages}{35:1--35:24},
  \doi{10.4230/LIPIcs.CONCUR.2021.35}.

\bibitemdeclare{misc}{ottia}
\bibitem{ottia}
\bibinfo{howpublished}{\url{https://www.ottia.com/en/post/project-modularization}}.

\bibitemdeclare{inproceedings}{Padovani14}
\bibitem{Padovani14}
\bibinfo{author}{Luca \surnamestart Padovani\surnameend}
  (\bibinfo{year}{2014}): \emph{\bibinfo{title}{{Deadlock and lock freedom in
  the linear {\(\pi\)}-calculus}}}.
\newblock In \bibinfo{editor}{Thomas~A. \surnamestart Henzinger\surnameend} \&
  \bibinfo{editor}{Dale \surnamestart Miller\surnameend}, editors: {\sl
  \bibinfo{booktitle}{CSL-LICS}}, \bibinfo{publisher}{ACM Press}, pp.
  \bibinfo{pages}{72:1--72:10}, \doi{10.1145/2603088.2603116}.

\bibitemdeclare{article}{Palamidessi03}
\bibitem{Palamidessi03}
\bibinfo{author}{Catuscia \surnamestart Palamidessi\surnameend}
  (\bibinfo{year}{2003}): \emph{\bibinfo{title}{Comparing the expressive power
  of the synchronous and asynchronous pi-calculi}}.
\newblock {\sl \bibinfo{journal}{Mathematical Structures in Computer Science}}
  \bibinfo{volume}{13}(\bibinfo{number}{5}), pp. \bibinfo{pages}{685--719},
  \doi{10.1017/S0960129503004043}.

\bibitemdeclare{inproceedings}{PBK23}
\bibitem{PBK23}
\bibinfo{author}{Jonah \surnamestart Pears\surnameend}, \bibinfo{author}{Laura
  \surnamestart Bocchi\surnameend} \& \bibinfo{author}{Andy \surnamestart
  King\surnameend} (\bibinfo{year}{2023}): \emph{\bibinfo{title}{Safe
  Asynchronous Mixed-Choice for Timed Interactions}}.
\newblock In \bibinfo{editor}{Sung{-}Shik \surnamestart Jongmans\surnameend} \&
  \bibinfo{editor}{Ant{\'{o}}nia \surnamestart Lopes\surnameend}, editors: {\sl
  \bibinfo{booktitle}{COORDINATION}}, {\sl \bibinfo{series}{LNCS}}
  \bibinfo{volume}{13908}, \bibinfo{publisher}{Springer}, pp.
  \bibinfo{pages}{214--231}, \doi{10.1007/978-3-031-35361-1\_12}.

\bibitemdeclare{article}{PBMK24}
\bibitem{PBMK24}
\bibinfo{author}{Jonah \surnamestart Pears\surnameend}, \bibinfo{author}{Laura
  \surnamestart Bocchi\surnameend}, \bibinfo{author}{Maurizio \surnamestart
  Murgia\surnameend} \& \bibinfo{author}{Andy \surnamestart King\surnameend}
  (\bibinfo{year}{2024}): \emph{\bibinfo{title}{Introducing {TOAST:} Safe
  Asynchronous Mixed-Choice For Timed Interactions}}.
\newblock {\sl \bibinfo{journal}{CoRR}} \bibinfo{volume}{abs/2401.11197},
  \doi{10.48550/ARXIV.2401.11197}.

\bibitemdeclare{article}{PY24b}
\bibitem{PY24b}
\bibinfo{author}{Kirstin \surnamestart Peters\surnameend} \&
  \bibinfo{author}{Nobuko \surnamestart Yoshida\surnameend}
  (\bibinfo{year}{2024}): \emph{\bibinfo{title}{Mixed choice in session
  types}}.
\newblock {\sl \bibinfo{journal}{Information and Computation}}
  \bibinfo{volume}{298}, p. \bibinfo{pages}{105164},
  \doi{10.1016/J.IC.2024.105164}.

\bibitemdeclare{inproceedings}{PY24}
\bibitem{PY24}
\bibinfo{author}{Kirstin \surnamestart Peters\surnameend} \&
  \bibinfo{author}{Nobuko \surnamestart Yoshida\surnameend}
  (\bibinfo{year}{2024}): \emph{\bibinfo{title}{Separation and encodability in
  mixed choice multiparty sessions}}.
\newblock In \bibinfo{editor}{Pawel \surnamestart Sobocinski\surnameend},
  \bibinfo{editor}{Ugo~Dal \surnamestart Lago\surnameend} \&
  \bibinfo{editor}{Javier \surnamestart Esparza\surnameend}, editors: {\sl
  \bibinfo{booktitle}{LICS}}, \bibinfo{publisher}{{ACM} Press}, pp.
  \bibinfo{pages}{62:1--62:15}, \doi{10.1145/3661814.3662085}.

\bibitemdeclare{article}{ScalasY19}
\bibitem{ScalasY19}
\bibinfo{author}{Alceste \surnamestart Scalas\surnameend} \&
  \bibinfo{author}{Nobuko \surnamestart Yoshida\surnameend}
  (\bibinfo{year}{2019}): \emph{\bibinfo{title}{Less is more: multiparty
  session types revisited}}.
\newblock {\sl \bibinfo{journal}{{PACMPL}}}
  \bibinfo{volume}{3~}(\bibinfo{number}{{POPL}}), pp.
  \bibinfo{pages}{30:1--30:29}, \doi{10.1145/3290343}.

\bibitemdeclare{incollection}{Simon1991}
\bibitem{Simon1991}
\bibinfo{author}{Herbert~A. \surnamestart Simon\surnameend}
  (\bibinfo{year}{1991}): \emph{\bibinfo{title}{The architecture of
  complexity}}.
\newblock In: {\sl \bibinfo{booktitle}{Facets of Systems Science}},
  \bibinfo{publisher}{Springer}, pp. \bibinfo{pages}{457--476},
  \doi{10.1007/978-1-4899-0718-9\_31}.

\end{thebibliography}

\end{document}